\def\eqref#1{equation~\ref{#1}}
\def\1{\bm{1}}
\DeclareMathAlphabet{\mathsfit}{\encodingdefault}{\sfdefault}{m}{sl}
\SetMathAlphabet{\mathsfit}{bold}{\encodingdefault}{\sfdefault}{bx}{n}
\DeclareMathOperator*{\argmax}{arg\,max}
\DeclareMathOperator*{\argmin}{arg\,min}
\newtheorem{theorem}{Theorem}[section]
\newtheorem{lemma}[theorem]{Lemma}
\newtheorem{proposition}[theorem]{Proposition}
\crefname{appsec}{Appendix}{Appendices}
\providecommand{\customgenericname}{}
\newcommand{\newcustomtheorem}[2]{%
  \newenvironment{#1}[1]
  {%
   \renewcommand\customgenericname{#2}%
   \renewcommand\theinnercustomgeneric{##1}%
   \innercustomgeneric
  }
  {\endinnercustomgeneric}
}
\title{A Greedy PDE Router for Blending Neural Operators and Classical Methods}
\author{Sahana Rayan \\
Department of Statistics\\
University of Michigan\\
Ann Arbor, MI 48104, USA \\
\texttt{srayan@umich.edu} \\
\And
Yash Patel \\
Department of Statistics\\
University of Michigan\\
Ann Arbor, MI 48104, USA \\
\texttt{yppatel@umich.edu} \\
\And
Ambuj Tewari \\
Department of Statistics\\
University of Michigan\\
Ann Arbor, MI 48104, USA \\
\texttt{tewaria@umich.edu} \\
}
\begin{document}

\maketitle

\begin{abstract}
When solving PDEs, classical numerical solvers are often computationally expensive, while machine learning methods can suffer from spectral bias, failing to capture high-frequency components. Designing an optimal hybrid iterative solver--where, at each iteration, a solver is selected from an ensemble of solvers to leverage their complementary strengths--poses a challenging combinatorial problem. While greedy selection is desirable for its constant-factor approximation guarantee to the optimal solution under Lipschitz assumptions, it requires knowledge of the true error at each step, which is unavailable in practice. We address this by proposing an approximate greedy router that efficiently mimics a greedy approach to solver selection. Empirical results on the Poisson and convection–diffusion equations show that our method consistently reduces final error and area-under-the-curve (AUC) of the error trajectory relative to single-solver baselines and existing hybrid approaches such as HINTS. In particular, our method reaches comparable error levels in substantially fewer iterations while exhibiting more stable error decay.
\end{abstract}

\section{Introduction}

Natural phenomena and engineered systems are often governed by ordinary and partial differential equations (PDEs). Solving these equations enables a variety of tasks - predicting the evolution of the system through simulation (e.g., forecasting weather using the Navier-Stokes equations) \citep{kalnay2003atmospheric,bauer2015quiet,staniforth1991semi}, addressing control problems (e.g., optimizing heat shield design for spacecrafts using heat transfer equations) \citep{anderson1989hypersonic,troltzsch2010optimal}, and tackling inverse problems based on external measurements (e.g., reconstructing brain activity from EEG data using electrophysiological models) \citep{baillet2001electromagnetic,grech2008review}. 

Traditionally, PDEs are solved using finite difference methods \citep{smith1985numerical,leveque2007finite}, which discretize the spatial and/or temporal domain and approximate the solution by solving a system of equations at the discrete grid points. Similarly, finite element methods \citep{hughes2003finite,bathe2006finite,brenner2008mathematical} construct a system of equations based on local basis functions, and then rely on numerical linear algebra techniques--such as the Jacobi and Gauss-Seidel method \citep{saad2003iterative}--to compute the solution. These iterative methods, however, lack generalization across different initial conditions, boundary conditions, or forcing functions, as even minor changes to any of these parameters require re-solving the entire system of equations from scratch. 

These challenges have motivated the development of neural operators \citep{kovachki2023neural}--a class of machine learning models that aim to learn the solution operator directly, enabling fast inference across a range of parameters. Neural operators lift the classical linear layer in neural networks to an operator--typically a kernel integral operator--between function spaces. Despite strong approximation guarantees \citep{chen1995universal}, neural operators are prone to spectral bias \citep{liu2021multiscale,liu2024mitigating,you2024mscalefno,khodakarami2025mitigating,xu2025understanding}, similar to traditional neural networks \citep{rahaman2019spectral,xu2019frequency,luo2019theory}. As a result, they favor learning low-frequency components of the solution function and often struggle to capture high-frequency features that iterative solvers excel at. 

Recognizing this complementarity, \citet{zhang2024blending} proposed HINTS, a hybrid solver that interleaves a neural operator pass every $\tau^{\text{th}}$ iteration of a classical solver, with $\tau$ fixed a priori. While HINTS reports significant empirical gains in convergence and accuracy over the standalone neural operator and Jacobi solver, this fixed schedule can be detrimental: a poorly timed neural correction may increase the error and undo recent progress. 

Although state-dependent solver schedules are desirable, this naturally suggest a reinforcement learning formulation for learning routing policies. However, in our setting, the structure of iterative PDE solvers allows for a simpler approach: each solver induces a known update, and its effect can be evaluated through error-based signals. Leveraging this, we adopt a greedy rule that selects, at each iteration, the solver with the largest immediate error reduction. Such a rule can be near-optimal when the final error satisfies (weak) supermodularity--so that applying a beneficial solver earlier is at least as valuable as applying it later. To support this approach, we make the following contributions:
\begin{itemize}
    \item We present a general hybrid PDE solver in which a routing rule chooses, at each iteration, from a set of classical solvers and neural operators. With oracle access to the true error at every step, we show that a greedy routing rule
    achieves a constant-factor approximation to the optimal strategy for linear PDEs, provided the updates are error-reducing (Lipschitz with constant $<1$) and zero-preserving--conditions under which weak supermodularity holds.
    \item  Given that the true error is not observed at test time, a convex surrogate loss is introduced, which, when minimized, enables the learned model to imitate the greedy router without access to true error information. This alignment is guaranteed through Bayes consistency.
    \item We empirically demonstrate that our approximate greedy solver achieves faster convergence than HINTS and individual solvers, often reaching comparable error levels in fewer iterations on the Poisson and convection-diffusion equations.
\end{itemize}

\section{Background} \label{sec:background}

Consider the following linear PDE:
\begin{equation} \label{eq:mainpde}
    \begin{split}
    \mathcal{L}_{x}^a\left(u\right) = f ,x \in D; \quad
    \mathcal{B}_{x}\left(u\right) = b ,x \in \partial D
    \end{split}
\end{equation}
where $\mathcal{L}_x^a$ is a differential operator with respect to the spatial variable $x \in D$ parameterized by the coefficient function $a$,  $f$ is a forcing function,  $u$ is the PDE solution, $\mathcal{B}_{x}$ is the boundary operator, and $b$ is the boundary term. Such PDEs can be solved numerically using various discretization strategies, such as finite differences \citep{smith1985numerical,leveque2007finite}, finite element, and spectral methods \citep{boyd2001chebyshev,canuto2006spectral}. Here, we focus on finite differences, which replace derivatives with difference quotients (e.x., $\partial_x u(x) \approx (u(x + h) - u(x))/h$) on a uniform grid.

Formally, let $h$ be the grid size and $G_h(D)$ be a uniform grid with spacing $h$ over $D$. For a function $g: D \to \mathbb{R}$, we denote its restriction to $G_h(D)$ by $g_h \in \mathbb{R}^N$,
where $N = |G_h(D)|$. If $\mathcal{L}_h^a \in \mathbb{R}^{N \times N}$ is the discretized operator, the PDE reduces to a linear system of equations
\begin{equation} \label{eq:mainpdediscrete}
    \mathcal{L}_h^au_h = f_h,
\end{equation}
with the boundary conditions incorporated in $\mathcal{L}_h^a$. Unlike much of the neural operator literature, which retains function space formulations for discretization invariance, we follow the conventions of classical numerical analysis and represent functions and operators as vectors and matrices. Discretization invariance is not essential here, since we focus on fixed-grid problems.

\subsection{Iterative PDE Solvers} \label{sec:finitedifference}
Direct methods like Gauss Elimination and Thomas Algorithm can be computationally expensive in high-dimensional domains when solving systems like \Cref{eq:mainpdediscrete}. In contrast, iterative methods like Jacobi and Gauss-Seidel offer computational speedups by iteratively updating the solution, gradually converging to the true solution for the PDE. The general iterative update is
\begin{equation}\label{eq:iterativeupdate}
    u^{(t + 1)} = u^{(t)} + C\left(f_h - \mathcal{L}_h^au^{(t)}\right),
\end{equation}
where $u^{(t)}$ is the $t^{\text{th}}$ iterate of the solution and $C$ is a preconditioning matrix. Jacobi uses $C = D^{-1}$, where $D$ is the diagonal of $\mathcal{L}_h^a$, and Gauss-Seidel uses $C = (D + L)^{-1}$, where $L$ denotes the strict lower triangular part. 
However, these methods tend to damp low frequency parts of the error slowly.

\subsection{Neural Operators} \label{sec:neuraloperators}

Suppose we observe samples $\{(a_i, f_i, u_i)\}_{i = 1}^N$ such that $(a_i, f_i) \sim \mathcal{P}$ are i.i.d. samples, and $u_i = \mathcal{G}^{*}(a_i, f_i)$ be generated by a deterministic solution operator $\mathcal{G}^{*}$. 
A neural operator (NO) $\mathcal{G}_{\theta}$ seeks to approximate $\mathcal{G}^*$ by minimizing the expected squared $L^2(D)$ error:
\begin{equation*} 
    \mathcal{R}_{\text{NO}}(\mathcal{G}_{\theta})  = \mathbb{E}_{(a, f) \sim  \mathcal{P}}\left[ \int_{D} \left\|u_i(x) - \mathcal{G}_{\theta}(a_i, f_i)(x)\right\|_2^2dx\right]
\end{equation*}
Neural operators use discretization-invariant layers. Prominent instantiations include Fourier Neural Operators (FNO), which use spectral transforms \citep{li2020fourier}, Graph Neural Operators, which perform graph-based aggregation over sampled points \citep{li2020multipole}, and DeepONets, which employ a trunk–branch decomposition to map input functions to output functions \citep{lu2021learning}.

\subsection{Greedy Optimization} \label{sec:bggreedy}

Consider maximizing $g(S)$ over $S\subseteq \Omega$ with $|S| \leq T$ where $g: 2^{\Omega} \to \mathbb{R}$ is a set function defined on subsets of a set $\Omega$. An exhaustive search is infeasible, so an efficient alternative is the greedy algorithm. It builds the solution subset iteratively by adding $\omega^*$ that yields the largest marginal gain, i.e. $\omega^* =  \argmax_{\omega \in \Omega \setminus S} g(S \cup \omega)$. When $g$ is non-negative, monotone (adding elements never decreases the value), and submodular (diminishing returns), the greedy algorithm achieves a $(1 - e^{-1})$ approximation to the optimal solution \citep{nemhauser1978analysis}. Formally, a function $g$ is submodular if $g(A \ \cup \{\omega\}) - g(A) \geq g(B \ \cup \{\omega\}) - g(B)$ for all $A \subseteq B \subseteq \Omega$ and $\omega \in \Omega \setminus B$; intuitively, delaying an addition cannot increase its benefit. For set function minimization problems, supermodularity, where the inequality flips, plays an analogous role, and the greedy rule achieves constant-factor approximation guarantees \citep{liberty2017greedy}.

The suboptimality of the greedy algorithm has been extensively studied for both set-function minimization \citep{bounia2023approximating}  and maximization \citep{das2018approximate,feige2011maximizing,bian2017guarantees,harshaw2019submodular} when standard assumptions--non-negativity, monotonicity, and sub/supermodularity--are weakened. In contrast, results on greedy sequence maximization \citep{streeter2008online,alaei2021maximizing,zhang2015string,bernardini2020through,van2024performance,tschiatschek2017selecting}--where the ordering of the elements affects the function--have led to sequential analogues of submodularity and monotonicity. We leverage these tools to analyze the suboptimality of the greedy solution to minimizing the final error.

\section{General framework for Hybrid Solvers}

To solve \Cref{eq:mainpdediscrete}, consider the following hybrid iterative update:
\begin{equation}
     u^{\left(t + 1\right)}_h = u^{\left(t\right)}_h + C_{S_t}\left(f_h - \mathcal{L}_h^au^{\left(t \right)}_h\right)
\end{equation}
where $\mathcal{C} =\{C_{j}\}_{j = 1}^K$ is a set of preconditioning functions and $S_t \in [K] = \{1, \dots, K\}$ indexes the function chosen at step $t$. Here, we use ``preconditioning function'' broadly to include classical linear updates ($C_j(x) = C_jx$ where $C_j$ is a matrix), and learned models, such as neural operators. This update generalizes the classical iterative update in \Cref{eq:iterativeupdate}, by allowing $C_j$ to be non-linear and enabling the preconditioning function to be adaptively chosen at every step. $\mathcal{C}$ can also accommodate parameterized solver families (e.g., different Jacobi relaxation weights), enabling adaptive selection of solver parameters.  As the number of solvers $K$ grows, the chance of selecting a more effective update increases. Furthermore, HINTS \citep{zhang2024blending} is a special case of this hybrid solver with $K = 2$, where  $C_1$ is a neural operator and $C_2$ is a classical preconditioner. Its routing rule is given by $S_t = \1_{t \bmod \tau > 0} + 1$, which selects $C_1$ every $\tau$ steps and $C_2$ otherwise.

Let $e^{(t)}_h = u_h - u^{(t)}_h$ denote the error at step $t$. Then, 
\begin{align*}
    e^{(t + 1)}_h = u_h -  u^{(t)}_h -C_{S_t}\left(\mathcal{L}_h^au_h - \mathcal{L}_h^au^{\left(t\right)}_h\right) = \left(I- C_{S_t} \circ \mathcal{L}_h^a\right)(e^{(t)}_h)
\end{align*}
where $I$ is the identity map and $I - C_{j} \circ\mathcal{L}_h^a$ is the error propagation function for solver $j$. Here, ``$\circ$'' denotes function composition (matrix multiplication for linear updates). The objective is to select a sequence $S = (S_1, \dots, S_T)$ that minimizes the error norm after $T$ steps or $\|e^{(T)}_h\|_2^2$:
\begin{equation} \label{eq:errorobjective}
    \min_{S_t \in [K], |S| \leq T} h(S) := \left\|\left(I - C_{S_{|S|}} \circ \mathcal{L}_h^a\right) \circ \dots \circ \left(I - C_{S_1} \circ \mathcal{L}_h^a\right)  \left(e^{(0)}_h \right)\right\|^2_2
\end{equation}
with compositions applied from right to left, so that the $S_1$ update acts on the initial error.

\section{Greedy Algorithm} \label{sec:greedy}

\Cref{eq:errorobjective} defines a combinatorial optimization problem with a search space exponential in $T$, making exact optimization intractable for large $T$ or $K$. As a starting point, we consider an ``omniscient'' greedy algorithm that assumes access to the true initial error $e^{(0)}_h$. This is unrealistic since knowledge of $e^{(0)}_h$ could directly recover the solution via $u_h = u^{(0)}_h + e^{(0)}_h$, but it provides a useful benchmark.
In \Cref{sec:approxgreedy}, we relax this assumption using a practical learning strategy that is Bayes consistent with this omniscient rule, thereby recovering the guarantees shown below.

As discussed in \Cref{sec:bggreedy}, when supermodularity and monotonicity hold, the greedy rule--such as the one described in \Cref{alg:greedy}--enjoys constant-factor approximation guarantees. However, classical results focus on \textit{set} functions, with only recent extensions made to sequences. Building on this line, we introduce a sequence-based notion of weak supermodularity.

\begin{algorithm} 
\caption{Greedy Algorithm for a Hybrid PDE solver}\label{alg:greedy}
\begin{algorithmic}
\Require $\{C_{j}\}_{j = 1}^{K}, T, \mathcal{L}_h^a, e^{(0)}_h$
\State $S^{0} \gets \emptyset$
\For{$t < T$}
    \State $S^{t + 1} \gets S^{t} \ \oplus \argmin_{j \in [K]}\|( I - C_{j} \circ \mathcal{L}_{h}^a)(e^{(t)}_h)\|^2_2$
    \State $e^{(t + 1)}_h \gets (I - C_{S_{t+ 1}} \circ \mathcal{L}_{h}^a) (e^{(t)}_h)$
\EndFor
\State \Return $S^{T}$
\end{algorithmic}
\end{algorithm}

Let $\Omega^*$ denote the space of sequences with elements in $\Omega$. For $S, S' \in \Omega^*$, we denote their concatenation as $S \oplus S'$. $S$ is a prefix of $S'$ or $S \preceq S'$ if $S' = S \ \oplus L$ for some $L \in \Omega^*$. A function $g: \Omega^* \to \mathbb{R}$ is prefix monotonically non-increasing if $g(S \ \oplus S') \leq g(S)$ for all $S, S' \in \Omega^*$, and postfix monotonically non-increasing if $g(S' \ \oplus S) \leq g(S)$ for all $S, S' \in \Omega^*$. A prefix non-increasing function $g$ is sequence supermodular if, for all $S', S \in \Omega^*: S \preceq S'$, it holds that
\begin{equation}\label{eqn:supermod}
    g(S) - g(S \ \oplus \omega ) \geq g(S') - g(S' \ \oplus \omega ), \quad \forall \omega \in \Omega
\end{equation}

However, $h(S)$ (defined in \Cref{eq:errorobjective}) may not satisfy this property in general. Therefore, we introduce weak sequence supermodularity. A prefix non-increasing function $g$ is weakly supermodular with respect to $S' \in \Omega^*$ if, for any $S \in \Omega^{|S'|}$, there exists $\alpha(S') \geq 1$ such that
\begin{equation}\label{eqn:alpha_supermod}
     g(S) - g(S\ \oplus S') \leq \alpha(S') \sum_{i \in [|S'|]} g(S) - g(S \ \oplus S'_i)
\end{equation}
The parameter $\alpha(S')$ or the supermodularity ratio quantifies deviation from exact sequence supermodularity. Expanding the $g(S) - g(S\ \oplus S')$ as a telescoping sum $\sum_{i = 1}^{|S'|} g(S \ \oplus \left(S'_1, \dots, S'_{i - 1}\right)) - g(S \ \oplus \left(S'_1, \dots, S'_{i}\right))$ shows that the marginal decrease from appending $S_i'$ after its predecessors is controlled by the effect of appending $S_i'$ directly to $S$. Thus, postponing the inclusion of $S_i'$ cannot yield a significantly larger benefit than adding it earlier. The supermodularity ratio $\alpha(S)$ quantifies the extent to which future gains from delays may exceed immediate gains. 

Having introduced these notions, we now characterize the suboptimality of greedy solutions of weakly supermodular and 
postfix monotonic sequence functions in \Cref{th:suboptgreedy}.
\begin{theorem} \label{th:suboptgreedy}
    Let $g: \Omega^* \to \mathbb{R}$ be a weakly supermodular function with respect to the optimal solution $O = \argmin_{S \in \Omega^T} h(S)$ with a supermodularity ratio of $\alpha(O)$ and 
    postfix monotonicity. Let the greedy solution of length $T$ be $S^T$. If $\phi_T(\alpha) =  \left(1 - \frac{1}{\alpha T}\right)^T$, then
    \begin{equation*}
        g(\emptyset) - g(S^T)\geq \left(1 - \phi_T(\alpha(O))\right)\left(g(\emptyset) - g(O)\right)
    \end{equation*}
\end{theorem}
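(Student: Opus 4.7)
The plan is to adapt the classical Nemhauser--Wolsey--Fisher argument for submodular maximization to the present minimization setting, using weak sequence supermodularity and postfix monotonicity. I will track the residual gap $\Delta_t := g(S^t) - g(O)$ along the greedy trajectory and aim for the recursion $\Delta_{t+1} \leq (1 - 1/(\alpha(O)\, T))\,\Delta_t$, which unrolls directly to the claimed bound.

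The first step would exploit the greedy rule. Since $S^{t+1}$ minimizes $g(S^t \oplus \omega)$ over $\omega \in \Omega$ and every coordinate $O_i$ of the optimum lies in $\Omega$, we have $g(S^{t+1}) \leq g(S^t \oplus O_i)$ for each $i \in [T]$. Averaging over $i$ produces
\[ T\bigl(g(S^t) - g(S^{t+1})\bigr) \;\geq\; \sum_{i=1}^{T}\bigl(g(S^t) - g(S^t \oplus O_i)\bigr). \]

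Next I would invoke weak sequence supermodularity with respect to $O$, applied at $S^t$, to lower-bound the right-hand sum by $\tfrac{1}{\alpha(O)}\bigl(g(S^t) - g(S^t \oplus O)\bigr)$. Postfix monotonicity then gives $g(S^t \oplus O) \leq g(O)$, so $g(S^t) - g(S^t \oplus O) \geq \Delta_t$. Chaining these inequalities produces $T(\Delta_t - \Delta_{t+1}) \geq \Delta_t / \alpha(O)$, i.e.\ $\Delta_{t+1} \leq (1 - 1/(\alpha(O)\, T))\,\Delta_t$. Iterating $T$ times from $\Delta_0 = g(\emptyset) - g(O)$ and rearranging $g(S^T) = \Delta_T + g(O)$ yields the theorem with $\phi_T(\alpha(O)) = (1 - 1/(\alpha(O)\,T))^T$.

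The main subtlety I anticipate is the length-matching requirement in the weak supermodularity definition: as stated, it quantifies over $S \in \Omega^{|S'|}$, whereas the partial greedy iterates $S^t$ have length $t < T$. The telescoping discussion in the excerpt suggests the inequality is meant to hold for arbitrary $S \in \Omega^*$, which is precisely what the step-by-step recursion requires. I would either lean on this broader reading or verify the inequality on the greedy prefixes as a separate observation; once that is pinned down, the remainder reduces to a routine telescoping plus geometric unrolling, with no additional nonlinear or optimization machinery needed.
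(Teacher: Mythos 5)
Your proposal is correct and follows essentially the same route as the paper's proof: postfix monotonicity to bound $g(S^t)-g(O)$ by $g(S^t)-g(S^t\oplus O)$, weak supermodularity to control this by the single-step marginal gains, the greedy choice to bound each marginal gain by $g(S^t)-g(S^{t+1})$, and then the resulting per-step recursion unrolled geometrically. The length-matching subtlety you flag is handled the same way in the paper, whose proof of Proposition 4.2 establishes the weak-supermodularity inequality for prefixes of arbitrary length, so no separate argument is needed.
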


The proof of \Cref{th:suboptgreedy} appears in \Cref{sec:proofsuboptgreedy}. As $T \to \infty$, the factor $1 - \phi_T(\alpha(O))$ decreases to $1 - e^{-1/\alpha(O)}$, so the worst-case performance of the greedy rule saturates rather than degrades with horizon. Additionally, larger $\alpha(O)$, which indicates higher reward for delayed inclusions, loosens the suboptimality bound. \Cref{th:suboptgreedy} requires that the sequence objective $h$ be weakly supermodular with respect to the optimal solution and postfix monotone--properties established in Proposition \ref{th:weaklyalphasupermodular}. We defer the proof to \Cref{sec:proofweaklyalphasupermodular}.

\begin{proposition} \label{th:weaklyalphasupermodular}
    Suppose that for all $j \in [K]$, the error propagation function $I - C_j \circ \mathcal{L}_h^a$ is $\rho_j$-Lipschitz continuous with $\rho_j < 1$, and that $(I - C_j \circ \mathcal{L}_h^a) (0_N) = 0_N$. Then, the function $h$ is weakly supermodular with respect to the optimal solution $O$, with 
    \begin{equation*}
        \alpha(O) = \max\left\{\frac{4}{T - \sum_{i = 1}^{T} \rho_{O_i}^2} , 1\right\}
    \end{equation*}
    Furthermore, if $I - C_j \circ \mathcal{L}_h^a$ is invertible for all $j \in [K]$, $h$ is also postfix non-increasing.
\end{proposition}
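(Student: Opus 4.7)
The plan is to prove the two claims in sequence, reusing a common setup. Writing $F_j := I - C_j \circ \mathcal{L}_h^a$ and, for a sequence $S = (S_1, \dots, S_n) \in \Omega^*$, $F_S := F_{S_n} \circ \cdots \circ F_{S_1}$, the hypotheses $F_j(0_N) = 0_N$ and $\rho_j$-Lipschitz continuity immediately yield the pointwise contraction $\|F_j(x)\|_2 \le \rho_j \|x\|_2$, and hence $\|F_S(x)\|_2 \le \left(\prod_i \rho_{S_i}\right)\|x\|_2$. This is the main quantitative lever in both parts.

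For weak supermodularity with respect to $O$, I fix an arbitrary $S \in \Omega^T$ and let $e^S := F_S(e^{(0)}_h)$, so that $h(S) = \|e^S\|_2^2$. The key reduction is that every quantity appearing in \Cref{eqn:alpha_supermod} depends only on $e^S$: the per-element marginal is $h(S) - h(S \oplus O_i) = \|e^S\|_2^2 - \|F_{O_i}(e^S)\|_2^2$ and the total drop is $h(S) - h(S \oplus O) = \|e^S\|_2^2 - \|F_O(e^S)\|_2^2$. Applying pointwise contractivity to each $F_{O_i}$ gives $h(S) - h(S \oplus O_i) \ge (1 - \rho_{O_i}^2)\|e^S\|_2^2$, and summing yields the denominator lower bound $\sum_{i=1}^{T}\bigl(h(S) - h(S \oplus O_i)\bigr) \ge \left(T - \sum_i \rho_{O_i}^2\right)\|e^S\|_2^2$. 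For the numerator, the crude bound $h(S) - h(S \oplus O) \le \|e^S\|_2^2$ already gives an $\alpha$ of the right order $1/(T - \sum_i \rho_{O_i}^2)$; the factor $4$ in the stated $\alpha(O)$ should come from sharpening via the telescoping $\|e^S\|_2^2 - \|F_O(e^S)\|_2^2 = \sum_i (\|e^{(i-1)}\|_2^2 - \|e^{(i)}\|_2^2)$ with $e^{(i)} := F_{O_i}\cdots F_{O_1}(e^S)$ and the identity $a^2 - b^2 \le 2a(a-b)$, which converts each telescope term into something that can be compared term by term against $\|e^S\|_2^2 - \|F_{O_i}(e^S)\|_2^2$. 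The outer $\max$ with $1$ enforces the $\alpha(O) \ge 1$ convention from the definition.

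For postfix non-increasing, I need $h(S' \oplus S) = \|F_S(F_{S'}(e^{(0)}_h))\|_2^2 \le \|F_S(e^{(0)}_h)\|_2^2 = h(S)$. Prefix monotonicity $h(S \oplus S') \le h(S)$ is immediate by applying the pointwise contraction to $z := F_S(e^{(0)}_h)$, but the postfix direction is subtler because it compares $\|F_S(\cdot)\|_2$ at two distinct inputs, $F_{S'}(e^{(0)}_h)$ and $e^{(0)}_h$. Here the invertibility hypothesis enters: since each $F_j$, and hence $F_{S'}$, is an invertible contraction with $F_j(0_N) = 0_N$, the plan is to induct on $|S'|$ to reduce to a single prepended step $F_{S'} = F_j$, and then exploit that every $F_j$ is built from the same $\mathcal{L}_h^a$, so all updates act on the common residual $f_h - \mathcal{L}_h^a u_h$ rather than on unrelated directions of $\mathbb{R}^N$.

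The main obstacle is precisely this postfix step. The naive chain $\|F_S(F_{S'}(e^{(0)}_h))\|_2 \le \left(\prod_i \rho_{S_i}\right)\left(\prod_j \rho_{S'_j}\right)\|e^{(0)}_h\|_2$ is not by itself enough, since $\|F_S(e^{(0)}_h)\|_2$ can be much smaller than $\prod_i \rho_{S_i}\|e^{(0)}_h\|_2$ when $e^{(0)}_h$ is nearly aligned with a direction on which $F_S$ acts strongly contractively, so prepending $F_{S'}$ could in principle rotate the input into a direction on which $F_S$ is comparatively non-contractive. Ruling this out is where invertibility and the shared factor $\mathcal{L}_h^a$ in every $F_j$ must be combined, and I expect the final inequality to rest on this structural observation rather than on further $(1 - \rho_j^2)$ bookkeeping carried over from the first part.
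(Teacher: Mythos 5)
Your weak-supermodularity argument is correct and in fact simpler than the paper's. The paper bounds the aggregate drop by $h(S)-h(S\oplus O)\le\|(I-F_O)(e^S)\|_2^2\le(1+\prod_{t}\rho_{O_t})^2\|e^S\|_2^2\le 4\|e^S\|_2^2$, which is where the $4$ comes from; your cruder bound $h(S)-h(S\oplus O)\le h(S)=\|e^S\|_2^2$ (non-negativity of $h$), combined with the same lower bound $\sum_{i=1}^{T}\bigl(h(S)-h(S\oplus O_i)\bigr)\ge\bigl(T-\sum_{i}\rho_{O_i}^2\bigr)\|e^S\|_2^2$, already gives an admissible ratio $\max\{1/(T-\sum_i\rho_{O_i}^2),1\}$, and since the right-hand sum is non-negative this implies the statement with the larger $\alpha(O)$ claimed. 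So the telescoping/$a^2-b^2\le 2a(a-b)$ sharpening you anticipate is unnecessary (and inverted: the paper's $4$ makes the ratio looser, not tighter). Do record prefix monotonicity explicitly, since the definition of weak supermodularity presupposes it; as you note it is immediate from $\|F_j(x)\|_2\le\rho_j\|x\|_2$.

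The genuine gap is the postfix claim: you diagnose the obstacle accurately but prove nothing, and the structure you propose to lean on cannot close it. The hypotheses constrain only the maps $F_j$ themselves --- any family of invertible contractions fixing $0_N$ arises as $I-C_j\circ\mathcal{L}_h^a$ (take $C_j=(I-F_j)\circ(\mathcal{L}_h^a)^{-1}$) --- so ``sharing $\mathcal{L}_h^a$'' adds no usable structure, and for generic such maps the inequality fails outright: with $F_1=\mathrm{diag}(0.9,10^{-3})$, $F_2$ equal to $0.9$ times a quarter-turn rotation, and $e^{(0)}_h=(0,1)^{\top}$, one gets $\|F_1(F_2(e^{(0)}_h))\|_2=0.81\gg\|F_1(e^{(0)}_h)\|_2=10^{-3}$, so your worry about the prepended block rotating the error into a badly contracted direction is exactly right. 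The paper's own proof handles this half purely algebraically: for $S'\oplus S$ it inserts $F_S^{-1}\circ F_S$, regards $F_S\circ F_{S'}\circ F_S^{-1}$ as acting on $F_S(e^{(0)}_h)$, and multiplies Lipschitz constants as $\prod_t\rho_{S_t}\cdot\prod_t\rho_{S'_t}\cdot\prod_t\rho_{S_t}^{-1}=\prod_t\rho_{S'_t}\le 1$, whence $h(S'\oplus S)\le h(S)$. Note that this step takes the Lipschitz constant of $F_S^{-1}$ to be $\prod_t\rho_{S_t}^{-1}$, which is precisely the point your counter-directional concern targets: invertibility of a contraction only lower-bounds the inverse's expansion, so that bound requires extra structure (it holds, e.g., in the shared-eigenbasis setting of Proposition~\ref{th:simultaneoussupermodular}). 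Either way, as written your proposal establishes only the weak-supermodularity half of the proposition; the postfix half is missing.
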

The conditions of Proposition \ref{th:weaklyalphasupermodular} are natural. For classical solvers, the Lipschitz constant is $\|I_N - C_j \mathcal{L}^a_h\|$, which is typically less than $1$ for well-posed linear elliptic PDEs (i.e., the update is damping errors). 
For neural networks, Lipschitz continuity can be enforced via weight regularization \citep{gouk2021regularisation} and a sufficiently trained model should approximate $(\mathcal{L}^a_h)^{-1}$ well enough to make the Lipschitz constant small. The requirement $(I - C_j \circ \mathcal{L}_h^a) (0_N) = 0_N$ is both natural and desirable: it precludes spurious updates when the residual $f_h - \mathcal{L}^a_h u_h^{(t)}$ is $0$. This holds by design for classical schemes, and it can be enforced for any learned model by excluding bias terms.

The form of $\alpha(O)$ in Proposition \ref{th:weaklyalphasupermodular} highlights that the suboptimality factor in \Cref{th:suboptgreedy} is governed by the collective contraction factors of the solvers chosen by the optimal solution: as $\sum_{i = 1}^{T} \rho_{O_i}^2 \to T$, $\alpha(O)$ grows, resulting in a weaker bound. Invertibility of the error propagation functions is often satisfied with Jacobi and Gauss-Seidel updates. However, neural networks with dimension changes or ReLU activations may yield non-invertible error propagation maps. Nevertheless, our experiments show that greedy routers remain effective even when invertibility is not met.

\Cref{th:suboptgreedy} thus indicates that the approximation guarantee is strongest when the sequence is nearly supermodular ($\alpha(O) \approx 1$). Generally, if all error propagation maps share an eigenbasis, supermodularity is established.
This occurs, for example, for linear, constant-coefficient PDEs with periodic boundary conditions where the solver ensemble includes Jacobi, Gauss-Seidel, and  a single-layer linear Fourier Neural Operator. The proof of \Cref{th:simultaneoussupermodular} is deferred to \Cref{sec:proofsimultaneoussupermodular}.
\begin{proposition} \label{th:simultaneoussupermodular}
    Let $\|I_N - C_j\mathcal{L}_h^a\| \leq 1$ for all $j \in [K]$ and $\left(I - C_j \mathcal{L}_{h}^a\right) = P\Lambda_jP^{-1}$. Then, $h$ is supermodular. 
\end{proposition}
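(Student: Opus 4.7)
The strategy is to exploit the shared eigenbasis to diagonalize all error propagation operators simultaneously, reducing $h(S)$ to a sum of independent per-eigencoordinate contributions; supermodularity then follows from a coordinatewise contraction argument using the norm bound $\|I_N-C_j\mathcal{L}_h^a\|\leq 1$.

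First I would set notation: write $M_j := I - C_j\mathcal{L}_h^a = P\Lambda_j P^{-1}$ with $\Lambda_j$ diagonal. Because the $M_j$ all diagonalize in the same basis, they pairwise commute, so for any sequence $S$ the composed operator $A_S := M_{S_{|S|}}\cdots M_{S_1}$ satisfies $A_S = P D_S P^{-1}$ where $D_S = \prod_{i=1}^{|S|}\Lambda_{S_i}$ is diagonal. The hypothesis $\|M_j\|\leq 1$ bounds the spectral radius and hence each eigenvalue, so $|\lambda_{j,k}|\leq 1$ for every solver $j$ and coordinate $k$; in particular $|(D_S)_{kk}|\leq 1$ for any finite sequence $S$.

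Next I would obtain a closed-form expression for the marginal drop. For any $S$ and $\omega\in[K]$, with $u = A_S e_h^{(0)}$, one has $h(S) - h(S\oplus\omega) = u^\top (I - M_\omega^\top M_\omega) u$. Working in the shared eigenbasis (in which $P$ is orthogonal in the canonical setting of constant-coefficient PDEs with periodic boundary conditions, where the discretized operators are circulant and diagonalized by the unitary DFT), $I - M_\omega^\top M_\omega = P(I - \Lambda_\omega^2)P^\top$. Setting $y = P^{-1} e_h^{(0)}$, this gives the clean decomposition
$$h(S) - h(S\oplus\omega) \;=\; \sum_{k=1}^{N} y_k^2\,(D_S)_{kk}^2\,(1 - \lambda_{\omega,k}^2),$$
in which every summand is nonnegative because $|\lambda_{\omega,k}|\leq 1$.

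To finish, I would take any prefix relation $S\preceq S'$, write $S' = S\oplus L$, and use the commuting structure to get $(D_{S'})_{kk}^2 = (D_L)_{kk}^2\,(D_S)_{kk}^2 \leq (D_S)_{kk}^2$ for each $k$, since $|(D_L)_{kk}|\leq 1$. Substituting into the analogous formula for $h(S') - h(S'\oplus\omega)$ shows that each nonnegative coefficient $(D_S)_{kk}^2$ is replaced by the (weakly) smaller $(D_{S'})_{kk}^2$, so summing coordinatewise yields $h(S) - h(S\oplus\omega) \geq h(S') - h(S'\oplus\omega)$, which is exactly the sequence supermodularity condition \eqref{eqn:supermod}.

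\paragraph{Expected obstacle.}
The only delicate point is the step $I - M_\omega^\top M_\omega = P(I-\Lambda_\omega^2)P^\top$, which uses $P^{-1}=P^\top$. In the paper's motivating example (Jacobi, Gauss--Seidel, and a linear FNO on a periodic, constant-coefficient problem) the common diagonalizing $P$ is the DFT and hence unitary, so this is automatic. For a general simultaneously diagonalizable ensemble with non-unitary $P$, I would either record orthogonality of $P$ as an implicit hypothesis (consistent with how $\|\cdot\|$ is used in the statement), or carry the argument out in the $P$-weighted inner product $\langle x,y\rangle_P := (P^{-1}x)^\top(P^{-1}y)$, under which $\|A_S v\|_P^2 = \sum_k y_k^2 (D_S)_{kk}^2$ and the contraction bound $\|M_j\|_P\leq 1$ is equivalent to $\max_k|\lambda_{j,k}|\leq 1$; the remainder of the argument then goes through verbatim.
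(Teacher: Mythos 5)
Your proposal is correct and follows essentially the same route as the paper: the paper's Lemma~\ref{th:simultaneousallocation} is exactly your diagonal decomposition $h(S)=\sum_i z_i^2\prod_j\lambda_{ji}^{2m_j(S)}$ (with $\prod_j\lambda_{ji}^{2m_j(S)}=(D_S)_{ii}^2$), and the coordinatewise comparison using $|\lambda_{ji}|\leq 1$ is the paper's step for $S'=S\oplus B$. Your flagged obstacle is also handled the same way in the paper, which simply assumes $P$ orthogonal in the lemma, so your observation that the statement implicitly needs orthogonality of $P$ (or a $P$-weighted norm) is accurate rather than a gap in your argument.
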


\section{Approximate Greedy Router} \label{sec:approxgreedy}

The results in \Cref{sec:greedy} indicate that the error reduction from the greedy solution closely matches that of the optimal sequence, but this is predicated on having access to the initial error, $e^{(0)}_h$. Inaccurate error estimates can result in poor solver decisions, causing errors to amplify. To remedy this, we learn a router $r$ that selects solvers myopically, as in \Cref{alg:greedy}, without access to true errors. 

We adopt the following learning setup. Let $\mathcal{A}, \mathcal{F}, \mathcal{U} \subseteq \mathbb{R}^N$ denote the spaces of coefficient, forcing, and solution functions on the grid $G_h(D)$. We assume an application-specific data distribution $\mathcal{P}_{\mathcal{A} \times \mathcal{F}}$ over $\mathcal{A} \times \mathcal{F}$ that reflects test time conditions. During training, $(a_h, f_h)$ is drawn from $\mathcal{P}_{\mathcal{A} \times \mathcal{F}}$, and a high-accuracy reference solution $u_h$ is computed, providing the true per-step error. The router $r$ is learned offline on this data; at test time, it operates without access to true errors.

At iteration $t$, router $r$ selects a solver using the coefficient $a_h \in \mathcal{A}$, forcing $f_h \in \mathcal{F}$, and the current iterate $u^{(t)}_h \in \mathcal{U}$. If $r(a_h, f_h, u^{(t)}_h) = j$, the next iterate is computed with solver $j$, resulting in an error of $\|( I - C_{j} \circ  \mathcal{L}_{h}^a)(e^{(t)}_h)\|^2_2$. Learning such a router requires minimizing the following loss:
\begin{equation} \label{eq:routerloss}
    l_{\text{route}}\left(r, a_h, f_h, u^{(t)}_h, u_h\right) = \sum_{j = 1}^K \left\|\left( I - C_{j} \circ \mathcal{L}_{h}^a\right)\left(u_h - u^{(t)}_h\right)\right\|^2_2\1_{r(a_h, f_h, u^{(t)}_h) = j}
\end{equation}
with risk $\mathcal{R}_{\text{route}}\left(r\right) = \mathbb{E}_{a_h, f_h \sim \mathcal{P}_{\mathcal{A} \times \mathcal{F}}}[l_{\text{route}}(r, a_h, f_h, u^{(t)}_h, u_h)]$.  
For analysis, we fix the iterate generation via teacher-forcing \citep{williams1989learning,lamb2016professor}: during training, the iterates fed to the router are produced by an oracle greedy rollout, not by the router's own past choices. Formally, with $u_h^{(0)} = 0_N$, for $t \geq 1$, 
\begin{equation*}
    j^*_t \in \text{argmin}_{j \in [K]} \left\|\left( I - C_{j} \circ \mathcal{L}_{h}^a\right)\left(e^{(t - 1)}_h\right)\right\|^2_2, \quad u^{(t)}_h = u^{(t - 1)}_h + C_{j^{*}_t}\left(f_h - \mathcal{L}_h^a u_h^{(t - 1)}\right)
\end{equation*}
Thus, $\{u^{(t)}_h\}$ is a deterministic function of $(a_h, f_h)$. At each $t$, $l_{\text{route}}$ is evaluated on $r(a_h, f_h, u^{(t)}_h)$ which takes in the teacher-forced iterate. Finally, the greedy choice $j^*_{t + 1}$ is used to advance $u^{(t + 1)}_h$. Thus the input distribution seen by $r$ depends on $(a_h, f_h)$, not on the router's predictions. However, full teacher forcing induces a distributional mismatch at test time (exposure bias). In experiments, we mitigate this with scheduled sampling \citep{bengio2015scheduled}; see \Cref{sec:training details} for details.

Minimizing $\mathcal{R}_{\text{route}}\left(r\right)$ yields the following Bayes-Optimal Router:
\begin{equation}\label{eq:bayesoptimalrouter}
    r^*(a_h, f_h, u^{(t)}_h) \in \text{argmin}_{j \in [K]} \left\|\left( I - C_{j} \circ \mathcal{L}_{h}^a\right)\left(e^{(t)}_h\right)\right\|^2_2
\end{equation} 
which aligns with the update in \Cref{alg:greedy}. Hence, $l_{\text{route}}$ is consistent with learning the greedy rule.

\subsection{Surrogate Loss}\label{section:surrogate}

Since \Cref{eq:routerloss} is discontinuous and non-convex, direct minimization is intractable in practice. We instead introduce a surrogate loss that (1) is convex, enabling efficient optimization; (2) upper bounds the original loss; and (3) is Bayes consistent, ensuring the Bayes optimal decision of the original loss is preserved upon minimization. Formally, a surrogate $\phi$ is considered to be Bayes consistent with respect to the loss $l$ if 
\begin{equation*}
    \lim_{n \to \infty} \mathcal{R}_{\phi}(f_n) - \mathcal{R}_{\phi}^* \implies \lim_{n \to \infty} \mathcal{R}_{l}(f_n) - \mathcal{R}_{l}^* 
\end{equation*}
where $\mathcal{R}_{l}(f) = \mathbb{E}\left[l(f(X), Y)\right]$ and $\mathcal{R}_{l}^* = \inf_f\mathcal{R}_{l}(f)$. In other words, in the limit of infinite data, if the risk of a sequence of learned hypotheses $\{f_n\}$ converges to the optimal risk under $\phi$, it also converges to the optimal risk with respect to the original loss $l$.

To define a surrogate loss for the routing problem, consider a set of scoring functions $\mathbf{g} =\{g_j\}_{j = 1}^K$ with $g_j: \mathcal{A} \times \mathcal{F} \times \mathcal{U} \to \mathbb{R}$ and we define the router as $r(a, f, u^{(t)}) = \text{argmax}_{j \in [K]} g_j(a, f, u^{(t)})$. For example, $\mathbf{g}$ can be a neural network with $K$ outputs. Then, minimizing the following surrogate loss yields the same decision as minimizing \Cref{eq:routerloss}:
\begin{equation} \label{eq:routersurrogate}
    \Psi\left(\mathbf{g}, a_h, f_h, u^{(t)}_h, u_h\right) = -\sum_{j = 1}^K \sum_{k = 1}^K\Tilde{c}_k(a_h, u^{(t)}_h, u_h)\1_{k\neq j}\log\left(\frac{\exp\left(g_j(a_h, f_h, u^{(t)}_h)\right)}{\sum_{m = 1}^K\exp\left(g_m(a_h, f_h, u^{(t)}_h)\right)}\right)
\end{equation}
where $\Tilde{c}_j(a_h, u^{(t)}_h, u_h) = \|( I - C_{j} \circ \mathcal{L}_{h}^a)(u_h - u^{(t)}_h)\|^2_2$. The $\Psi$-risk is denoted by $\mathcal{R}_{\Psi}(\mathbf{g}) = \mathbb{E}_{a_h, f_h \sim \mathcal{P}_{\mathcal{A} \times \mathcal{F}}}[\Psi(\mathbf{g}, a_h, f_h, u^{(t)}_h, u_h)]$. The convexity of $\Psi$ with respect to $\mathbf{g}$ follows from the convexity of log-softmax function in its inputs. Moreover, $\Psi$ upper bounds $l_{\text{route}}$ up to a constant factor, and we refer the reader to \Cref{sec:proofupperbound} for the proof. Finally, \Cref{th:consistency} shows that $\Psi$ achieves Bayes consistency with respect to $l_{\text{route}}$; the proof can be found in \Cref{sec:proofconsistency}.

\begin{theorem} \label{th:consistency}
    Let $\Tilde{c}_j(a_h, u^{(t)}_h, u_h) < \Bar{E} < \infty$  for all $j \in [K]$. If there exists $j\in [K]$ such that $\Tilde{c}_j(a_h, u^{(t)}_h, u_h) > E_{min} > 0$, then, for any collection of solvers $\{C_j\}_{j= 1}^K$ and linear discrete operator $\mathcal{L}_h^a$, $\Psi$ is Bayes consistent surrogate for $l_{\text{route}}$.
\end{theorem}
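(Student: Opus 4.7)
The plan is to establish Bayes consistency pointwise at each conditioning value and then aggregate via Jensen's inequality, following the standard calibration-function template. Condition on $(a_h,f_h,u^{(t)}_h,u_h)$ and abbreviate $c_j:=\tilde{c}_j(a_h,u^{(t)}_h,u_h)$, $W:=\sum_{j=1}^K c_j$, and $p_j := \exp(g_j)/\sum_m\exp(g_m)$. The conditional surrogate then simplifies to $\Psi(\mathbf{g})=-\sum_j (W-c_j)\log p_j$, a cost-sensitive cross-entropy against the ``soft label'' $q_j:=(W-c_j)/M$ with $M:=(K-1)W$. Under the hypotheses, $(K-1)E_{\min}\leq M\leq (K-1)K\bar{E}$, so $q$ is a well-defined distribution and $M$ is uniformly bounded above and below.

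First I would identify the pointwise minimizer of $\Psi$. A Lagrangian computation over the probability simplex yields $p^\star=q$, hence $\argmax_j p^\star_j = \argmax_j(W-c_j) = \argmin_j c_j$, which coincides with the Bayes-optimal router in \Cref{eq:bayesoptimalrouter}. This shows matching Bayes decisions but does not yet give a rate, so the next step is quantitative.

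Next I would derive a pointwise calibration bound. Expanding $\Psi(\mathbf{g})-\Psi^\star = \sum_j (W-c_j)\log(p^\star_j/p_j) = M\cdot\KL(p^\star\|p)$ and invoking Pinsker's inequality gives $\|p-p^\star\|_1\leq\sqrt{2(\Psi(\mathbf{g})-\Psi^\star)/M}$. Let $\hat{\jmath}=\argmax_j p_j$ (the router's choice) and $j^\star=\argmax_j p^\star_j$. If $\hat{\jmath}\neq j^\star$, then $p_{\hat{\jmath}}\geq p_{j^\star}$ and $p^\star_{j^\star}\geq p^\star_{\hat{\jmath}}$, and a reverse triangle inequality on these two coordinates yields $\|p-p^\star\|_1\geq p^\star_{j^\star}-p^\star_{\hat{\jmath}}=(c_{\hat{\jmath}}-c_{j^\star})/M$. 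Chaining with Pinsker, the pointwise excess $l_{\text{route}}$-risk obeys $c_{\hat{\jmath}}-c_{j^\star}\leq \sqrt{2M(\Psi(\mathbf{g})-\Psi^\star)}$.

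Finally I would take expectations under $(a_h,f_h)\sim\mathcal{P}_{\mathcal{A}\times\mathcal{F}}$, apply Jensen's inequality to the concave map $\sqrt{\cdot}$, and substitute $M\leq (K-1)K\bar{E}$ to obtain
$$\mathcal{R}_{\text{route}}(r)-\mathcal{R}_{\text{route}}^\star \leq \sqrt{2(K-1)K\bar{E}\bigl(\mathcal{R}_\Psi(\mathbf{g})-\mathcal{R}_\Psi^\star\bigr)},$$
from which the Bayes-consistency limit is immediate. The main technical step is the coordinate-wise argument linking a router's mistake to a quantitative lower bound on $\|p-p^\star\|_1$ in terms of the excess cost $c_{\hat{\jmath}}-c_{j^\star}$; once that is in hand, Pinsker and Jensen proceed mechanically. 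The hypotheses $\bar{E}$ and $E_{\min}$ enter precisely to keep $M$ bounded on both sides, preventing degeneracies in the softmax target $q$ and blow-up of the calibration constant.
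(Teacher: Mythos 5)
Your proposal is correct, and it takes a genuinely different route from the paper. The paper reduces the pointwise excess $l_{\text{route}}$-risk to an excess $0$--$1$ risk under a synthetic conditional distribution $P(Y=j\mid X)\propto B_j=\sum_{k\neq j}\Tilde{c}_k$, and then invokes the cross-entropy $H$-consistency bound of \citet{mao2023cross} (their Theorem~3.1, with the function $\Gamma^{-1}$), before applying Jensen's inequality using concavity and continuity of $\Gamma^{-1}$ at $0$; this is where both $\Bar{E}$ (to bound $\sum_k B_k$ above) and $E_{\min}$ (to bound it below inside $\Gamma^{-1}$) enter their final bound. You instead give a self-contained calibration argument: identify the pointwise minimizer $p^\star=q$ of the weighted cross-entropy (Gibbs' inequality), write the excess surrogate risk exactly as $M\,\KL(p^\star\|p)$ with $M=(K-1)W$, apply Pinsker, and convert a routing mistake into the coordinate-wise lower bound $\|p-p^\star\|_1\geq (c_{\hat{\jmath}}-c_{j^\star})/M$, which chains to the explicit rate $\mathcal{R}_{\text{route}}-\mathcal{R}^*_{\text{route}}\leq\sqrt{2K(K-1)\Bar{E}\,(\mathcal{R}_\Psi-\mathcal{R}^*_\Psi)}$. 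This buys an elementary, reference-free proof with an explicit square-root calibration constant, and notably your final constant needs only the upper bound $\Bar{E}$: the role of $E_{\min}$ in your argument is merely to rule out the degenerate case $W=0$ where the soft label $q$ is undefined (there, all costs vanish and the excess is trivially $0$), whereas the paper's constant genuinely depends on $E_{\min}$. Two small points you should make explicit to match the paper's level of rigor: (i) the identification of $\mathcal{R}^*_{\text{route}}$ and $\mathcal{R}^*_\Psi$ with the expectations of the pointwise infima, valid because the infima are over all measurable routers/scoring functions (the paper's step~(a) in its final chain); and (ii) the infimum over $\mathbf{g}$ attains the simplex minimum only in the limit when some $q_j=0$ (softmax is interior), though the value and the identity $\Psi-\Psi^\star=M\,\KL(q\|p)$ are unaffected.
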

\Cref{th:consistency} is a cost-sensitive analogue of the classical Bayes-consistency of multiclass cross-entropy for 0-1 loss: in the infinite-sample limit, minimizing cross-entropy recovers the true conditional class probabilities, so the induced decision is Bayes optimal. Our result extends this to cross-entropy with instance-dependent weights $\sum_{k \neq j}^K\Tilde{c}_k(a_h, u^{(t)}_h, u_h)$. The uniform upper bound holds when all preconditioning functions are error-damping. It is also reasonable to assume at least one solver cannot annihilate the error in one step, yielding the lower bound. Under these conditions, minimizing $\Psi$ recovers the Bayes-optimal router of \Cref{eq:bayesoptimalrouter}, i.e., the greedy solution of \Cref{eq:errorobjective}. Following the work of \citet{mao2024regression}, the proof uses standard conditional-risk calibration: we relate the excess risks of $l_{\text{route}}$ and $\Psi$ and take the infinite-sample limit.

\section{Related Works}

\paragraph{Hybrid PDE Solvers:}
Early data-driven solvers sought convergence guarantees by predicting parameters--e.g., preconditioning matrices, multi-grid smoothers or restriction matrices--within iterative schemes \citep{taghibakhshi2021optimization,caldana2024deep,kopanivcakova2025deeponet,huang2022learning,katrutsa2020black}. These works, however, do not leverage the neural surrogates' ability to generalize across varying coefficients and forcings highlighted in \Cref{sec:neuraloperators} and thus, offer only modest speedups over classical solvers. HINTS \citep{zhang2024blending} introduced hybrid solvers that interleave a classical method with a pre-trained DeepONet on a fixed schedule (e.g., 24 Jacobi steps, then one DeepONet correction), achieving faster convergence than the standalone numerical solver. Subsequent works extend this idea to new geometries \citep{kahana2023geometry} and analyze error mode damping, replacing DeepONet with alternatives such as MIONet \citep{jin2022mionet} \citep{hu2025hybrid} or FNO \citep{li2020fourier,cui2022fourier}. However, these hybrid solvers are limited in two ways: they, firstly, only combine the trained surrogate with a \textit{single} numerical solver, and secondly rely on a fixed, heuristic schedule. Our proposed method addresses these shortcomings, resulting in significant empirical improvements (\Cref{sec:experiments}).

\paragraph{Model Routing:} Routing \citep{shnitzer2023large,hu2024routerbench,ding2024hybrid,huang2025routereval} selects, for each input, a model from a fixed set to optimize a task metric under cost or latency constraints. Simple heuristics--e.g., thresholding a cheap model's uncertainty estimates \citep{chuang2024learning,chuang2025confident}--often poorly balance the cost-accuracy tradeoff, motivating many systems to instead learn a router that maps inputs to the best model \citep{hari2023tryage,mohammadshahi2024routoo,vsakota2024fly}. We similarly learn a router, but over numerical and neural solvers for PDEs at the iteration level. Our work is the first to apply routing to the problem of learning hybrid PDE solvers, in contrast to prior approaches that use fixed schedules. Additionally, by exploiting the algebraic structure of PDE solvers, we derive theoretical guarantees for our routing strategy (\Cref{sec:greedy}), unlike typical model-routing settings.

\paragraph{Mixture of Experts:} Classical mixture-of-experts (MoE) \citep{jacobs1991adaptive,jordan1994hierarchical} uses a gating network to combine the outputs of multiple experts, trained jointly with the gate. In modern LLMs, MoE instead performs sparse, token-level routing to a subset of in-layer experts, enabling large capacity without proportional compute, typically with load-balancing mechanisms \citep{shazeer2017outrageously,lepikhin2020gshard,fedus2022switch,jiang2024mixtral}. Our method can be regarded as gating of a different kind: ``experts'' (solvers) are not jointly trained with the router, and a single router is reused across iterations, unlike MoE which commonly employs layer-specific gates.

\section{Experiments}\label{sec:experiments}

We empirically demonstrate the fast, uniform convergence of the approximate greedy router on the Poisson and Convection-Diffusion (ConvDiff) equations posed on the unit domain $D = [0, 1]^2$:
\begin{equation*}
    \forall\, x\in D \qquad
    \text{\textbf{Poisson:} } - \Delta u(\mathbf{x}) = f(\mathbf{x})
    \quad
    \text{\textbf{ConvDiff:} } -\Delta u(\mathbf{x}) + 20\frac{\partial}{\partial x_1}u(\mathbf{x}) + 20\frac{\partial}{\partial x_2}u(\mathbf{x}) = f(\mathbf{x})
\end{equation*}
We impose periodic boundary conditions for both equations, and center $f$ to satisfy the compatibility condition $\int_D f(\mathbf{x})d\mathbf{x} = 0$
; Dirichlet results are deferred to \Cref{sec:dirichlet}
. The domain $D$ is discretized on a $31 \times 31$ uniform grid; results on a finer grid are provided in \Cref{sec:finer_grids}. Data is sampled from a zero-mean Hierarchical Gaussian Random Field on the periodic domain with covariance operator $\alpha(- \Delta + \beta I)^{-\gamma}$, where $\alpha, \beta, \gamma$ vary across samples (See \Cref{sec:training details}).  We run two experimental settings. In the first, we restrict to solver pairs ($K = 2$) and compare against HINTS and standalone classical solvers. In the second, we assess performance as the solver ensemble grows; since HINTS does not support this setting, we simply compare against individual solvers. In both cases, we consider a range of iterative methods described below. Performance is reported over $128$ test samples via the mean final error $\|e^{(T)}_h\|_2$ and the mean area under the curve (AUC), defined as $ \sum_{t = 1}^T\|e^{(t)}_h\|_2$; the ``curve'' refers to the error over iterations. AUC characterizes the full convergence behavior, which is critical in practice where solutions must be reached in minimal time.

\paragraph{Paired Solver Experiments:} We seek to optimally solve the aforementioned PDEs with access to a single numerical scheme and a trained DeepONet. In particular, we consider settings coupling the DeepONet with Jacobi, Gauss-Seidel (GS), and Symmetric Gauss-Seidel (SymGS) solvers. For each setup, we compare our learned router against the single-solver schedules (Jacobi only, GS only, and SymGS only) and HINTS variants (HINTS-Jacobi, HINTS-GS, HINTS-SymGS), where the DeepONet correction are applied every 24 iterations. We train LSTM-based routers using the loss in \Cref{eq:routersurrogate} separately for each pairwise settings. The LSTM captures the sequential nature of routing decisions where the benefit of each action depends on the error trajectory and past choices. Training details of the DeepONet and the routers can be found in \Cref{sec:training details}. To ensure fairness, comparisons are restricted to methods with identical solver access, and are grouped by solver family (e.g., Jacobi-related methods). In addition, we include a \emph{true greedy} oracle, which selects at each iteration the solver that yields the largest immediate error reduction. This oracle has access to the ground-truth error and is therefore not implementable in practice, but helps evaluate how well the learned router approximates the optimal policy. Each method is run for 300 iterations.

\begin{table}[]
    \centering
    \caption{Final error and AUC of $L_2^2$ error (lower is better). Values are mean ($\pm$ standard error) over 128 test instances; Error is reported in $\times 10^{-3}$. If a standard error is not shown, it is $< 10^{-3}$ in the reported units (raw $< 10^{-6}$). Bold indicates the best method within each solver family. True-Greedy denotes an oracle policy with access to exact error information}
    \resizebox{0.67\textwidth}{!}{%
    \begin{tabular}{ccccc}
        \hline
         Equation &  \multicolumn{2}{c}{Poisson} & \multicolumn{2}{c}{ConvDiff} \\
         \hline 
         Methods & $\|e^{(T)}_h\| \times 10^{3}$ & AUC & $\|e^{(T)}_h\|\times 10^{3}$ & AUC\\
         \hline
         \multicolumn{5}{c}{Jacobi-related Solvers} \\
        \hline
        Jacobi Only & 0.383 (1.029) & 0.821 (2.154) & 0.136 (0.364) & 0.312 (0.788) \\
        HINTS-Jacobi & 0.759 (0.016) & 0.393 (0.590) & 0.447 (0.013) & 0.202 (0.256) \\
        Learned Greedy-Jacobi & \textbf{0.054 (0.142)} & \textbf{0.165 (0.368)} & \textbf{0.033 (0.097)} & \textbf{0.098 (0.239)} \\
        \textit{True-Greedy-Jacobi} & \textit{0.021 (0.018)} & \textit{0.094 (0.170)} & \textit{0.012 (0.012)} & \textit{0.049 (0.091)} \\
        \hline
         \multicolumn{5}{c}{GS-related Solvers} \\
         \hline
         GS only & 0.019 (0.051) & 0.435 (1.141) & $<\mathbf{10^{-3}}$ & 0.088 (0.219) \\
        HINTS-GS & 0.724 (0.000) & 0.325 (0.488) & 0.456 (0.000) & 0.130 (0.154) \\
        Learned Greedy-GS & \textbf{0.002 (0.004)} &  \textbf{0.083 (0.152)} & $\mathbf{<10^{-3}}$ &  \textbf{0.031 (0.078)} \\
        \textit{True-Greedy-GS} & \textit{0.001 (0.001)} & \textit{0.057 (0.107)} & $\mathit{<10^{-3}}$ & \textit{0.019 (0.038)} \\
        \hline
         \multicolumn{5}{c}{SymGS-related Solvers} \\
         \hline
         SymGS only & $\mathbf{<10^{-3}}$ & 0.227 (0.593) & $\mathbf{<10^{-3}}$ & 0.071 (0.178) \\
        HINTS-SymGS & 0.709 (0.000) & 0.252 (0.380) & 0.463 (0.000) & 0.116 (0.131) \\
        Learned Greedy-SymGS & $\mathbf{<10^{-3}}$ &  \textbf{0.052 (0.102)} & $\mathbf{<10^{-3}}$ &  \textbf{0.022 (0.039) }\\
        \textit{True-Greedy-SymGS} & $\mathit{<10^{-3}}$ & \textit{0.034 (0.066)} & $\mathit{<10^{-3}}$ & \textit{0.016 (0.032)} \\
         \hline
    \end{tabular}
    }
    \label{tab:errorcomparison}
\end{table}
\begin{figure}[t]
    \centering
    \includegraphics[width=0.55\linewidth]{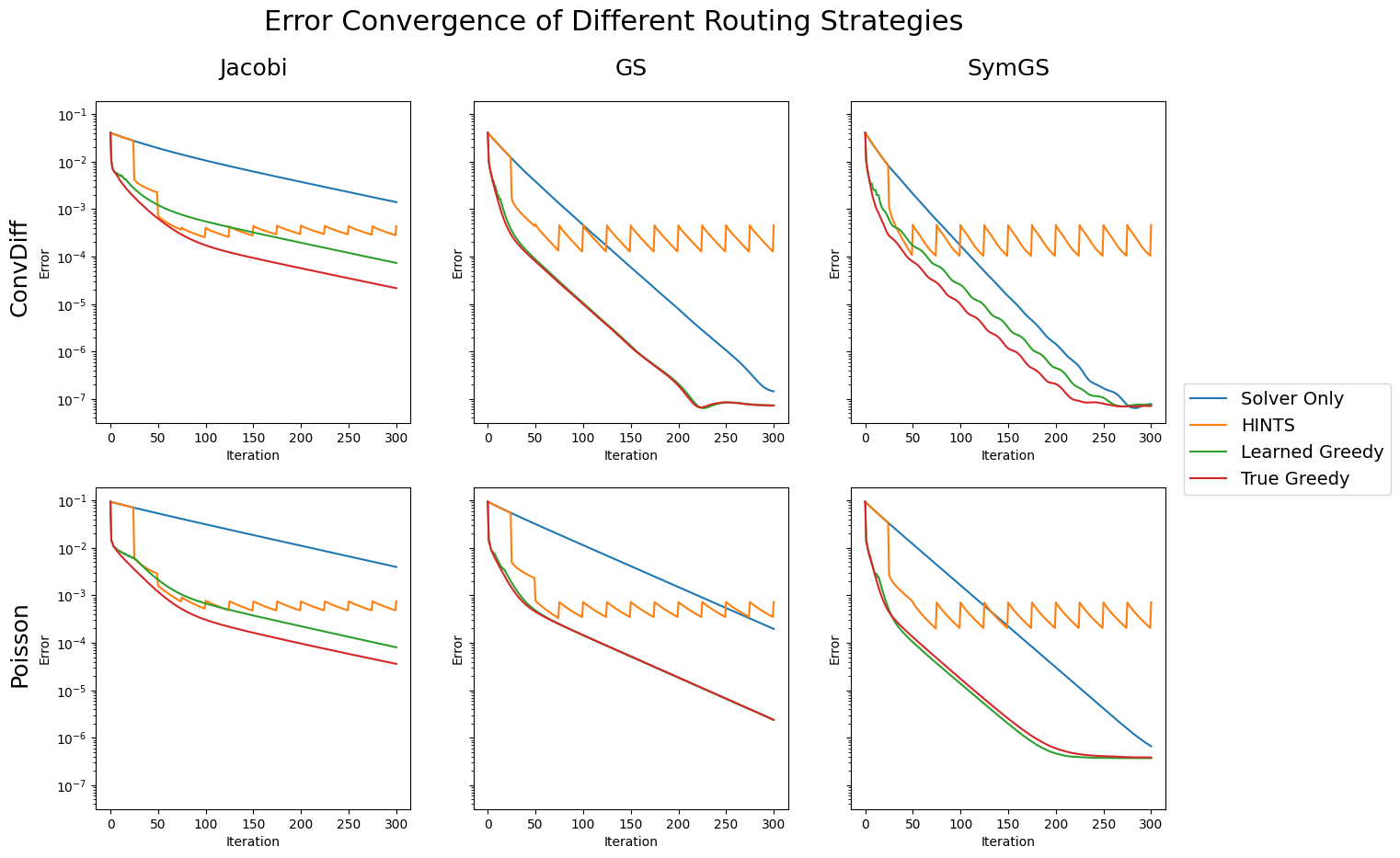}    \caption{Convergence histories for representative test instances. Rows: Convection Diffusion (top) and Poisson (bottom). Columns: Jacobi, Gauss–Seidel (GS), and Symmetric Gauss-Seidel (SymGS). Greedy yields near-monotone decay and the lowest errors, whereas HINTS shows sawtooth behaviors. 
    }
    \label{fig:errorcomparison}
\end{figure}
\begin{figure}
    \centering
    \includegraphics[width=0.5\linewidth]{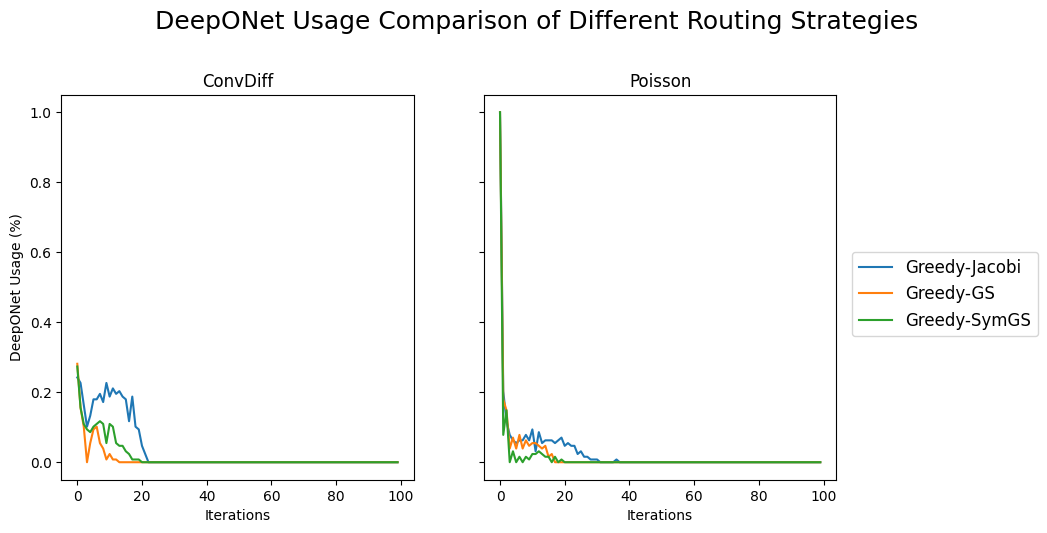}
    \caption{DeepONet usage across iterations for different greedy routing strategies on convection–diffusion (left) and Poisson (right). The variability across iterations highlights the non-uniform, sample-dependent nature of the learned routing strategy. See \Cref{sec:solverdecisions} for detailed routing trajectories. For readability, we display the first 100 iterations (out of 300 total).}
    \label{fig:router_strat}
\end{figure}
As shown in \Cref{tab:errorcomparison}, the greedy router achieves the lowest final error and AUC across all solver families, outperforming their single-solver and HINTS counterparts.  Moreover, its performance closely tracks that of the true greedy oracle, indicating that our surrogate-based training procedure effectively approximates the greedy strategy. This can be further seen in the routing visualizations in \Cref{sec:solverdecisions}, which demonstrate that the router learns qualitatively similar strategies to the oracle. The improvements over HINTS are particularly notable: while HINTS can improve over single-solver schedules, it exhibits oscillatory error behavior (See \Cref{fig:errorcomparison}) due to invoking the DeepONet at suboptimal times. In contrast, the greedy routers selectively applies updates that reduce error, resulting in near-monotone convergence. These gains arise from adapting routing decisions across \textit{both} samples for a given PDE and PDEs. For instance, the neural operator is used more frequently in the convection–diffusion setting than in Poisson (\Cref{fig:router_strat}), a behavior that must be manually specified in HINTS but is learned automatically by our method. This shows that improvements stem from selective use of the learned component, rather than frequent invocation as in HINTS.
\paragraph{Solver Ensembles Experiments:} We consider routing over a collection of more than two solvers. Specifically, we study ensembles $\mathcal{W}$ of increasing size, consisting of a trained DeepONet and a diverse set of numerical solvers, including Jacobi, GS, SymGS, Damped Jacobi $(\omega = 0.67)$ and Successive Over-relaxation (SOR) with $\omega = 1.5$ . Our goal is to evaluate whether enlarging the solver set yields gains beyond pairwise configurations. To this end, we compare $\mathrm{Router}(\mathrm{NO}\cup\mathcal{W})$ against the best pairwise version of our method, $\mathrm{Router}(\mathrm{NO}\cup\{\text{solver}\})$ for solver $\in \mathcal{W}$. As shown in \Cref{tab:errorcomparison2}, larger ensembles often yield improvements over both final error and AUC relative to the best pairwise configuration, indicating that additional solver diversity can be leveraged. The router achieves these gains by learning to route to each member of the ensemble for a non-trivial proportion of the iterates, as measured in \Cref{sec:solverdecisions}. Finally, despite the increased size of the action space, the learned router continues to produce improvements, demonstrating robustness of both the training procedure and the convex surrogate of \Cref{section:surrogate}; true greedy comparisons are deferred to \Cref{sec:moreerrorcomparison}. These results show that our method can scale to richer solver sets, enabling improved performance without sacrificing learnability. Additional experiments are provided in \Cref{sec:moreexperimentalresults}. 

\begin{table}[t]
\centering
\caption{
Final error and AUC of squared $L^2$ error for ensembles of increasing size. $\mathcal{W}$ denotes the set of numerical solvers and
$\mathrm{NO}$ the neural operator (DeepONet). Each $\mathrm{Router}(\mathrm{NO}\cup\mathcal{W})$
is compared against the best pairwise routing using our method, i.e., $\mathrm{Router}(\mathrm{NO}\cup \{\text{solver}\})$ for $\text{solver} \in \mathcal{W}$.
Values are mean ($\pm$ s.e.) over 128 test instances; Error is reported in
$\times 10^{-3}$.
$p$-values from paired one-sided $t$-tests comparing the AUC of the ensemble to that of the best pairwise router.
}
\label{tab:errorcomparison2}

\resizebox{\textwidth}{!}{
\begin{tabular}{llcccccc}
\hline
\multirow{2}{*}{\textbf{$\mathcal{W}$}}
& \multirow{2}{*}{\textbf{Method}}
& \multicolumn{3}{c}{\textbf{Poisson}}
& \multicolumn{3}{c}{\textbf{ConvDiff}} \\
\cline{3-8}
&
& $\|e^{(T)}_h\|\times 10^3$ & AUC & $p$-value (vs pairwise)
& $\|e^{(T)}_h\|\times 10^3$ & AUC & $p$-value (vs pairwise)\\
\hline

\multirow{2}{*}{$\{\text{Jacobi, GS}\}$}
& Best $\mathrm{Router}(\mathrm{NO}\cup \{\text{solver}\})$
& 0.002 (0.004) & 0.083 (0.152) & -
& $<10^{-3}$ & 0.031 (0.078) & -\\
& Router$(\mathrm{NO}\cup\mathcal{W})$
& 0.002 (0.002) & 0.078 (0.132) & $<10^{-3}$ 
& $<10^{-3}$ & 0.021 (0.040) & 0.004 \\
\hline

\multirow{2}{*}{$\{\text{Jacobi, GS, SymGS}\}$}
& Best $\mathrm{Router}(\mathrm{NO}\cup \{\text{solver}\})$
& $<10^{-3}$ & 0.052 (0.102) & -
& $<10^{-3}$ & 0.022 (0.039) & -  \\
& Router$(\mathrm{NO}\cup\mathcal{W})$
& $<10^{-3}$ & 0.041 (0.091) & -$<10^{-3}$
& $<10^{-3}$ & 0.018 (0.035) & $<10^{-3}$ \\
\hline

\multirow{2}{*}{$\{\text{Jacobi, GS, SymGS, Jacobi (0.67)}\}$}
& Best $\mathrm{Router}(\mathrm{NO}\cup \{\text{solver}\})$
& $<10^{-3}$ & 0.052 (0.102) & -
& $<10^{-3}$ & 0.022 (0.039) & -\\
& Router$(\mathrm{NO}\cup\mathcal{W})$
& 0.002 (0.003) & 0.046 (0.083) & 0.004
& $<10^{-3}$ & 0.019 (0.037) & 0.002 \\
\hline

\multirow{2}{*}{$\{\text{Jacobi, GS, SymGS, Jacobi (0.67), SOR (1.5)}\}$}
& Best $\mathrm{Router}(\mathrm{NO}\cup \{\text{solver}\})$
& $<10^{-3}$ & 0.052 (0.143) & -
& $<10^{-3}$ & 0.011 (0.030) & - \\
& Router$(\mathrm{NO}\cup\mathcal{W})$
& $<10^{-3}$ & 0.043 (0.082) & 0.03 
& $<10^{-3}$ & 0.008 (0.019) & 0.004\\
\hline

\end{tabular}
}
\end{table}

\section{Discussion} \label{sec:discussion}

We have introduced an adaptive method for selecting solvers that efficiently minimizes the final error of a PDE in iterative methods, opening many directions for future work. Our current DeepONet is trained without accounting for its downstream role in correction term prediction; jointly training the ML solver and router could yield larger gains. Another avenue is to use reinforcement learning to learn a cost-aware routing strategy that optimizes error under compute budgets. RL would additionally enable extending the ensemble routing to \textit{continuous} parameterization of solvers. Finally, framing routing as an online learning problem could enable adaptation to unknown deployment conditions.

\bibliography{mybib}
\bibliographystyle{plainnat}

\newpage

\onecolumn

\begin{appendices}

\crefalias{section}{appsec}
\crefalias{subsection}{appsec}

\section{Proofs for \Cref{sec:greedy}} \label{sec:proofgreedy}

\subsection{Proof of Proposition \ref{th:alpha=1}} \label{sec:proofalpha=1}

\begin{proposition} \label{th:alpha=1}
    Any prefix monotonically non-increasing sequence supermodular function $g$ is weakly supermodular with respect to all sequences $S \in \Omega^*$ with $\alpha(S) = 1$
\end{proposition}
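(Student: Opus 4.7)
The plan is to prove Proposition~\ref{th:alpha=1} by expanding $g(S) - g(S \oplus S')$ as a telescoping sum along $S'$ and then applying the sequence-supermodularity inequality (\ref{eqn:supermod}) term by term. This turns each increment of the telescoping sum into a quantity controlled by one of the ``marginal gains'' $g(S) - g(S \oplus S'_i)$ on the right-hand side of \Cref{eqn:alpha_supermod}, and summing yields the desired inequality with $\alpha(S') = 1$.

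Concretely, I would first write
\begin{equation*}
g(S) - g(S \oplus S') \;=\; \sum_{i=1}^{|S'|} \Bigl[\, g\bigl(S \oplus (S'_1,\dots,S'_{i-1})\bigr) - g\bigl(S \oplus (S'_1,\dots,S'_i)\bigr) \Bigr],
\end{equation*}
where the $i=1$ term uses the convention that $(S'_1,\dots,S'_0)$ is the empty sequence. Next, for each $i$, observe that $S \preceq S \oplus (S'_1,\dots,S'_{i-1})$, so I can apply the sequence-supermodularity hypothesis (\ref{eqn:supermod}) with the prefix pair $\bigl(S,\; S \oplus (S'_1,\dots,S'_{i-1})\bigr)$ and the single element $\omega = S'_i$. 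This gives
\begin{equation*}
g\bigl(S \oplus (S'_1,\dots,S'_{i-1})\bigr) - g\bigl(S \oplus (S'_1,\dots,S'_i)\bigr) \;\leq\; g(S) - g(S \oplus S'_i).
\end{equation*}
Substituting this bound into each term of the telescoping sum yields
\begin{equation*}
g(S) - g(S \oplus S') \;\leq\; \sum_{i=1}^{|S'|} \bigl[\, g(S) - g(S \oplus S'_i) \bigr],
\end{equation*}
which is exactly \Cref{eqn:alpha_supermod} with $\alpha(S') = 1$.

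The proof is essentially mechanical once the telescoping identity is written out, so I do not anticipate a genuine obstacle; the only subtlety is purely bookkeeping, namely verifying that the prefix relation $S \preceq S \oplus (S'_1,\dots,S'_{i-1})$ needed to invoke \Cref{eqn:supermod} holds for every $i \in [|S'|]$ (including $i=1$, where the prefix is $S$ itself). Prefix monotonic non-increase of $g$ is not used to derive the inequality, but it is part of the definition of weak supermodularity in \Cref{eqn:alpha_supermod}, and it is inherited directly from the hypothesis on $g$, so no additional work is required to transfer it. Finally, since $\alpha(S') = 1 \geq 1$, the required lower bound on the supermodularity ratio is trivially met.
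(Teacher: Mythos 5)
Your proposal is correct and follows essentially the same argument as the paper: the paper's proof likewise telescopes $g(S) - g(S \oplus S')$ along $S'$ and applies the sequence-supermodularity inequality term by term to each increment, yielding the sum bound with $\alpha(S') = 1$. The only cosmetic difference is that the paper adds a further (unneeded for this statement) bound by $|S'|$ times the maximum marginal gain.
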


\begin{proof}

This proof is adapted from \citet{liberty2017greedy}.

If $g$ is sequence supermodular then,
\begin{align*}
    &g(S) - g(S \ \oplus S' ) \\
    &= \sum_{i = 1}^{|S'|} g(S \ \oplus \left(S'_1, \dots, S'_{i - 1}\right)) - g(S \ \oplus \left(S'_1, \dots, S'_{i}\right)) \\
    &\overset{(a)}{\leq} \sum_{i = 1}^{|S'|} g(S) - g(S \ \oplus  S'_{i}) \\
    &\leq |S'|\max_{i \in [|S'|]} g(S) - g(S \ \oplus  S'_{i})
\end{align*}

(a) by supermodularity

\end{proof}

\subsection{Proof of \Cref{th:suboptgreedy}} \label{sec:proofsuboptgreedy}

\begin{customtheorem}{\ref{th:suboptgreedy}}
    Let $g: \Omega^* \to \mathbb{R}$ be a weakly supermodular function with respect to the optimal solution $O = \argmin_{S \in \Omega^T} h(S)$ with a supermodularity ratio of $\alpha(O)$ and 
    postfix monotonicity. Let the greedy solution of length $T$ be $S^T$. If $\phi_T(\alpha) =  \left(1 - \frac{1}{\alpha T}\right)^T$, then
    \begin{equation*}
        g(S^T) \leq  \left(1 - \phi_T(\alpha(O))\right)
        g(O) +  \phi_T(\alpha(O))g(\emptyset)
    \end{equation*}
\end{customtheorem}

\begin{proof}

This proof strategy is inspired by \citet{streeter2008online} and \citet{liberty2017greedy}

\begin{align*}
    g(S^t) -g(O) &\overset{(a)}{\leq} g(S^t) - g(S^t\ \oplus O) \\
    &=\sum_{i = 1}^{|O|} g(S^t \oplus \left(o_1, \dots o_{i - 1}\right)) -g(S^t\ \oplus \left(o_1, \dots, o_{i}\right))\\
    &\overset{(b)}{\leq}  \alpha(O)\sum_{i = 1}^{|O|} g(S^t) -g(S^t\ \oplus  o_{i}) \\
    &\leq \alpha(O)|O|\max_{i \in [|O|]} g(S^t) -g(S^t\ \oplus  o_{i}) \\
    &\leq \alpha(O) T g(S^t) -  \alpha(O) T\min_{\omega \in \Omega} g(S^t\ \oplus  \omega) \\
    &= \alpha(O) Tg(S^t) -  \alpha(O) Tg(S^{t + 1})
\end{align*}
(a) by $\mu$- postfix monotonicity, (b) by supermodularity. 

After rearranging the inequality, we get:

\begin{align*}
    g(S^{t + 1}) &\leq \frac{1}{\alpha(O) T}\left(g(O) -  \left(\alpha(O) T - 1\right)g(S^t)\right) \\
    &= \frac{1}{\alpha(O) T}g(O) + \left(1 - \frac{1}{\alpha(O) T}\right)g(S^t)
\end{align*}

When recursively applying this inequality, we get:

\begin{align*}
     g(S^{T}) &\leq \frac{1}{\alpha(O) T}g(O)\sum_{i = 0}^{T - 1}\left(1 - \frac{1}{\alpha(O) T}\right)^i + \left(1 - \frac{1}{\alpha(O) T}\right)^Tg(\emptyset) \\
     &= \left(1 - \left(1 - \frac{1}{\alpha(O) T}\right)^T\right)g(O) + \left(1 - \frac{1}{\alpha(O) T}\right)^Tg(\emptyset)
\end{align*}

\end{proof}

\subsection{Proof of Proposition \ref{th:weaklyalphasupermodular}} \label{sec:proofweaklyalphasupermodular}


\begin{customproposition}{\ref{th:weaklyalphasupermodular}}
     Suppose that for all $j \in [K]$, the error propagation function $I - C_j \circ \mathcal{L}_h^a$ is $\rho_j$-Lipschitz continuous with $\rho_j < 1$, and that $(I - C_j \circ \mathcal{L}_h^a) (0_N) = 0_N$. Then, the function $h$ is weakly supermodular with respect to the optimal solution $O$, with 
     \begin{equation*}
         \alpha(O) = \max\left\{\frac{4}{T - \sum_{i = 1}^{T} \rho_{O_i}^2} , 1\right\}  
     \end{equation*}
     Furthermore, if $I - C_j \circ \mathcal{L}_h^a$ is invertible for all $j \in [K]$, $h$ is also postfix monotonically non-increasing.
\end{customproposition}

\begin{proof}
For brevity, we use the notation $\left(g_1 \circ \dots \circ g_T\right)(x) = \circ_{t = 1}^{T} g_t(x)$, where composition is applied from right to left so that $g_T$ acts first. In this proof, we use a few properties of Lipschitz continuous functions:
\begin{itemize}
    \item \textbf{Property 1:} If $g$ is $\rho$-Lipschitz continuous and $g(0) = 0$, then $\|g(x)\|_2 = \|g(x) - 0\|_2 = \|g(x) - g(0)\|_2 \leq \rho \|x - 0\|_2 = \rho\|x\|_2$
    \item \textbf{Property 2:} If $g_1$ and $g_2$ are Lipschitz continuous functions with Lipschitz constants of $\rho_1$ and $\rho_2$ respectively, the Lipschitz constant of $g_1 + g_2$ and $g_1 - g_2$ is $\rho_1 + \rho_2$.
    \item \textbf{Property 3:} If $g_1$ and $g_2$ are Lipschitz continuous functions with Lipschitz constants of $\rho_1$ and $\rho_2$ respectively, the Lipschitz constant of $g_1 \circ g_2$ is $\rho_1\rho_2$.
\end{itemize}

In order to prove weakly-$\alpha$-supermodularity, we must first prove prefix monotonicity.

\textbf{Prefix monotonicity: } Let $S \preceq S'$ where $S' = S \ \oplus N$.
\begin{align*}
    h(S') &= \left \|\circ_{t = |S'|}^1\left(I - C_{S'_t} \circ \mathcal{L}_h^a\right) \left(e^{(0)}_h\right)\right\|^2_2 \\
    &= \left \|\circ_{t = |S'|}^{|S| + 1}\left(I - C_{S'_t}  \circ \mathcal{L}_h^a\right)\circ \circ_{t = |S|}^1\left(I - C_{S'_t}  \circ \mathcal{L}_h^a\right) \left(e^{(0)}_h\right)\right\|^2_2 \\
    &\overset{(a)}{\leq}\left( \prod_{t = |S| + 1 }^{|S'|} \rho_{S'_t}^2 \right)\left\|\circ_{t = |S|}^1\left(I - C_{S_t} \circ \mathcal{L}_h^a\right) \left(e^{(0)}_h\right)\right\|^2_2 \\
    &\leq h(S)
\end{align*}
(a) by Property 1

\textbf{Weak supermodularity: } 
We will upper bound $\alpha(O)$ by providing an upper bound for $ h(S) - h(S\ \oplus O) $ and a lower bound for $\sum_{i = 1}^T h(S) - h(S\ \oplus O_i) $. 
\begin{align*}
    h(S) - h(S\ \oplus O) &=  \left \|\circ_{t = |S|}^1\left(I - C_{S_t} \circ \mathcal{L}_h^a\right) \left(e^{(0)}_h\right)\right\|^2_2 -  \left \|\circ_{t = |O|}^{1}\left(I- C_{O_t} \circ\mathcal{L}_h^a\right) \circ \circ_{t = |S|}^1\left(I - C_{S_t} \circ \mathcal{L}_h^a\right)\left(e^{(0)}_h\right)\right\|^2_2 \\
    &\overset{(a)}{\leq} \left \|\circ_{t = |S|}^1\left(I - C_{S_t} \circ \mathcal{L}_h^a\right) \left(e^{(0)}_h\right)-\circ_{t = |O|}^{1}\left(I - C_{O_t} \circ \mathcal{L}_h^a\right) \circ \circ_{t = |S|}^1\left(I - C_{S_t} \circ \mathcal{L}_h^a\right)\left(e^{(0)}_h\right)\right\|^2_2 \\
    &=  \left \| \left(I -\circ_{t = |O|}^{1}\left(I - C_{O_t}\circ\mathcal{L}_h^a\right) \right)\circ  \circ_{t = |S|}^1\left(I - C_{S_t}\circ\mathcal{L}_h^a\right) \left(e^{(0)}_h\right)\right\|^2_2 \\
    &\overset{(b)}{\leq}  \left(1 + \prod_{t = 1}^{|O|}\rho_{O_t}\right)^2\left\| \circ_{t = |S|}^1\left(I - C_{S_t} \circ \mathcal{L}_h^a\right) \left(e^{(0)}_h\right)\right\|^2_2 \\
    &\overset{(c)}{\leq } 4 \left\| \circ_{t = |S|}^1\left(I - C_{S_t} \circ \mathcal{L}_h^a\right) \left(e^{(0)}_h\right)\right\|^2_2
\end{align*}

(a) by reverse triangle property and $h(S) - h(S\ \oplus S') > 0$ by prefix monotonicity, (b) since the Lipschitz constant of $I -\circ_{t = |S'|}^{1}\left(I - C_{S'_t}\circ\mathcal{L}_h^a\right)$ is $1 + \prod_{t = 1}^{|S'|}\rho_{S'_t}$ by Property 2 and 3, (c) since $\rho_j < 1$

To lower bound $\sum_{i = 1}^{|O|} h(S) - h(S\ \oplus O_i) $
\begin{align*}
    \sum_{i = 1}^{|O|} h(S) - h(S\ \oplus O_i) &= \sum_{i = 1}^{|O|} \left \|\circ_{t = |S|}^1\left(I - C_{S_t} \circ \mathcal{L}_h^a\right) \left(e^{(0)}_h\right)\right\|^2_2 - \left \|\left(I - C_{O_i} \circ \mathcal{L}_h^a\right) \circ \circ_{t = |S|}^1\left(I - C_{S_t} \circ\mathcal{L}_h^a\right) \left(e^{(0)}_h\right)\right\|^2_2  \\
    &\geq \sum_{i = 1}^{|O|} \left \|\circ_{t = |S|}^1\left(I - C_{S_t} \circ\mathcal{L}_h^a\right) \left(e^{(0)}_h\right)\right\|^2_2 - \rho_{O_i}^2\left\|\circ_{t = |S|}^1\left(I - C_{S_t}\circ\mathcal{L}_h^a\right) \left(e^{(0)}_h\right)\right\|^2_2 \\
    &= \left \|\circ_{t = |S|}^1\left(I_N - C_{S_t}\circ\mathcal{L}_h^a\right) \left(e^{(0)}_h\right)\right\|^2_2 \left(T - \sum_{i = 1}^{T} \rho_{O_i}^2\right) 
\end{align*}

Finally,

\begin{align*}
    \frac{h(S) - h(S\ \oplus O)}{T\max_i h(S) - h(S\ \oplus O_i)} &\leq \max\left\{ \frac{4\left\| \circ_{t = |S|}^1\left(I - C_{S_t} \circ \mathcal{L}_h^a\right) \left(e^{(0)}_h\right)\right\|^2_2}{\left \|\circ_{t = |S|}^1\left(I_N - C_{S_t}\circ\mathcal{L}_h^a\right) \left(e^{(0)}_h\right)\right\|^2_2 \left(T - \sum_{i = 1}^{T} \rho_{O_i}^2\right) }, 1\right\}\\
    &= \max\left\{\frac{4}{T - \sum_{i = 1}^{T} \rho_{O_i}^2} , 1\right\} 
\end{align*}

\textbf{Postfix monotonicity: } Let $S' = S \ \oplus N$. 



\begin{align*}
    h(S') &= \left \|\circ_{t = |S'|}^1\left(I - C_{S'_t}\circ\mathcal{L}_h\right) e^{(0)}_h\right\|^2_2 \\
    &= \left \|\circ_{t = |N|}^{1}\left(I - C_{N_t}\circ \mathcal{L}_h\right)\circ \circ_{t = |S|}^1\left(I - C_{S_t}\circ \mathcal{L}_h\right) e^{(0)}_h\right\|^2_2 \\
    &\overset{(a)}{=}  \left \|\circ_{t = |N|}^{1}\left(I - C_{N_t} \circ \mathcal{L}_h\right)\circ \circ_{t = |S|}^1\left(I - C_{S_t}\circ\mathcal{L}_h\right)\circ\left(\circ_{t = |N|}^{1}\left(I - C_{N_t}\circ\mathcal{L}_h\right)\right)^{-1}\circ \circ_{t = |N|}^{1}\left(I - C_{N_t}\circ\mathcal{L}_h\right) e^{(0)}_h\right\|^2_2 \\
    &\leq \prod_{t = 1}^{|N|} \rho^2_{N_t} \prod_{t = 1}^{|S|} \rho^2_{S_t} \prod_{t = 1}^{|N|} \rho^{-2}_{N_t}  \left\|\circ_{t = |N|}^{1}\left(I - C_{N_t}\circ\mathcal{L}_h\right) e^{(0)}_h\right\|^2_2 \\
    &\leq h(N)
\end{align*}


(a) due the invertibility of $I_N - C_j\mathcal{L}_h$

\end{proof}

\subsection{Proof of \Cref{th:simultaneousallocation}} \label{sec:proofsimultaneousallocation}

\begin{lemma} \label{th:simultaneousallocation}
     Let $\left(I - C_j \mathcal{L}_{h}^a\right) = P\Lambda_jP^{-1}$ where $P$ is an orthogonal matrix and $\Lambda_j = \mathrm{diag}(\lambda_{j1}, \dots, \lambda_{jN})$. If $P^{-1}e^{(0)}_h = z$, then the following equality holds:
     \begin{equation} \label{eq:simultaneous}
          h(S) = \sum_{i = 1}^N z_i^2 \prod_{j = 1}^K \lambda_{ji}^{2m_j(S)}
     \end{equation}
     where $m_j(S) = \sum_{t= 1}^{|S|}\1_{S_t = j}$
\end{lemma}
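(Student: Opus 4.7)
The plan is to exploit the common eigenbasis: since every error-propagation map $I - C_j\mathcal{L}_h^a$ diagonalizes as $P\Lambda_j P^{-1}$ with the \emph{same} $P$, their composition diagonalizes with a product of the $\Lambda_j$'s, and diagonal matrices commute, so only the multiplicities $m_j(S)$ survive. The orthogonality of $P$ then converts the squared norm into a sum over eigen-coordinates.

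First I would unroll the composition. Since $P^{-1}P = I$,
\begin{equation*}
\left(I - C_{S_t}\mathcal{L}_h^a\right)\circ\left(I - C_{S_{t-1}}\mathcal{L}_h^a\right) = P\Lambda_{S_t}P^{-1}P\Lambda_{S_{t-1}}P^{-1} = P\Lambda_{S_t}\Lambda_{S_{t-1}}P^{-1},
\end{equation*}
and iterating gives
\begin{equation*}
\bigcirc_{t=|S|}^{1}\bigl(I - C_{S_t}\mathcal{L}_h^a\bigr) = P\left(\prod_{t=1}^{|S|}\Lambda_{S_t}\right)P^{-1}.
\end{equation*}
Next I would observe that the $\Lambda_{S_t}$ are diagonal and therefore mutually commute, so regrouping factors by solver index yields $\prod_{t=1}^{|S|}\Lambda_{S_t} = \prod_{j=1}^{K}\Lambda_j^{m_j(S)}$, which is itself a diagonal matrix with $i$-th entry $\prod_{j=1}^{K}\lambda_{ji}^{m_j(S)}$.

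Finally I would apply the composed map to $e_h^{(0)}$ and take the squared norm. Using $P^{-1}e_h^{(0)}=z$ and the isometry $\|Pv\|_2=\|v\|_2$ (which holds because $P$ is orthogonal), we get
\begin{equation*}
h(S) = \left\|P\prod_{j=1}^{K}\Lambda_j^{m_j(S)}z\right\|_2^{2} = \left\|\prod_{j=1}^{K}\Lambda_j^{m_j(S)}z\right\|_2^{2} = \sum_{i=1}^{N} z_i^{2}\prod_{j=1}^{K}\lambda_{ji}^{2m_j(S)},
\end{equation*}
which is the claimed identity.

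There is no real obstacle here; the one thing worth being careful about is the commutativity step, which is essential for collapsing the product indexed by $t$ into a product indexed by $j$ with exponents $m_j(S)$. Without the shared eigenbasis this collapse fails, which is precisely why the hypothesis $\left(I - C_j\mathcal{L}_h^a\right) = P\Lambda_j P^{-1}$ for a common $P$ is needed.
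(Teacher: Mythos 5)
Your proof is correct and follows essentially the same route as the paper's: telescope the $P^{-1}P$ factors to get $P\bigl(\prod_t \Lambda_{S_t}\bigr)P^{-1}$, drop the outer $P$ by orthogonality, and collapse the time-indexed product into solver-indexed powers via commutativity of diagonal matrices (the paper does this regrouping at the scalar level after expanding coordinates, which is an immaterial difference).
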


\begin{proof}

\begin{align*}
    h(S) &= \left\|\prod_{t = |S|}^1\left(I_N - C_{S_t} \mathcal{L}_{h}^a\right)  e^{(0)}_h \right\|^2 \\
    &=  \left\|\prod_{t = |S|}^1\left(P\Lambda_{S_t}P^{-1}\right)  e^{(0)}_h \right\|^2 \\
    &=  \left\|P\prod_{t = |S|}^1\Lambda_{S_t}P^{-1}  e^{(0)}_h \right\|^2 \\
    &= \left\|\prod_{t = |S|}^1\Lambda_{S_t}P^{-1}  e^{(0)}_h \right\|^2\\
    &=  \sum_{i = 1}^N z_i^2 \prod_{t = T}^1 \lambda_{S_ti}^2 \\
    &=  \sum_{i = 1}^N z_i^2 \prod_{j = 1}^K \lambda_{ji}^{2m_j(S)} 
\end{align*}

\end{proof}

\subsection{Proof of \Cref{th:simultaneoussupermodular}} \label{sec:proofsimultaneoussupermodular}
\begin{customproposition}{\ref{th:simultaneoussupermodular}}
    Let $\|I_N - C_j\mathcal{L}_h^a\| \leq 1$ for all $j \in [K]$ and $\left(I - C_j \mathcal{L}_{h}^a\right) = P\Lambda_jP^{-1}$. Then, $h$ is supermodular. 
\end{customproposition}

\begin{proof}

Let $S \preceq S'$ where $S' = S \ \oplus B$. By \Cref{th:simultaneousallocation},
\begin{align*}
    h(S) &= \sum_{i = 1}^N z_i^2 \prod_{j = 1}^K \lambda_{ji}^{2m_j(S)} 
\end{align*}
where $m_j(S) = \sum_{t = 1}^{|S|} \1_{S_t = j}$ be the number of times a sequence $S$ calls the solver $j$. Recall that $h$ is considered sequence supermodular if $\forall S', S \in \Omega^*$ such that $S \preceq S'$, it holds that
\begin{align*}
    h(S) - h(S \ \oplus \omega ) \geq h(S') - h(S' \ \oplus \omega )
\end{align*}
\begin{align*}
    h(S) - h(S \ \oplus \omega) &= \sum_{i = 1}^N z_i^2 \prod_{j = 1}^K \lambda_{ji}^{2m_j(S)} - \sum_{i = 1}^N z_i^2 \lambda_{\omega i}^{2}\prod_{k = 1}^K \lambda_{ji}^{2m_j(S)} \\
    &=  \sum_{i = 1}^N \left(1 - \lambda_{\omega i}^2\right) z_i^2 \prod_{j = 1}^K \lambda_{ji}^{2m_j(S)}
\end{align*}
Similarly, 
\begin{align*}
    h(S') - h(S \ \oplus \omega) &= \sum_{i = 1}^N \left(1 - \lambda_{\omega i}^2\right) z_i^2 \prod_{j = 1}^K \lambda_{ji}^{2m_j(S')} \\
    &\overset{(a)}{=}  \sum_{i = 1}^N \left(1 - \lambda_{\omega i}^2\right) z_i^2 \prod_{j = 1}^K \lambda_{ji}^{2\left(m_j(S) + m_j(B)\right)} \\
    &\overset{(b)}{\leq } \sum_{i = 1}^N \left(1 - \lambda_{\omega i}^2\right) z_i^2 \prod_{j = 1}^K \lambda_{ji}^{2m_j(S)} \\
    &= h(S) - h(S \ \oplus \omega)
\end{align*}

(a) Since $m_j(S') = \sum_{t = 1}^{|S'|} \1_{S_t' = j} = \sum_{t = 1}^{|S|} \1_{S_t' = j} + \sum_{t = |S|}^{|S'|} \1_{S_t' = j} = \sum_{t = 1}^{|S|} \1_{S_t = k} + \sum_{t = 1}^{|B|} 1_{B_t = j} = m_j(S) + m_j(B)$, (b) since $\rho(I_N - C_j \mathcal{L}_h^a) < 1$ and $m_j(B) \geq 0$ for all $j \in [K]$

\end{proof}

\section{Proofs for \Cref{sec:approxgreedy}} \label{sec:proofapproxgreedy}

\subsection{Proof of \Cref{th:routingequivalence}} \label{sec:proofroutingequivalence}

\begin{lemma} \label{th:routingequivalence}

For any set of preconditioning functions $\mathcal{C}$, any discrete operator $\mathcal{L}_h^a$, any router $r$, any $a_h, f_h, u_h^{(t)}, u_h \in \mathcal{A} \times \mathcal{F} \times \mathcal{U} \times \mathcal{U}$, the following equality holds true:
\begin{align*}
     l_{\text{route}}\left(r, a_h, f_h, u^{(t)}_h, u_h\right) = &\sum_{j = 1}^K \sum_{k = 1}^K\left\|\left( I - C_{k} \circ \mathcal{L}_{h}^a\right)\left(u_h - u^{(t)}_h\right)\right\|^2_2\1_{k\neq j}\1_{r(a_h, f_h, u^{(t)}_h) \neq j} \\
     &-\left(K - 2\right)\sum_{j = 1}^K\left\|\left( I - C_{j} \circ \mathcal{L}_{h}^a\right)\left(u_h - u^{(t)}_h\right)\right\|^2_2
\end{align*}
    
\end{lemma}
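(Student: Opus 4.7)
The plan is a direct algebraic identity: evaluate both sides pointwise and show they collapse to the same scalar. Throughout, abbreviate $c_k := \lVert (I - C_k \circ \mathcal{L}_h^a)(u_h - u_h^{(t)})\rVert_2^2$ and $r^\star := r(a_h, f_h, u_h^{(t)})$. Since $r^\star \in [K]$ is a single index, the left-hand side satisfies $\sum_{j=1}^K c_j \1_{r^\star = j} = c_{r^\star}$, so the goal reduces to showing that the right-hand side also equals $c_{r^\star}$.

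The main step is to swap the order of summation in the double sum and count. First I would rewrite
\begin{equation*}
\sum_{j=1}^K \sum_{k=1}^K c_k \1_{k \neq j} \1_{r^\star \neq j} \;=\; \sum_{k=1}^K c_k \bigl|\{ j \in [K] : j \neq k,\; j \neq r^\star \}\bigr|.
\end{equation*}
Split by whether $k$ equals $r^\star$ or not: the index set has cardinality $K-1$ when $k = r^\star$ (the two forbidden values coincide) and cardinality $K-2$ otherwise. Substituting yields
\begin{equation*}
\sum_{j=1}^K \sum_{k=1}^K c_k \1_{k \neq j} \1_{r^\star \neq j} \;=\; (K-1)\, c_{r^\star} + (K-2) \sum_{k \neq r^\star} c_k \;=\; c_{r^\star} + (K-2) \sum_{k=1}^K c_k.
\end{equation*}
Subtracting the correction term $(K-2)\sum_{j=1}^K c_j$ from the right-hand side of the claim leaves exactly $c_{r^\star}$, matching the left-hand side.

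The hard part is essentially bookkeeping: one must be careful that when $k = r^\star$, the two indicators $\1_{k \neq j}$ and $\1_{r^\star \neq j}$ impose a single constraint rather than two, which is precisely what makes the $(K-2)$ offset correct. No structural assumption on $\mathcal{C}$, $\mathcal{L}_h^a$, or $r$ is needed, so the identity holds pointwise for every admissible tuple $(a_h, f_h, u_h^{(t)}, u_h)$.
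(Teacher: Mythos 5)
Your proof is correct: the counting argument (cardinality $K-1$ when $k=r^\star$, $K-2$ otherwise) gives exactly the double sum $c_{r^\star}+(K-2)\sum_k c_k$, and the correction term cancels the latter, matching the paper's identity. The paper proves the same pointwise identity by expanding the left-hand side with $\1_{r=j}=1-\1_{r\neq j}$ and adding/subtracting $(K-1)\sum_k c_k$, so yours is essentially the same elementary verification, just run in the reverse direction and organized via a cleaner swap-and-count of the summation indices.
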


\begin{proof}
Note that $\sum_{j = 1}^K \1_{r(a_h, f_h, u^{(t)}_h) \neq j}= K - 1$
\begin{align*}
    l_{\text{route}}\left(r, a_h, f_h, u^{(t)}_h, u_h\right) &= \sum_{j = 1}^K \left\|\left( I - C_{j} \circ \mathcal{L}_{h}^a\right)\left(u_h - u^{(t)}_h\right)\right\|^2_2\1_{r(a_h, f_h, u^{(t)}_h) = j} \\
    &= \sum_{j = 1}^K \left\|\left( I - C_{j} \circ \mathcal{L}_{h}^a\right)\left(u_h - u^{(t)}_h\right)\right\|^2_2 - \sum_{j = 1}^K \left\|\left( I - C_{j} \circ \mathcal{L}_{h}^a\right)\left(u_h - u^{(t)}_h\right)\right\|^2_2\1_{r(a_h, f_h, u^{(t)}_h) \neq j} \\
    &= \sum_{j = 1}^K \left\|\left( I - C_{j} \circ \mathcal{L}_{h}^a\right)\left(u_h - u^{(t)}_h\right)\right\|^2_2 - \sum_{j = 1}^K \left\|\left( I - C_{j} \circ \mathcal{L}_{h}^a\right)\left(u_h - u^{(t)}_h\right)\right\|^2_2\1_{r(a_h, f_h, u^{(t)}_h) \neq j} \\
    &+ \left(K - 1\right)\sum_{k = 1}^K\left\|\left( I - C_{k} \circ \mathcal{L}_{h}^a\right)\left(u_h - u^{(t)}_h\right)\right\|^2_2 - \left(K - 1\right)\sum_{k = 1}^K\left\|\left( I - C_{k} \circ \mathcal{L}_{h}^a\right)\left(u_h - u^{(t)}_h\right)\right\|^2_2 \\
    &= \sum_{j = 1}^K \left\|\left( I - C_{j} \circ \mathcal{L}_{h}^a\right)\left(u_h - u^{(t)}_h\right)\right\|^2_2 - \sum_{j = 1}^K \left\|\left( I - C_{j} \circ \mathcal{L}_{h}^a\right)\left(u_h - u^{(t)}_h\right)\right\|^2_2\1_{r(a_h, f_h, u^{(t)}_h) \neq j} \\
    &+ \sum_{j = 1}^K\1_{r(a_h, f_h, u^{(t)}_h) \neq j}\sum_{k = 1}^K\left\|\left( I - C_{k} \circ \mathcal{L}_{h}^a\right)\left(u_h - u^{(t)}_h\right)\right\|^2_2 \\
    &- \left(K - 1\right)\sum_{k = 1}^K\left\|\left( I - C_{k} \circ \mathcal{L}_{h}^a\right)\left(u_h - u^{(t)}_h\right)\right\|^2_2 \\
\end{align*}
\begin{align*}
    &= \sum_{j = 1}^K\left(\sum_{k = 1}^K\left\|\left( I - C_{k} \circ \mathcal{L}_{h}^a\right)\left(u_h - u^{(t)}_h\right)\right\|^2_2  - \left\|\left( I - C_{j} \circ \mathcal{L}_{h}^a\right)\left(u_h - u^{(t)}_h\right)\right\|^2_2\right)\1_{r(a_h, f_h, u^{(t)}_h) \neq j} \\
    &- \left(K - 2\right)\sum_{k = 1}^K\left\|\left( I - C_{k} \circ \mathcal{L}_{h}^a\right)\left(u_h - u^{(t)}_h\right)\right\|^2_2 \\
    &=\sum_{j = 1}^K\sum_{k = 1}^K\left\|\left( I - C_{k} \circ \mathcal{L}_{h}^a\right)\left(u_h - u^{(t)}_h\right)\right\|^2_2\1_{k\neq j}\1_{r(a_h, f_h, u^{(t)}_h) \neq j} \\
    &- \left(K - 2\right)\sum_{k = 1}^K\left\|\left( I - C_{k} \circ \mathcal{L}_{h}^a\right)\left(u_h - u^{(t)}_h\right)\right\|^2_2
 \end{align*}
\end{proof}

\subsection{Proof of \Cref{th:upperbound}} \label{sec:proofupperbound}

\begin{proposition} \label{th:upperbound}
    For any router $r$ defined by $r(a, f, u^{(t)}) = \text{argmax}_{j \in [K]} g_j(a, f, u^{(t)})$, any $a_h \in \mathcal{A}$, $f_h \in \mathcal{F}$, and $u_h^{(t)}, u_h \in \mathcal{U}$, the routing loss $l_{\text{route}}$ satisfies:
    \begin{equation*}
        \log(2)l_{\text{route}}\left(r, a_h, f_h, u^{(t)}_h, u_h\right)\leq \Psi(\mathbf{g}, a_h, f_h, u_h^{(t)}, u_h) 
    \end{equation*}
\end{proposition}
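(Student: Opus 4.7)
\textbf{Proof plan for Proposition \ref{th:upperbound}.} Let $p_j = \exp(g_j(a_h,f_h,u_h^{(t)}))/\sum_{m=1}^K \exp(g_m(a_h,f_h,u_h^{(t)}))$ denote the softmax probabilities and write $r \in \arg\max_{j \in [K]} g_j(a_h,f_h,u_h^{(t)})$. By construction $l_{\text{route}}(r,a_h,f_h,u^{(t)}_h,u_h) = \Tilde{c}_r(a_h,u^{(t)}_h,u_h)$, since exactly one indicator in its defining sum is nonzero. I would rewrite $\Psi$ as a weighted cross-entropy
\begin{equation*}
\Psi(\mathbf{g},a_h,f_h,u^{(t)}_h,u_h) = \sum_{j=1}^K (-\log p_j) \sum_{k \neq j} \Tilde{c}_k(a_h,u^{(t)}_h,u_h),
\end{equation*}
noting every summand is non-negative because $-\log p_j \ge 0$ and $\Tilde{c}_k \ge 0$.

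Next, I would lower bound $\Psi$ by keeping only the terms in which $k = r$, which contribute $\Tilde{c}_r \sum_{j \neq r} (-\log p_j)$, and dropping all other non-negative terms:
\begin{equation*}
\Psi(\mathbf{g},a_h,f_h,u^{(t)}_h,u_h) \;\ge\; \Tilde{c}_r(a_h,u^{(t)}_h,u_h) \sum_{j \neq r} \bigl(-\log p_j\bigr).
\end{equation*}
It then suffices to show $\sum_{j \neq r}(-\log p_j) \ge \log 2$. Since $r$ is the argmax, $p_r = \max_j p_j \ge 1/K$, so $\sum_{j \neq r} p_j = 1 - p_r \le (K-1)/K$. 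By the AM--GM inequality,
\begin{equation*}
\prod_{j \neq r} p_j \;\le\; \left(\frac{1-p_r}{K-1}\right)^{K-1} \;\le\; \left(\frac{1}{K}\right)^{K-1},
\end{equation*}
so that $\sum_{j \neq r}(-\log p_j) = -\log \prod_{j \neq r} p_j \ge (K-1)\log K$.

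Finally, I would verify $(K-1)\log K \ge \log 2$ for every $K \ge 2$: equality holds at $K=2$, and monotonicity in $K$ handles the rest. Chaining the two bounds then yields $\log(2)\,l_{\text{route}} \le \Tilde{c}_r \sum_{j \neq r}(-\log p_j) \le \Psi$, as required. The only mildly delicate step is the AM--GM bound plus the confirmation that the resulting constant $(K-1)\log K$ never drops below $\log 2$; everything else is bookkeeping. The $K = 1$ case is degenerate (both sides are $0$, or $\Tilde{c}_r = 0$ is forced), so the statement is implicitly about $K \ge 2$.
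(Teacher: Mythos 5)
Your proof is correct, and it takes a different route from the paper's. The paper first invokes an algebraic identity (its Lemma on routing equivalence) to rewrite $l_{\text{route}}$ as $\sum_{j}\bigl(\sum_{k\neq j}\Tilde{c}_k\bigr)\1_{r\neq j}-(K-2)\sum_j \Tilde{c}_j$, drops the nonnegative subtracted term (implicitly using $K\ge 2$), and then bounds each indicator via $\log(2)\,\1_{r\neq j}\le -\log p_j$, using that any non-argmax index has softmax probability at most $1/2$. You instead observe directly that $l_{\text{route}}=\Tilde{c}_r$, lower bound $\Psi$ by the block of terms weighted by $\Tilde{c}_r$ (all summands being nonnegative), and then show $\sum_{j\neq r}(-\log p_j)\ge (K-1)\log K\ge \log 2$ via AM--GM and $p_r\ge 1/K$. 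This avoids the equivalence lemma entirely, and your constant $(K-1)\log K$ is strictly stronger than $\log 2$ whenever $K>2$, so your argument in fact proves a sharper bound; the paper's term-by-term indicator bound, on the other hand, is what it reuses later in the consistency analysis, where the full weighted decomposition (not just the $\Tilde{c}_r$ block) is needed. One cosmetic remark: you could also get $\sum_{j\neq r}(-\log p_j)\ge\log 2$ more simply by noting that for any single $j\neq r$ one has $p_j\le p_r$ and $p_j+p_r\le 1$, hence $p_j\le 1/2$, which is exactly the observation the paper uses; the AM--GM detour is fine but not needed. Your handling of the degenerate $K=1$ case is also consistent with the paper's implicit assumption $K\ge 2$.
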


\begin{proof}

By \Cref{th:routingequivalence}, we know that

\begin{align*}
    \log(2)l_{\text{route}}\left(r, a_h, f_h, u^{(t)}_h, u_h\right) = &\log(2)\sum_{j = 1}^K \sum_{k = 1}^K\left\|\left( I - C_{k} \circ \mathcal{L}_{h}^a\right)\left(u_h - u^{(t)}_h\right)\right\|^2_2\1_{k\neq j}\1_{r(a_h, f_h, u^{(t)}_h) \neq j} \\
     &-\log(2)\left(K - 2\right)\sum_{j = 1}^K\left\|\left( I - C_{j} \circ \mathcal{L}_{h}^a\right)\left(u_h - u^{(t)}_h\right)\right\|^2_2 \\
     &\leq \log(2)\sum_{j = 1}^K \sum_{k = 1}^K\left\|\left( I - C_{k} \circ \mathcal{L}_{h}^a\right)\left(u_h - u^{(t)}_h\right)\right\|^2_2\1_{k\neq j}\1_{r(a_h, f_h, u^{(t)}_h) \neq j} \\
     &\overset{(a)}{\leq} -\sum_{j = 1}^K \sum_{k = 1}^K\left\|\left( I - C_{k} \circ \mathcal{L}_{h}^a\right)\left(u_h - u^{(t)}_h\right)\right\|^2_2\1_{k\neq j}\log\left(\frac{\exp\left(g_j(a, f, u^{(t)})\right)}{\sum_{k = 1}^K\exp\left(g_k(a, f, u^{(t)})\right)}\right) \\
     &= \Psi(\mathbf{g}, a_h, f_h, u_h^{(t)}, u_h)
\end{align*}
(a) if $r(a_h, f_h, u^{(t)}_h) \neq j$, $\frac{\exp\left(g_j(a, f, u^{(t)})\right)}{\sum_{k = 1}^K\exp\left(g_k(a, f, u^{(t)})\right)} < 0.5$ which implies that $-\log\left(\frac{\exp\left(g_j(a, f, u^{(t)})\right)}{\sum_{k = 1}^K\exp\left(g_k(a, f, u^{(t)})\right)}\right)  \geq \log(2)\1_{r(a_h, f_h, u^{(t)}_h) \neq j}$

\end{proof}

\subsection{Proof of \Cref{th:consistency}} \label{sec:proofconsistency}

\begin{customtheorem}{\ref{th:consistency}}
    Let $\Tilde{c}_j(a_h, u^{(t)}_h, u_h) < \Bar{E} < \infty$  for all $j \in [K]$. If there exists $j\in [K]$ such that $\Tilde{c}_j(a_h, u^{(t)}_h, u_h) > E_{min} > 0$, then, for any collection of solvers $\{C_j\}_{j= 1}^K$ and linear discrete operator $\mathcal{L}_h^a$, $\Psi$ is Bayes consistent surrogate for $l_{\text{route}}$.
\end{customtheorem}

\begin{proof}

For a given $a_h, f_h$, let $u_h$ be $\mathcal{G}_h\left(a_h, f_h\right)$ where $\mathcal{G}_h$ denotes the solution operator acting on the grid $G_h$. Furthermore, let's consider routers of the form
\begin{equation*}
    r(a, f, u^{(t)}) = \text{argmax}_{j \in [K]} g_j(a, f, u^{(t)})
\end{equation*}

For a given $a_h, f_h, u_h^{(t)} \in \mathcal{A} \times \mathcal{F} \times \mathcal{U}$, let the optimal loss under $l_{\text{route}}$ be $l_{\text{route}}^*\left(a_h, f_h, u^{(t)}_h\right) = \inf_{\Tilde{r}} l_{\text{route}}\left(\tilde{r}, a_h, f_h, u^{(t)}_h, \mathcal{G}_h\left(a_h, f_h\right)\right)$. Similarly, let the optimal loss under $\Psi$ be $\Psi^*\left(a_h, f_h, u^{(t)}_h\right) = \inf_{\Tilde{\mathbf{g}}} \Psi\left(\tilde{\mathbf{g}}, a_h, f_h, u^{(t)}_h, \mathcal{G}_h\left(a_h, f_h\right)\right)$. Let $B_j(a_h, f_h, u^{(t)}_h) = \sum_{k = 1}^K\left\|\left( I - C_{k} \circ \mathcal{L}_{h}^a\right)\left(\mathcal{G}_h\left(a_h, f_h\right) - u^{(t)}_h\right)\right\|^2_2\1_{k\neq j}$.

\begin{align*}
    &l_{\text{route}}\left(r, a_h, f_h, u^{(t)}_h, \mathcal{G}_h\left(a_h, f_h\right)\right) - l_{\text{route}}^*\left(a_h, f_h, u^{(t)}_h\right) \\
    &\overset{(a)}{=} \sum_{j = 1}^K\sum_{k = 1}^K\left\|\left( I - C_{k} \circ \mathcal{L}_{h}^a\right)\left(u_h - u^{(t)}_h\right)\right\|^2_2\1_{k\neq j}\1_{r(a_h, f_h, u^{(t)}_h) \neq j}  -\left(K - 2\right)\sum_{j = 1}^K\left\|\left( I - C_{j} \circ \mathcal{L}_{h}^a\right)\left(u_h - u^{(t)}_h\right)\right\|^2_2 \\
    &\quad- \inf_{\Tilde{r}} \sum_{j = 1}^K\sum_{k = 1}^K\left\|\left( I - C_{k} \circ \mathcal{L}_{h}^a\right)\left(u_h - u^{(t)}_h\right)\right\|^2_2\1_{k\neq j}\1_{\tilde{r}(a_h, f_h, u^{(t)}_h) \neq j}  +\left(K - 2\right)\sum_{j = 1}^K\left\|\left( I - C_{j} \circ \mathcal{L}_{h}^a\right)\left(u_h - u^{(t)}_h\right)\right\|^2_2 \\
    &= \sum_{j = 1}^K B_j(a_h, f_h, u^{(t)}_h)\1_{r(a_h, f_h, u^{(t)}_h) \neq j} - \inf_{\Tilde{r}} \sum_{j = 1}^KB_j(a_h, f_h, u^{(t)}_h)\1_{\tilde{r}(a_h, f_h, u^{(t)}_h) \neq j}  \\
    &= \sum_{k = 1}^K B_k(a_h, f_h, u^{(t)}_h)\left(\sum_{j = 1}^K \frac{B_j(a_h, f_h, u^{(t)}_h)}{\sum_{k = 1}^K B_k(a_h, f_h, u^{(t)}_h)}\1_{r(a_h, f_h, u^{(t)}_h) \neq j} - \inf_{\Tilde{r}} \sum_{j = 1}^K \frac{B_j(a_h, f_h, u^{(t)}_h)}{ \sum_{k = 1}^K B_k(a_h, f_h, u^{(t)}_h)}\1_{\tilde{r}(a_h, f_h, u^{(t)}_h) \neq j}\right)
\end{align*}
(a) by \Cref{th:routingequivalence}

Let $\mathcal{X} = \mathcal{A} \times \mathcal{F} \times \mathcal{U}$ and $\mathcal{Y} = [K]$. Let $\mathcal{P}_{\mathcal{X}}$ denote the degenerate distribution suported at the point $(a_h, f_h, u_h^{(t)})$. We define the conditional distribtion - $P(Y = j \mid X = (a_h, f_h, u_h^{(t)})) = \frac{B_j(a_h, f_h, u^{(t)}_h)}{ \sum_{k = 1}^K B_k(a_h, f_h, u^{(t)}_h)}$ for $j \in [K]$. The risk and optimal risk of $0-1$ loss under this distribution can be written as:
\begin{align*}
    \mathcal{R}_{0-1}(r) &= \sum_{j = 1}^K \frac{B_j(a_h, f_h, u^{(t)}_h)}{\sum_{k = 1}^K B_k(a_h, f_h, u^{(t)}_h)}\1_{r(a_h, f_h, u^{(t)}_h) \neq j} \\
    \mathcal{R}_{0-1}^* &= \inf_r \sum_{j = 1}^K \frac{B_j(a_h, f_h, u^{(t)}_h)}{\sum_{k = 1}^K B_k(a_h, f_h, u^{(t)}_h)}\1_{r(a_h, f_h, u^{(t)}_h) \neq j} 
\end{align*}

If $r(a_h, f_h, u^{(t)}_h) = \argmax_{j \in [k]} g_j(a_h, f_h, u^{(t)}_h)$ for all $x \in \mathcal{X}$, then the he risk and optimal risk of cross entropy loss ($l_{ce}(\mathbf{g}, x, y)-\log\left(\frac{\exp\left(g_y(x)\right)}{\sum_{k = 1}^K\exp\left(g_k(x)\right)}\right)$) under this distribution can be written as:

\begin{align*}
    \mathcal{R}_{ce}(\mathbf{g}) &= -\sum_{j = 1}^K \frac{B_j(a_h, f_h, u^{(t)}_h)}{\sum_{k = 1}^K B_k(a_h, f_h, u^{(t)}_h)}\log\left(\frac{\exp\left(g_j(a_h, f_h, u^{(t)}_h)\right)}{\sum_{k = 1}^K\exp\left(g_k(a_h, f_h, u^{(t)}_h)\right)}\right) \\
    \mathcal{R}_{ce}^* &= \inf_{\mathbf{g}} -\sum_{j = 1}^K \frac{B_j(a_h, f_h, u^{(t)}_h)}{\sum_{k = 1}^K B_k(a_h, f_h, u^{(t)}_h)}\log\left(\frac{\exp\left(g_j(a_h, f_h, u^{(t)}_h)\right)}{\sum_{k = 1}^K\exp\left(g_k(a_h, f_h, u^{(t)}_h)\right)}\right) 
\end{align*}

From Theorem 3.1 of \citep{mao2023cross}, $\mathcal{R}_{0-1}(r) - \mathcal{R}_{0-1}^* \leq \Gamma^{-1}\left(\mathcal{R}_{ce}(\mathbf{g}) - \mathcal{R}_{ce}^*\right)$ if $r(a_h, f_h, u^{(t)}_h) = \argmax_{j \in [k]} g_j(a_h, f_h, u^{(t)}_h)$ where $\Gamma(z) = \frac{1 + z}{2}\log(1 + z) + \frac{1 - z}{2}\log(1 - z)$. Then,

\begin{align*}
    &l_{\text{route}}\left(r, a_h, f_h, u^{(t)}_h, \mathcal{G}_h\left(a_h, f_h\right)\right) - l_{\text{route}}^*\left(a_h, f_h, u^{(t)}_h\right) \\
    &= \sum_{k = 1}^K B_k(a_h, f_h, u^{(t)}_h)\left(\sum_{j = 1}^K \frac{B_j(a_h, f_h, u^{(t)}_h)}{\sum_{k = 1}^K B_k(a_h, f_h, u^{(t)}_h)}\1_{r(a_h, f_h, u^{(t)}_h) \neq j} - \inf_{\Tilde{r}} \sum_{j = 1}^K \frac{B_j(a_h, f_h, u^{(t)}_h)}{ \sum_{k = 1}^K B_k(a_h, f_h, u^{(t)}_h)}\1_{\tilde{r}(a_h, f_h, u^{(t)}_h) \neq j}\right) \\
    &\leq \sum_{k = 1}^K B_k(a_h, f_h, u^{(t)}_h)\Gamma^{-1}\left(-\sum_{j = 1}^K \frac{B_j(a_h, f_h, u^{(t)}_h)}{\sum_{k = 1}^K B_k(a_h, f_h, u^{(t)}_h)}\log\left(\frac{\exp\left(g_j(a_h, f_h, u^{(t)}_h)\right)}{\sum_{k = 1}^K\exp\left(g_k(a_h, f_h, u^{(t)}_h)\right)}\right) \right.\\&\quad\quad\quad\quad\quad\quad\quad\quad\quad\quad\quad\quad- \left.\inf_{\mathbf{g}} -\sum_{j = 1}^K \frac{B_j(a_h, f_h, u^{(t)}_h)}{\sum_{k = 1}^K B_k(a_h, f_h, u^{(t)}_h)}\log\left(\frac{\exp\left(g_j(a_h, f_h, u^{(t)}_h)\right)}{\sum_{k = 1}^K\exp\left(g_k(a_h, f_h, u^{(t)}_h)\right)}\right) \right) \\
    &\overset{(a)}{\leq} \Bar{E}K\left(K- 1\right)\Gamma^{-1}\left(-\sum_{j = 1}^K \frac{B_j(a_h, f_h, u^{(t)}_h)}{\sum_{k = 1}^K B_k(a_h, f_h, u^{(t)}_h)}\log\left(\frac{\exp\left(g_j(a_h, f_h, u^{(t)}_h)\right)}{\sum_{k = 1}^K\exp\left(g_k(a_h, f_h, u^{(t)}_h)\right)}\right) \right.\\&\quad\quad\quad\quad\quad\quad\quad\quad\quad\quad\quad\quad- \left.\inf_{\mathbf{g}} -\sum_{j = 1}^K \frac{B_j(a_h, f_h, u^{(t)}_h)}{\sum_{k = 1}^K B_k(a_h, f_h, u^{(t)}_h)}\log\left(\frac{\exp\left(g_j(a_h, f_h, u^{(t)}_h)\right)}{\sum_{k = 1}^K\exp\left(g_k(a_h, f_h, u^{(t)}_h)\right)}\right) \right) \\
    &\overset{(b)}{\leq} \Bar{E}K\left(K- 1\right)\Gamma^{-1}\left(-\sum_{j = 1}^K \frac{B_j(a_h, f_h, u^{(t)}_h)}{(K - 1)E_{min}}\log\left(\frac{\exp\left(g_j(a_h, f_h, u^{(t)}_h)\right)}{\sum_{k = 1}^K\exp\left(g_k(a_h, f_h, u^{(t)}_h)\right)}\right) \right.\\&\quad\quad\quad\quad\quad\quad\quad\quad\quad\quad\quad\quad- \left.\inf_{\mathbf{g}} -\sum_{j = 1}^K \frac{B_j(a_h, f_h, u^{(t)}_h)}{(K - 1)E_{min}}\log\left(\frac{\exp\left(g_j(a_h, f_h, u^{(t)}_h)\right)}{\sum_{k = 1}^K\exp\left(g_k(a_h, f_h, u^{(t)}_h)\right)}\right) \right) \\
    &= \Bar{E}K\left(K- 1\right)\Gamma^{-1}\left(\frac{\Psi\left(\mathbf{g}, a_h, f_h, u^{(t)}_h, \mathcal{G}_h\left(a_h, f_h\right)\right)- \Psi^*\left(a_h, f_h, u^{(t)}_h\right)}{(K - 1)E_{min}} \right)
\end{align*}
(a) since $\left\|\left( I - C_{j} \circ \mathcal{L}_{h}^a\right)\left(e^{(t)}_h\right)\right\|^2_2 < \Bar{E}$ for all $j \in [K]$, (b) since $\Gamma^{-1}$ is non-decreasing and $\exists j \in [K]$ such that $\left\|\left( I - C_{j} \circ \mathcal{L}_{h}^a\right)\left(e^{(t)}_h\right)\right\|^2_2 > E_{min}$

Finally,

\begin{align*}
    &\lim_{n \to \infty }\mathcal{R}_{\text{route}}\left(r_n\right) - \mathcal{R}_{\text{route}}^* \\
    &\overset{(a)}{=} \lim_{n \to \infty } \mathbb{E}_{a_h, f_h \sim \mathcal{P}_{\mathcal{A} \times \mathcal{F}}}\left[l_{\text{route}}\left(r_n, a_h, f_h, u^{(t)}_h, \mathcal{G}_h\left(a_h, f_h\right)\right) - l_{\text{route}}^*\left(a_h, f_h, u^{(t)}_h\right)\right] \\
    &\leq \lim_{n \to \infty } \mathbb{E}_{a_h, f_h \sim \mathcal{P}_{\mathcal{A} \times \mathcal{F}}}\left[\Bar{E}K\left(K- 1\right)\Gamma^{-1}\left(\frac{\Psi\left(\tilde{\mathbf{g}}, a_h, f_h, u^{(t)}_h, \mathcal{G}_h\left(a_h, f_h\right)\right)- \Psi^*\left(a_h, f_h, u^{(t)}_h\right)}{(K - 1)E_{min}} \right)\right] \\
    &\overset{(b)}{\leq} \lim_{n \to \infty } \Bar{E}K\left(K- 1\right)\Gamma^{-1}\left(\frac{\mathbb{E}_{a_h, f_h \sim \mathcal{P}_{\mathcal{A} \times \mathcal{F}}}\left[\Psi\left(\mathbf{g}_n, a_h, f_h, u^{(t)}_h, \mathcal{G}_h\left(a_h, f_h\right)\right)- \Psi^*\left(a_h, f_h, u^{(t)}_h\right) \right]}{(K - 1)E_{min}}\right) \\
\end{align*}
\begin{align*}
    &= \lim_{n \to \infty } \Bar{E}K\left(K- 1\right)\Gamma^{-1}\left(\frac{\mathcal{R}_{\Psi}\left(\mathbf{g}_n\right) - \mathcal{R}_{\Psi}^*}{(K - 1)E_{min}}\right) \\
    &\overset{(c)}{=} \Bar{E}K\left(K- 1\right)\Gamma^{-1}\left(\frac{\lim_{n \to \infty } \mathcal{R}_{\Psi}\left(\mathbf{g}_n\right) - \mathcal{R}_{\Psi}^*}{(K - 1)E_{min}}\right) \\
    &= \Bar{E}K\left(K- 1\right)\Gamma^{-1}\left(0\right) \\
    &\overset{(d)}{=} 0
\end{align*}

(a) $\mathcal{R}_{\text{route}}^*  = \mathbb{E}_{a_h, f_h \sim \mathcal{P}_{\mathcal{A} \times \mathcal{F}}}\left[l_{\text{route}}^*\left(a_h, f_h, u^{(t)}_h\right)\right]$ since the infimum is taken over all measurable functions, (b) by Jensen's inequality since $\Gamma^{-1}$ is concave, (c) by continuity of $\Gamma^{-1}$ at $0$, (d) $\Gamma^{-1}(0) = 0$


\end{proof}

\section{Training Details} \label{sec:training details}

Data for both DeepONet and the routers is sampled from a zero-mean Hierarchical Gaussian Random Field on the periodic domain with covariance operator $\alpha(- \Delta + \beta I)^{-\gamma}$, where $\alpha, \beta, \gamma$ are sampled as:
\begin{align*}
    \alpha &\sim \text{Log-Uniform}(0.01, 100)\\
    \beta &\sim \text{Log-Uniform}(0.1, 1000) \\
    \gamma &\sim \text{Uniform}\left(\{0.5, 1.0, 1.5, 2.0, 2.5, 3.0, 4.0\}\right)
\end{align*}

Given $(\alpha, \beta, \gamma)$, samples are generated in the Fourier space. For each non-zero frequency mode $k$, we draw an independent complex coefficient from a Gaussian distribution with mean 0 and variance $\alpha(4\pi^2 \|k\|^2_2 + \beta)^{-\gamma}$. We enforce a Hermitian symmetry to obtain a real-valued field, set the zero-frequency (DC) mode to $0$ to obtain a zero mean field, and apply the inverse Discrete Fourier Transform to obtain the field in physical space. 

For each sample, we compute reference solutions with a least squares solver and treat them as ground truth. 

This data is used to trained our DeepONet models and LSTM routers. All the models were implemented using PyTorch and all the models were trained on one Nvidia A40 GPU.

\Cref{tab:hpsettings} contains all hyperparameter details for the DeepONet. DeepONet took 30 minutes to train. We then use the model with the best validation loss. 

\begin{table}[H]
    \centering
    \caption{Hyperparameter settings for DeepONet}
    \begin{tabular}{lr}
        \hline
        Hyperparameter & Value\\
        \hline \hline
        Learning rate & 1e-3 \\
        Branch Dimension & 128 \\
        Hidden dimension for branch net & 256 \\
        No. of hidden layers in branch net & 4\\
        Hidden dimension for trunk net & 256 \\
        No. of hidden layers in trunk net & 4\\
        Gradient Clipping Norm & 1.0  \\
        Weight Decay & 0.005 \\
        Batch size & 256 \\
        Training samples & 10000 \\
        Validation samples & 2000 \\
        Epochs & 1000 \\
        \hline
    \end{tabular}
    \label{tab:hpsettings}
\end{table}

The routers are LSTM models. Along with the inputs specified in \Cref{sec:approxgreedy}, namely $a_h, f_h, u_h^{(t)}$, we also supply the routers with the iteration index $t$, current residual $f_h - \mathcal{L}_h^au_h^{(t)}$, and the routing decisions from the previous iteration. While the iteration index and residual are deterministic functions of $(a_h, f_h, u_h^{(t)})$ and align with the theory presented in \Cref{sec:approxgreedy}, incorporating previous routing decisions is a mild deviations from the idealized setting. Nevertheless, this modification is motivated by the same considerations that justify the use of scheduled sampling: under deployment, routing decisions are predicted autoregressivelt and influence future inputs, whereas training under teacher forcing assumes independence from past predictions. Including previous routing decisions bridges the gap induced by exposure bias by providing the model with trajectory-level information.

All the routers are trained with scheduled sampling. We use a warm-up of $e_{w}$ epochs with teacher-forcing probability $p_{tf}(e) = ss_{\text{start}}$. After the warm-up, the $p_{tf}$ decays geomterically by a factor of $\gamma_{tf} < 1$ per epoch and is floored by $s_{\text{end}}$:
\begin{equation*}
    p_{tf}(e) = \begin{cases}
        ss_{\text{start}} & e \leq e_w\\
        \max(ss_{start}\gamma_{tf}^{e - e_{w}}, ss_{end}) & e > e_w
    \end{cases}
\end{equation*}
At each time step, with probability $p_{tf}(e)$, we feed the teacher-forced greedy iterate; otherwise, we feed the router's own predicted iterate.

Since LSTMs on long rollouts can suffer from exploding/vanishing gradients, we use truncated backpropagation through time (TBPTT) \citep{mozer2013focused,robinson1987utility,werbos1988generalization}: the forward pass unrolls the entire trajectory, but gradients are propagated only through the most recent $w_{\text{bptt}}(e)$ steps at epoch $e$. Hidden states are passed forward between segments, while earlier segments are treated as stop-gradient.

We employ a curriculum learning approach analogous to scheduled sampling. Let $T_{\max}$ be the horizon ($300$ iterations). With a warm-up of $e_w$ epochs,
\begin{equation}
    w_{\text{bptt}}(e)  = 
    \begin{cases}
        w_{\text{start}} & e \leq e_w \\
        \min\left(T_{\max}, w_{\text{start}}\gamma_{\text{bptt}}^{\left\lfloor \frac{e - e_w}{f_{\text{bptt}}}\right\rfloor} \right) & e > e_w
    \end{cases}
\end{equation}
so the window grows geometrically by a factor of $\gamma_{\text{bptt}} > 1$ every $f_{\text{bptt}}$ epochs and is capped at the full trajectory length.

\Cref{tab:routerhpsettings} contains all hyperparameter details for the LSTM routers. The routers for the Paired Solver experiment took a maximum of 4 hours and 30 minutes to train while the routers for the Solver Ensemble experiment took a maximum of 6 hours to train. We then use the model with the best validation loss for testing.  The architecture and dataset size is larger for the Solver ensemble experiments to encourage the model to learn some of the nuanced differences between the classes.

\begin{table}[H]
    \centering
    \caption{Hyperparameter settings for routers in Paired solver and Solver Ensemble Experiments}
    \begin{tabular}{lrr}
        \hline
        Hyperparameter & Paired Solver & Solver Ensemble\\
        \hline \hline
        Learning rate & 1e-3 & 1e-3\\
        Hidden dimension & 256 & 512\\
        No. of hidden layers & 4 & 3\\
        Gradient Clipping Norm & 1.0 & 1.0 \\
        Weight Decay & 0.005 & 0.005 \\
        Batch size & 32 & 32 \\
        Training samples & 256 & 1024\\
        Validation samples & 32 & 128 \\
        Epochs & 200 & 200\\
        $ss_{\text{start}}$ & 1.0 & 1.0\\
        $\gamma_{tf}$ & $0.95$  $0.95$ \\
        $ss_{\text{end}}$ & 0.0 & 0.0 \\
        $w_{\text{start}}$ & $50$ & $50$ \\
        $\gamma_{\text{bptt}}$ & $1.25$ & $1.25$\\
        $e_w$ & 10 & 10\\
        $f_{\text{bptt}}$ & $1$ & $1$\\
        \hline
    \end{tabular}
    \label{tab:routerhpsettings}
\end{table}

\section{Additional Experimental Results} \label{sec:moreexperimentalresults}

We present additional visualizations and tables that further highlight the strengths of our method. In the paired solver experiments, we include Damped Jacobi with $\omega = 0.67$ and Successive Over-Relaxation (SOR) with $\omega = 1.5$. We also compare against a True Greedy oracle, which has access to true error at each step (See \Cref{sec:greedy}). This comparison demonstrates that the learned router closely approximates the behavior of the greedy policy.

\subsection{Error Comparisons with Paired $t$-Test Results} \label{sec:moreerrorcomparison}

We supplement the paired-solver experiments in \Cref{tab:errorcomparison} with additional solver families (Jacobi (0.67) and SOR (1.5)) and statistical significance analysis. For each solver family, we compare the learned greedy router against baseline methods (single-solver only and HINTS) using paired $t$-tests with a one-sided alternative hypothesis that the learned router achieves lower error or AUC. Results are reported in \Cref{tab:errorcomparison3}.

Across all five solver families, the resulting $p$-values indicate statistically significant improvements of the learned greedy router over both single-solver and HINTS baselines.

We support the Large Solver ensemble experiments in \Cref{tab:errorcomparison2} with a statistical significance analysis. We use paired $t$-tests across the 128 test instances. For each baseline solver $j$ (e.g., Jacobi, GS, SymGS, etc.), we compare the performance of the solver ensemble $\mathcal{W}$ against the corresponding pairwise learned router $\mathrm{Router}(\mathrm{NO}\cup\{j\})$ if $j \in \mathcal{W}$. The test is performed on per-instance differences in final error (and AUC), using a one-sided alternative hypothesis that the solver ensemble achieves lower error than the baseline.

Across both Poisson and convection–diffusion problems, the solver ensembles yield statistically significant improvements over all pairwise baselines in nearly every setting, with $p$-values typically well below $0.01$. The gains are enhanced as the ensemble size increases, while the performance of the learned router remains close to that of the true greedy oracle. These results support the claim that enlarging the solver set provides measurable benefits and that our learned policy effectively captures the resulting improvements. Results are reported in \Cref{tab:errorcomparison4}.

\begin{table}[h]
    \centering
    \caption{Final error and AUC of squared $L^2$ error (lower is better). Values are mean ($\pm$ standard error (s.e.)) over 128 test instances; both mean and s.e. of error are reported in $\times 10^{-3}$. If a standard error is not shown, it is $< 10^{-3}$ in the reported units (raw $< 10^{-6}$). Statistical significance is assessed via paired $t$-tests comparing each baseline (single-solver only and HINTS) against the learned greedy router, using a one-sided alternative that the learned router achieves lower error/AUC. Reported $p$-values correspond to these tests.}
    \resizebox{\textwidth}{!}{%
    \begin{tabular}{ccccccccc}
        \hline \hline
         Equation &  \multicolumn{4}{c}{Poisson} & \multicolumn{4}{c}{ConvDiff} \\
         \hline 
         Methods & $\|e^{(T)}_h\| \times 10^{3}$ & $\|e^{(T)}_h\|$ $p$-value & AUC & AUC $p$-value & $\|e^{(T)}_h\| \times 10^{3}$ & $\|e^{(T)}_h\|$ $p$-value & AUC & AUC $p$-value\\
         \hline
         \multicolumn{9}{c}{Jacobi-related solvers}\\
         \hline
Jacobi Only & 0.383 (1.029) & $<10^{-3}$ & 0.821 (2.154) & $<10^{-3}$ 
& 0.136 (0.364) & $<10^{-3}$ & 0.312 (0.788) & $<10^{-3}$ \\
HINTS-Jacobi & 0.759 (0.016) & $<10^{-3}$ & 0.393 (0.590) & $<10^{-3}$
& 0.447 (0.013) & $<10^{-3}$ & 0.202 (0.256) & $<10^{-3}$ \\
Learned Greedy-Jacobi & 0.054 (0.142) & - & 0.165 (0.368) & - 
& 0.033 (0.097) & - & 0.098 (0.239) & - \\
True-Greedy-Jacobi & 0.021 (0.018) & - & 0.094 (0.170) & - 
& 0.012 (0.012) & - & 0.049 (0.091) & - \\
\hline
         \multicolumn{9}{c}{GS-related solvers}\\
         \hline
GS only & 0.019 (0.051) & $<10^{-3}$ & 0.435 (1.141) & $<10^{-3}$ & $<10^{-3}$ & 0.006 & 0.088 (0.219) & $<10^{-3}$ \\
HINTS-GS & 0.724 (0.000) & $<10^{-3}$ & 0.325 (0.488) & $<10^{-3}$ & 0.456 (0.000) & $<10^{-3}$ & 0.130 (0.154) & 0.0 \\
Learned Greedy-GS & 0.002 (0.004) & - & 0.083 (0.152) & - & $<10^{-3}$ & - & 0.031 (0.078) & - \\
True-Greedy-GS & 0.001 (0.001) & - & 0.057 (0.107) & - & $<10^{-3}$ & - & 0.019 (0.038) & - \\
\hline
         \multicolumn{9}{c}{SymGS-related solvers}\\
         \hline
SymGS only & $<10^{-3}$ & $<10^{-3}$ & 0.227 (0.593) & $<10^{-3}$ 
& $<10^{-3}$ & 0.028 & 0.071 (0.178) & $<10^{-3}$ \\
HINTS-SymGS & 0.709 (0.000) & $<10^{-3}$ & 0.252 (0.380) & $<10^{-3}$ 
& 0.463 (0.000) & $<10^{-3}$ & 0.116 (0.131) & $<10^{-3}$ \\
Learned Greedy-SymGS & $<10^{-3}$ & - & 0.052 (0.102) & - 
& $<10^{-3}$ & - & 0.022 (0.039) & - \\
True-Greedy-SymGS & $<10^{-3}$ & - & 0.034 (0.066) & - 
& $<10^{-3}$ & - & 0.016 (0.032) & - \\
\hline
         \multicolumn{9}{c}{Jacobi (0.67)-related solvers}\\
         \hline
Jacobi (0.67) only & 1.066 (2.861) & $<10^{-3}$ & 1.128 (2.954) & $<10^{-3}$ & 0.349 (0.931) & $<10^{-3}$ & 0.428 (1.077) & $<10^{-3}$ \\
HINTS-Jacobi (0.67) & 0.789 (0.000) & $<10^{-3}$ & 0.429 (0.642) & 0.005 & 0.473 (0.000) & $<10^{-3}$ & 0.223 (0.284) & $<10^{-3}$ \\
Learned Greedy-Jacobi (0.67) & 0.232 (0.654) & - & 0.309 (0.791) & - & 0.069 (0.102) & - & 0.117 (0.210) & - \\
True-Greedy-Jacobi (0.67) & 0.057 (0.041) & - & 0.131 (0.233) & - & 0.026 (0.021) & - & 0.065 (0.117) & - \\
\hline
         \multicolumn{9}{c}{SOR (1.5)-related solvers}\\
         \hline
SOR (1.5) only & $<10^{-3}$ & 0.057 & 0.156 (0.409) & $<10^{-3}$ & $<10^{-3}$ & 0.086 & 0.008 (0.020) & 0.988 \\
HINTS-SOR (1.5) & 0.700 (0.000) & $<10^{-3}$ & 0.207 (0.316) & $<10^{-3}$ & 0.462 (0.000) & $<10^{-3}$ & 0.021 (0.020) & $<10^{-3}$ \\
Learned Greedy-SOR (1.5) & $<10^{-3}$ & - & 0.052 (0.143) & - & $<10^{-3}$ & - & 0.011 (0.030) & - \\
True-Greedy-SOR (1.5) & $<10^{-3}$ & - & 0.032 (0.071) & - 
& $<10^{-3}$ & - & 0.007 (0.019) & - \\
\hline \hline 
    \end{tabular}
    }
    \label{tab:errorcomparison3}
\end{table}

\begin{table}[H]
    \centering
    \caption{Final error and AUC of squared $L^2$ error (lower is better). Values are mean ($\pm$ standard error (s.e.)) over 128 test instances; both mean and s.e. of error are reported in $\times 10^{-3}$. If a standard error is not shown, it is $< 10^{-3}$ in the reported units (raw $< 10^{-6}$). Statistical significance is assessed via paired $t$-tests comparing each solver ensemble against the corresponding pairwise learned router (e.g., $\mathrm{NO}+\text{Jacobi}$, $\mathrm{NO}+\text{GS}$), using a one-sided alternative that the ensemble achieves lower error/AUC. Reported $p$-values correspond to these tests.}
    \resizebox{\textwidth}{!}{
    \begin{tabular}{lrrrrrrr}
\hline \hline
 $\mathcal{W}$ & $\|e^{(T)}_h\| \times 10^{3}$ & AUC & Jacobi $p$-value & GS $p$-value & SymGS $p$-value & Jacobi (0.67) $p$-value & SOR (1.5) $p$-value  \\
\midrule
\multicolumn{8}{c}{Poisson} \\
\midrule
Learned Greedy $\{\text{Jacobi, GS}\}$ & 0.003 (0.002) & 0.068 (0.131) & $<10^{-3}$ & $<10^{-3}$ & - & - & - \\
True Greedy $\{\text{Jacobi, GS}\}$& 0.001 (0.001) & 0.057 (0.107) & - & - & - & - & - \\
Learned Greedy $\{\text{Jacobi, GS, SymGS}\}$ & $<10^{-3}$ & 0.038 (0.071) & $<10^{-3}$ & $<10^{-3}$ & $<10^{-3}$ & - & - \\
True Greedy $\{\text{Jacobi, GS, SymGS}\}$ & $<10^{-3}$ & 0.034 (0.066) & - & - & - & - & - \\
Learned Greedy $\{\text{Jacobi, GS, SymGS, Jacobi (0.67)}\}$ & $<10^{-3}$ & 0.042 (0.074) & $<10^{-3}$ & $<10^{-3}$ & 0.004 & $<10^{-3}$ & - \\
True Greedy $\{\text{Jacobi, GS, SymGS, Jacobi (0.67)}\}$ & $<10^{-3}$ & 0.034 (0.066) & - & - & - & - & - \\
Learned Greedy $\{\text{Jacobi, GS, SymGS, Jacobi (0.67), SOR (1.5)}\}$ & $<10^{-3}$ & 0.039 (0.076) & $<10^{-3}$ & $<10^{-3}$ & $<10^{-3}$ & $<10^{-3}$ & 0.04 \\
True Greedy $\{\text{Jacobi, GS, SymGS, Jacobi (0.67), SOR (1.5)}\}$ & $<10^{-3}$ & 0.031 (0.069) & - & - & - & - & - \\
\midrule 
\multicolumn{8}{c}{ConvDiff} \\
\midrule
Learned Greedy $\{\text{Jacobi, GS}\}$ & $<10^{-3}$ & 0.021 (0.040) & $<10^{-3}$ & 0.005 & - & - & - \\
True Greedy$\{\text{Jacobi, GS}\}$ & $<10^{-3}$ & 0.019 (0.038) & - & - & - & - & - \\
Learned Greedy $\{\text{Jacobi, GS, SymGS}\}$ & $<10^{-3}$ & 0.020 (0.036) & $<10^{-3}$ & 0.004 & $<10^{-3}$ & - & - \\
True Greedy $\{\text{Jacobi, GS, SymGS}\}$ & $<10^{-3}$ & 0.016 (0.032) & - & - & - & - & - \\
Learned Greedy $\{\text{Jacobi, GS, SymGS, Jacobi (0.67)}\}$ & $<10^{-3}$ & 0.020 (0.039) & $<10^{-3}$ & 0.003 & 0.003 & $<10^{-3}$ & - \\
True Greedy $\{\text{Jacobi, GS, SymGS, Jacobi (0.67)}\}$ & $<10^{-3}$ & 0.016 (0.032) & - & - & - & - & - \\
Learned Greedy $\{\text{Jacobi, GS, SymGS, Jacobi (0.67), SOR (1.5)}\}$ & $<10^{-3}$ & 0.008 (0.019) & $<10^{-3}$ & $<10^{-3}$ & $<10^{-3}$ & $<10^{-3}$ & 0.004 \\
True Greedy $\{\text{Jacobi, GS, SymGS, Jacobi (0.67), SOR (1.5)}\}$ & $<10^{-3}$ & 0.007 (0.018) & - & - & - & - & - \\
\hline \hline
\end{tabular}
}
    \label{tab:errorcomparison4}
\end{table}
\subsection{Convergence Histories} \label{sec:convergencehistories}

See \Cref{fig:errorcomparison2} for convergence histories for the Jacobi (0.67)-related solvers and Successive Over-relaxation (1.5)-related solvers. 

\begin{figure}[H]
    \centering
    \includegraphics[width=0.55\linewidth]{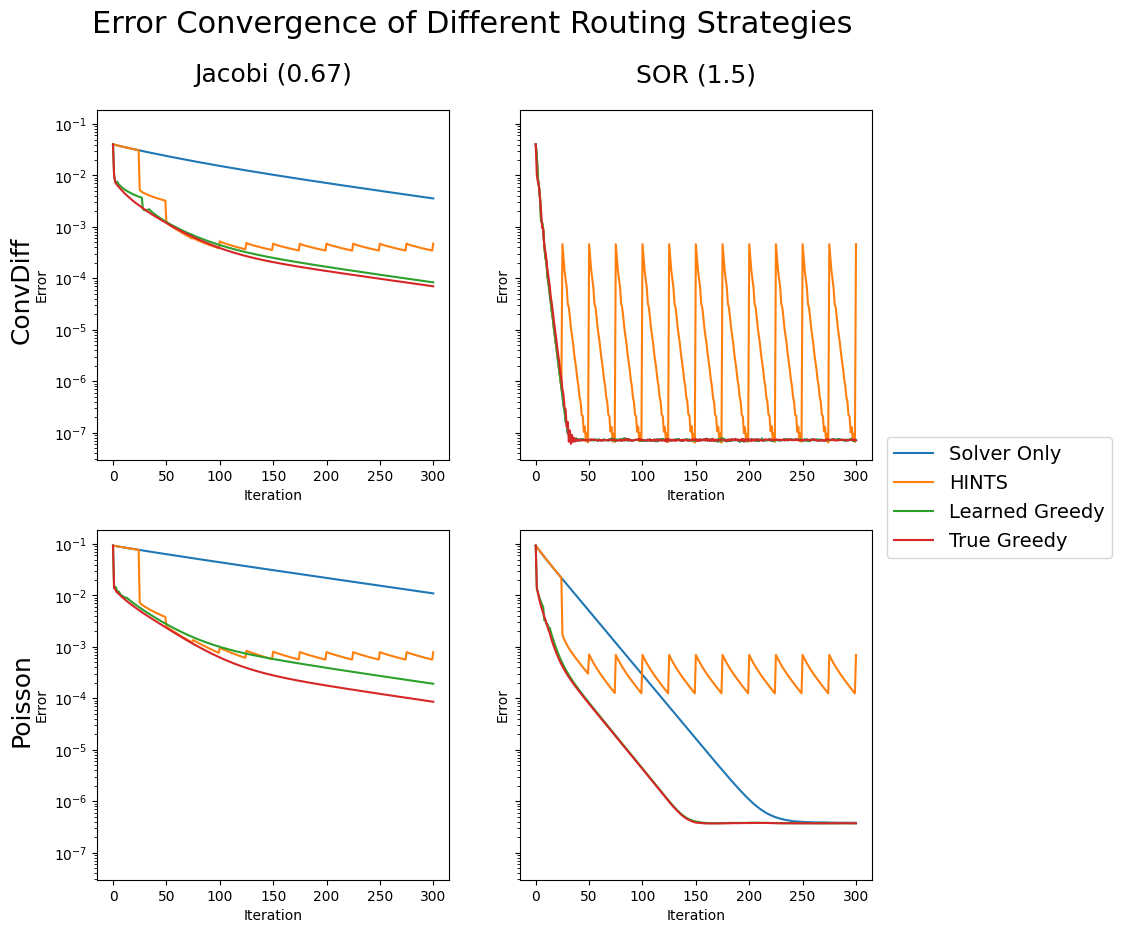}
    \caption{Convergence histories for representative test instances. Rows: ConvDiff (top) and Poisson (bottom). Columns: Jacobi (0.67) and Successive Over-relaxation (1.5). Greedy yields near-monotone decay and the lowest errors, whereas HINTS shows sawtooth behaviors.}
    \label{fig:errorcomparison2}
\end{figure}

\subsection{Prediction and Error Visualizations} \label{sec:error_viz}

\Cref{fig:prediction_visualization,fig:prediction_visualization2} provide qualitative visualizations of sample predictions across routing strategies and numerical solvers for both equations. While these prediction may appear identical to the ground truth, \Cref{fig:error_viz,fig:error_viz_2} reveals distinct error patterns with varying magnitudes across all predictions. In particular, our learned method consistently achieves the lower-magnitude errors compared to its respective baselines.

\begin{figure}[H]
    \centering
    \includegraphics[width=0.55\linewidth]{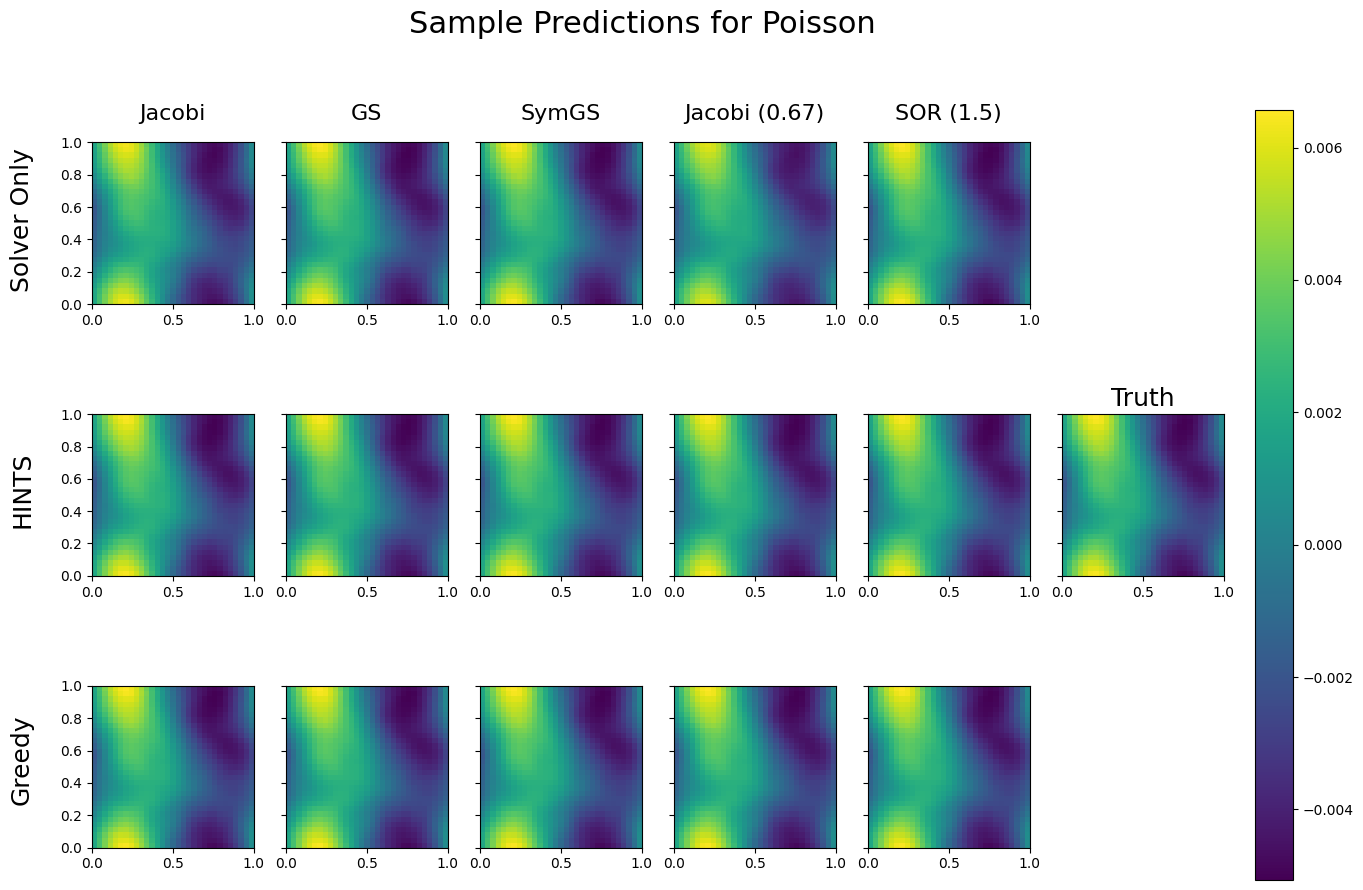}
    \caption{Sample predictions for the Poisson equation across different solver pairings. Each column corresponds to a numerical solver (Jacobi, Gauss–Seidel (GS), Symmetric GS, Damped Jacobi, and Successive Over-relaxation), while rows show predictions from the corresponding solver-only baseline (top), HINTS (middle), and the learned greedy router (bottom). The ground truth solution is shown on the right.}
    \label{fig:prediction_visualization}
\end{figure}
\begin{figure}[H]
    \centering
    \includegraphics[width=0.55\linewidth]{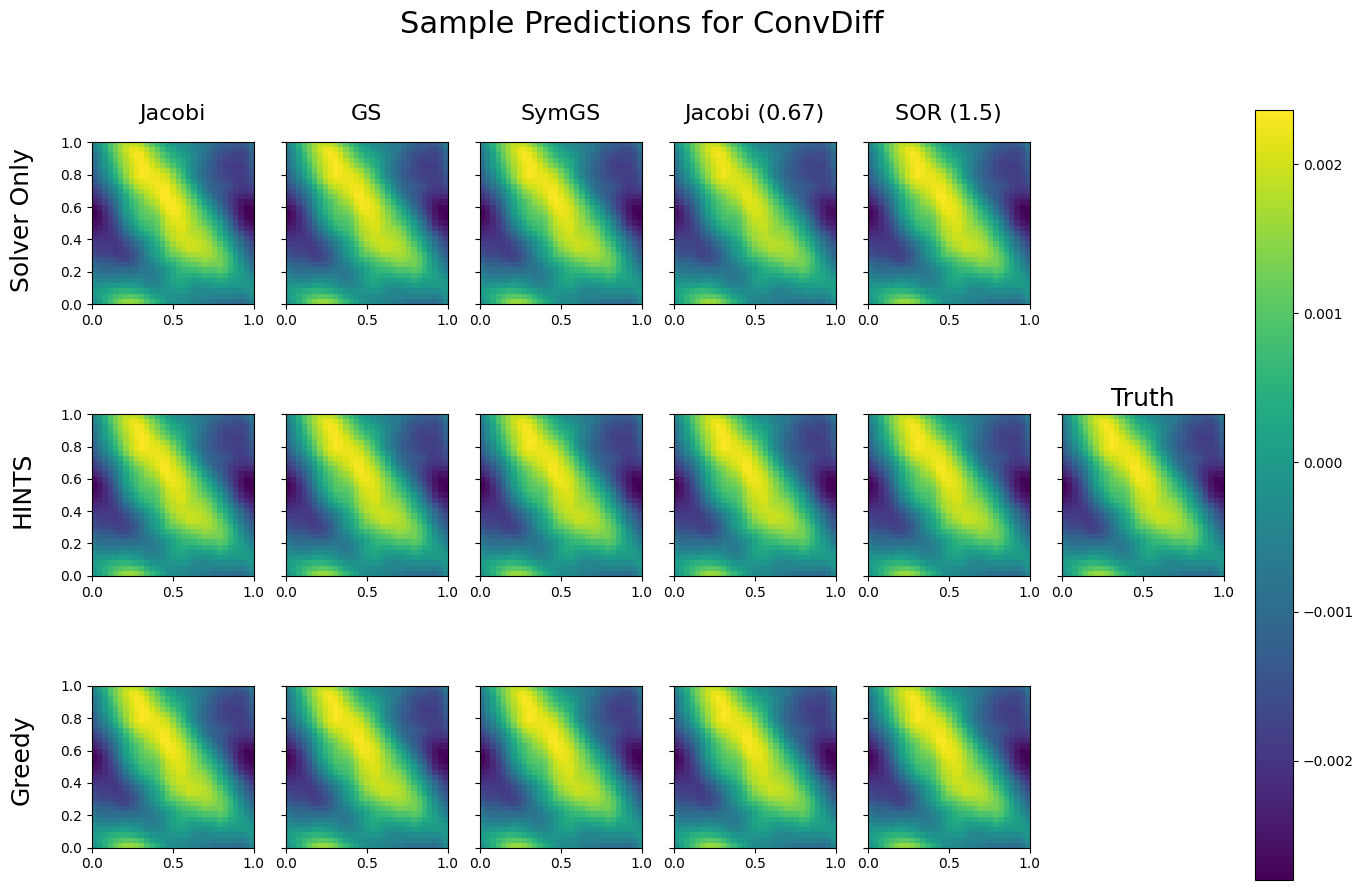}
    \caption{Sample predictions for the convection diffusion equation across different solver pairings. Each column corresponds to a numerical solver (Jacobi, Gauss–Seidel (GS), Symmetric GS, Damped Jacobi, and Successive Over-relaxation), while rows show predictions from the corresponding solver-only baseline (top), HINTS (middle), and the learned greedy router (bottom). The ground truth solution is shown on the right.}
    \label{fig:prediction_visualization2}
\end{figure}
\begin{figure}[H]
    \centering
    \includegraphics[width=0.55\linewidth]{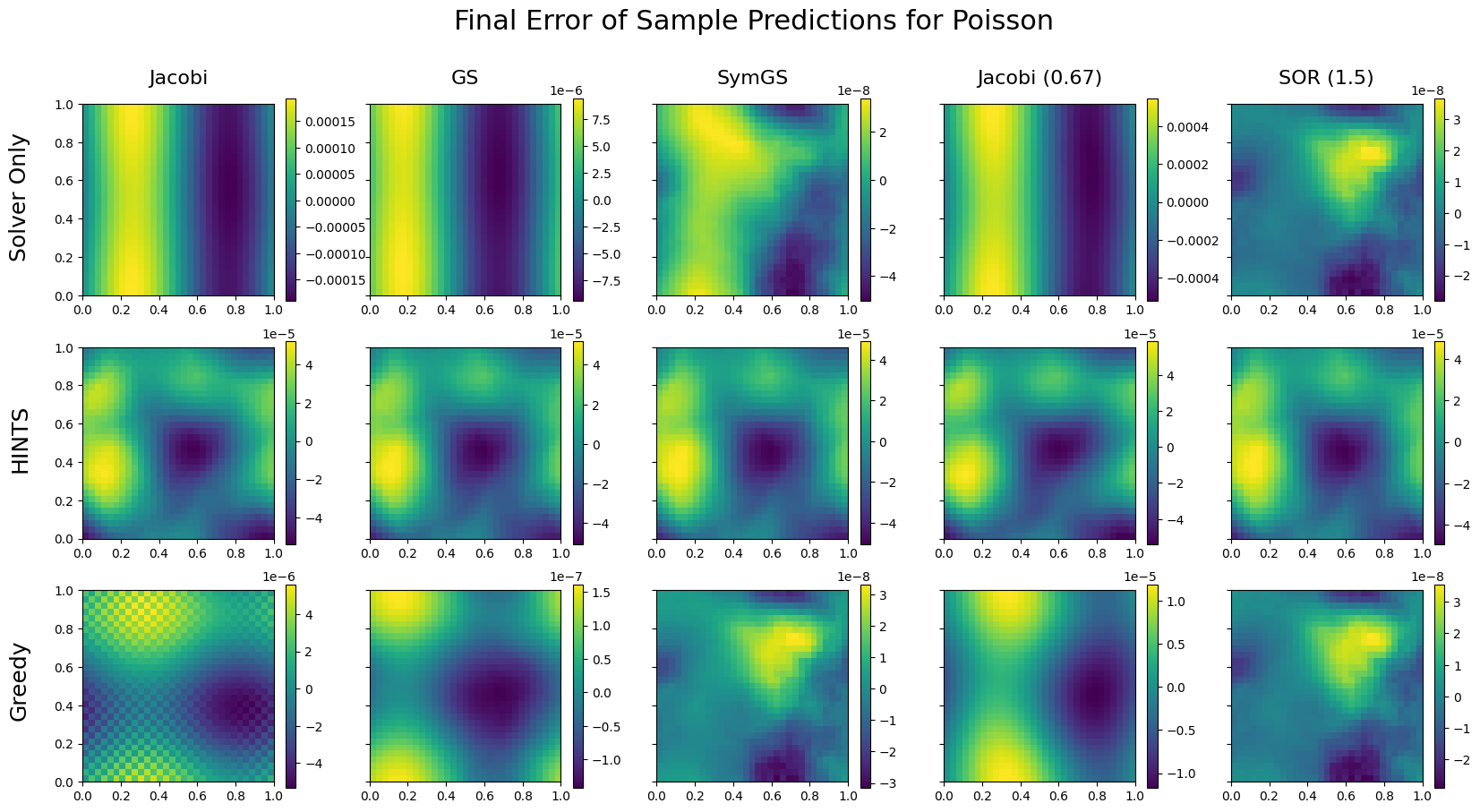}
    \caption{Pointwise error maps for Poisson sample predictions across solver families. Columns correspond to different numerical solvers, while rows show solver-only (top), HINTS (middle), and learned greedy router (bottom). Errors are computed as $u_h - u_h^{(T)} = e_h^{(T)}$. The learned greedy router consistently yields lower-magnitude errors compared to both solver-only and HINTS baselines.}
    \label{fig:error_viz}
\end{figure}
\begin{figure}[H]
    \centering
    \includegraphics[width=0.55\linewidth]{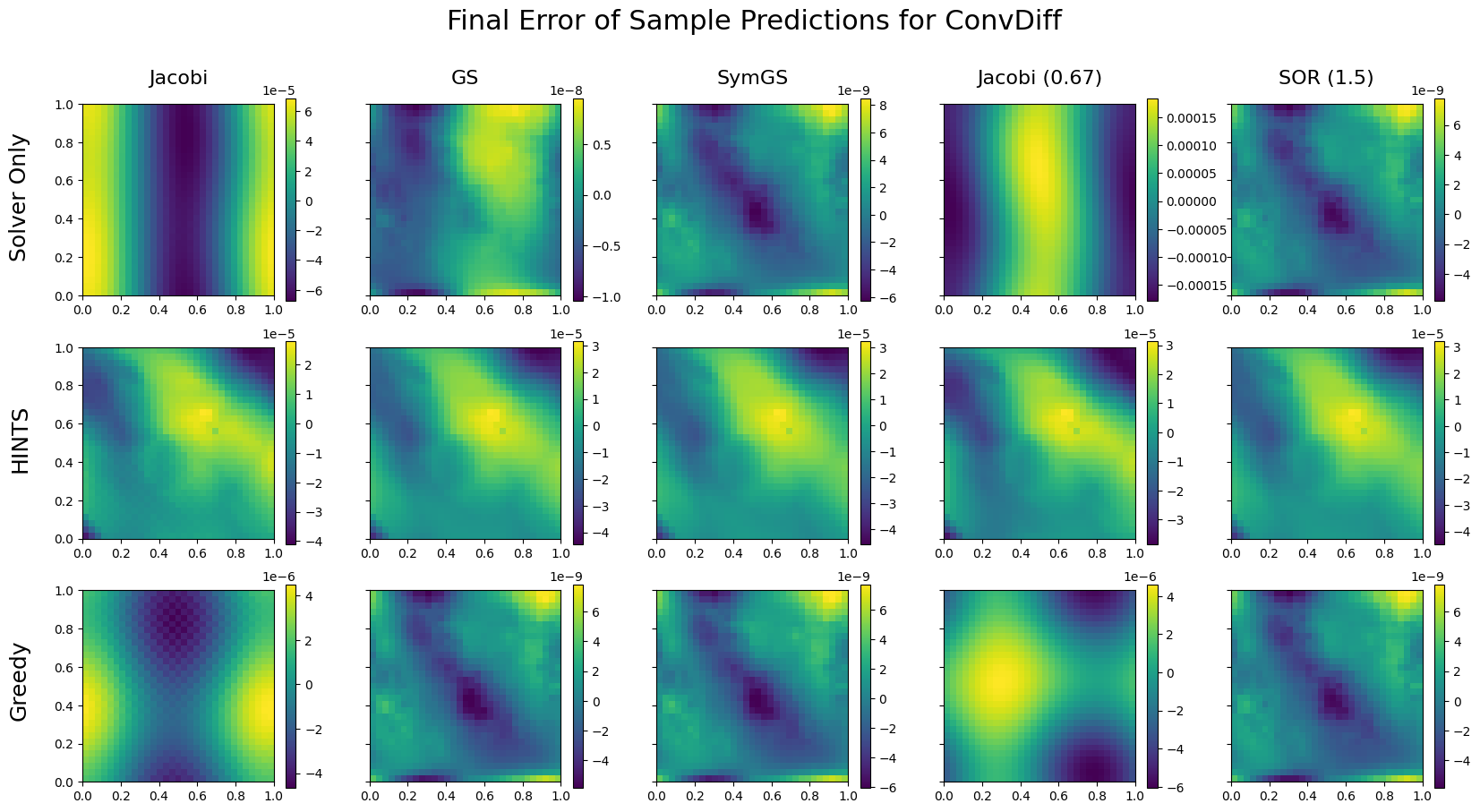}
    \caption{Pointwise error maps for convection diffusion sample predictions across solver families. Columns correspond to different numerical solvers, while rows show solver-only (top), HINTS (middle), and learned greedy router (bottom). Errors are computed as $u_h - u_h^{(T)} = e_h^{(T)}$. The learned greedy router consistently yields lower-magnitude errors compared to both solver-only and HINTS baselines.}
    \label{fig:error_viz_2}
\end{figure}
\subsection{Residual Comparison} \label{sec:residualcomparison}

\begin{table}[H]
    \centering
    \caption{Final residual $\|\mathcal{L}_h^a e_h^{(T)}\|$ and AUC of residuals over iterations (lower is better). Values are reported as mean ($\pm$ standard error (s.e.)) over 128 test instances; residuals are scaled by $\times 10^{-3}$. If the residual norm is reported as $< 10^{-3}$, the corresponding raw value is $< 10^{-6}$. Statistical significance is evaluated via paired $t$-tests comparing each baseline (single-solver only and HINTS) to the learned greedy router under a one-sided alternative hypothesis that the learned router achieves lower residual and AUC. Reported $p$-values correspond to these tests.}
    \resizebox{\textwidth}{!}{
    \begin{tabular}{ccccccccc}
\hline \hline
         Equation &  \multicolumn{4}{c}{Poisson} & \multicolumn{4}{c}{ConvDiff} \\
         \hline 
         Methods & $\|\mathcal{L}_h^ae^{(T)}_h\| \times 10^{3}$ & $\|\mathcal{L}_h^ae^{(T)}_h\|$ $p$-value & AUC & AUC $p$-value & $\|\mathcal{L}_h^ae^{(T)}_h\| \times 10^{3}$ & $\|\mathcal{L}_h^ae^{(T)}_h\|$ $p$-value & AUC & AUC $p$-value\\
         \hline
         \multicolumn{9}{c}{Jacobi-related solvers}\\
         \hline
Jacobi only & 23.939 (66.755) & 0.021 & 43.475 (109.764) & $<10^{-3}$ & 27.208 (74.555) & 0.012 & 45.523 (114.774) & $<10^{-3}$ \\
HINTS-Jacobi & 193.327 (37.362) & $<10^{-3}$ & 39.763 (70.671) & $<10^{-3}$ & 266.421 (34.699) & $<10^{-3}$ & 42.851 (70.748) & $<10^{-3}$ \\
Learned Greedy-Jacobi & 20.870 (62.246) & - & 32.041 (83.647) & - & 20.786 (66.511) & - & 34.844 (84.666) & - \\
\hline
         \multicolumn{9}{c}{GS-related solvers}\\
         \hline
GS only & 0.761 (2.043) & $<10^{-3}$ & 20.776 (52.270) & $<10^{-3}$ & 0.002 (0.005) & $<10^{-3}$ & 10.720 (26.892) & $<10^{-3}$ \\
HINTS-GS & 183.359 (0.000) & $<10^{-3}$ & 22.581 (27.984) & $<10^{-3}$ & 246.877 (0.000) & $<10^{-3}$ & 18.654 (19.201) & $<10^{-3}$ \\
Learned Greedy-GS & 0.092 (0.145) & - & 11.960 (27.863) & - & 0.001 (0.003) & - & 7.322 (16.968) & - \\
\hline
         \multicolumn{9}{c}{SymGS-related solvers}\\
         \hline
SymGS only & 0.004 (0.009) & $<10^{-3}$ & 10.343 (25.957) & $<10^{-3}$ & 0.002 (0.004) & 0.004 & 8.604 (21.776) & $<10^{-3}$ \\
HINTS-SymGS & 185.601 (0.001) & $<10^{-3}$ & 15.489 (18.255) & $<10^{-3}$ & 245.629 (0.000) & $<10^{-3}$ & 16.515 (15.990) & $<10^{-3}$ \\
Learned Greedy-SymGS & 0.002 (0.005) & - & 7.149 (18.364) & - & 0.001 (0.002) & - & 6.481 (13.190) & -\\
\hline
         \multicolumn{9}{c}{Jacobi (0.67) -related solvers}\\
         \hline
Jacobi (0.67) only & 42.246 (112.968) & $<10^{-3}$ & 55.250 (138.258) & $<10^{-3}$ & 45.451 (121.431) & $<10^{-3}$ & 56.747 (141.901) & $<10^{-3}$ \\
HINTS-Jacobi (0.67) & 188.719 (0.008) & $<10^{-3}$ & 38.970 (53.035) & $<10^{-3}$ & 256.407 (0.005) & $<10^{-3}$ & 39.604 (52.578) & $<10^{-3}$ \\
Learned Greedy-Jacobi (0.67) & 9.290 (26.109) & - & 35.905 (86.881) & - & 9.010 (13.426) & - & 32.516 (83.488) & - \\
\hline
         \multicolumn{9}{c}{SOR (1.5)-related solvers}\\
         \hline
 SOR (1.5) only & 0.003 (0.008) & 0.5 & 9.044 (22.962) & $<10^{-3}$ & 0.002 (0.004) & 0.189 & 2.817 (7.499) & $<10^{-3}$ \\
HINTS-SOR (1.5) & 188.187 (0.001) & $<10^{-3}$ & 15.894 (19.565) & $<10^{-3}$ & 248.226 (0.001) & $<10^{-3}$ & 11.420 (7.498) & $<10^{-3}$ \\
Learned Greedy-SOR (1.5) & 0.003 (0.008) & - & 7.784 (19.939) & - & 0.002 (0.004) &  - & 5.530 (19.679) & - \\
\bottomrule
\end{tabular}}
    \label{tab:residualcomparison}
\end{table}

\begin{table}[H]
    \centering
    \caption{Final residual and AUC of squared $L^2$ residual for varying numbers of solvers. Values are mean ($\pm$ standard error (s.e.)) over 128 test instances; both mean and s.e. are reported in $\times 10^{-3}$.}
    \resizebox{\textwidth}{!}{
    \begin{tabular}{lcccc}
        \hline
          Equation & \multicolumn{2}{c}{Poisson} & \multicolumn{2}{c}{ConvDiff}\\
         \hline 
        $\mathcal{W}$ & $\|\mathcal{L}_h^ae^{(T)}_h\|$ & AUC & $\|\mathcal{L}_h^ae^{(T)}_h\|$ & AUC \\
         \hline
        $\{\text{Jacobi, GS}\}$ & 0.135 (0.096) & 11.815 (33.768) & 0.023 (0.044) & 5.765 (14.581) \\
        $\{\text{Jacobi, GS, SymGS}\}$ & 0.002 (0.004) & 5.118 (12.920) & 0.002 (0.002) & 4.810 (11.264) \\
        $\{\text{Jacobi, GS, SymGS, Jacobi (0.67)}\}$ & 0.013 (0.011) & 6.120 (14.651) & 0.010 (0.013) & 4.859 (11.473) \\
        $\{\text{Jacobi, GS, SymGS, Jacobi (0.67), SOR (1.5)}\}$  & 0.004 (0.006) & 8.196 (22.516) & 0.001 (0.002) & 3.209 (8.676) \\
         \hline
    \end{tabular}
    }
    \label{tab:residualcomparison2}
\end{table}

\Cref{tab:residualcomparison} summarizes the performance of single-solver schedules, HINTS, and greedy with respect to the final residuals $r_h^{(T)} =\|f_h - \mathcal{L}_h^a u_h^{(T)}\|$ or $\|\mathcal{L}_h^a e_h^{(T)}\|$ and its AUC $AUC_T = \sum_{t = 1}^T\|r^{(t)}_h\|^2_2$ . Greedy outperforms its HINTS and single-solver counterparts in most equations. We must note that our greedy router is trained to reduce error, not residual. The same error can induce very different residuals depending on the spectrum $\mathcal{L}_h^a$. \Cref{tab:residualcomparison2} exhibits how residuals are affected by the number of solvers in the solver ensemble. Similar to error, we observe both the final residual and AUC decrease as the number of solvers increase.

\subsection{Fourier mode-wise error comparison} \label{sec:fouriermodeerrorcomparison}

We assess frequency-resolved performance by projecting the error onto the discrete Fourier basis. \Cref{tab:modecomparison} report, for modes 1, 5, and 10, the mode-wise final error and mode-wise AUC, comparing single-solver baselines, HINTS, and the greedy router. As a result of including a deep learning model,Greedy consistently achieves the smallest mode-1 error/AUC across equations and solver families. For modes 5 and 10, single-solver schedules sometimes have an edge, reflecting the tendency of classical smoothers to damp high-frequency components more aggressively than ML surrogates (spectral bias). Overall, greedy delivers more uniform convergence across the spectrum: it routes to whichever solver most decreases the full $L^2$ error, and by Parseval’s identity $|e_h^{(t)}|_2^2=\sum_m|\hat u_m^{(t)}-\hat u_m|^2$, reductions in the objective correspond to reducing energy across all modes rather than giving preferential treatment to a subset. Additionally, in \Cref{tab:modecomparison2}, we observe that all mode-wise errors/AUCs reduce with the inclusion of more solvers.

\begin{table}[H]
    \centering
      \caption{Final error and AUC of squared $L^2$ error for Mode 1, 5, and 10 (lower is better) for Poisson and ConvDiff. Values are mean ($\pm$ standard error (s.e.)) over 128 test instances; both mean and s.e. of final errors are reported in $\times 10^{-3}$.  If the residual norm is reported as $< 10^{-3}$, the corresponding raw value is $< 10^{-6}$.}
    \resizebox{\textwidth}{!}{
    \begin{tabular}{ccccccccccccc}
\hline \hline
& \multicolumn{6}{c}{Poisson} & \multicolumn{6}{c}{ConvDiff} \\
& \multicolumn{2}{c}{Mode 1} & \multicolumn{2}{c}{Mode 5} & \multicolumn{2}{c}{Mode 10} & \multicolumn{2}{c}{Mode 1} & \multicolumn{2}{c}{Mode 5} & \multicolumn{2}{c}{Mode 10}\\
\hline 
 Method & Final Error &  AUC &  Final Error &  AUC &  Final Error &  AUC & Final Error &  AUC &  Final Error &  AUC &  Final Error &  AUC \\
 \hline
 \multicolumn{13}{c}{Jacobi-related solvers}\\
 \hline
Jacobi Only&0.135 (0.295) & 3.231 (7.056) & $<10^{-3}$ & 0.003 (0.010) & $<10^{-3}$ & 0.000 (0.001) & 0.079 (0.171) & 1.084 (2.369) & $<10^{-3}$ & 0.003 (0.011) & $<10^{-3}$ & 0.000 (0.001) \\
HINTS-Jacobi & 5.730 (0.240) & 2.734 (3.371) & 0.028 (0.002) & 0.004 (0.012) & 0.006 (0.000) & 0.000 (0.001) & 0.994 (0.220) & 0.697 (1.091) & 0.061 (0.001) & 0.005 (0.012) & 0.022 (0.001) & 0.001 (0.001) \\
Learned Greedy-Jacobi & 0.032 (0.085) & 0.841 (2.081) & $<10^{-3}$ & 0.009 (0.028) & $<10^{-3}$ & 0.001 (0.003) & 0.021 (0.055) & 0.324 (0.780) & $<10^{-3}$ & 0.009 (0.025) & $<10^{-3}$ & 0.001 (0.002) \\
 \hline
 \multicolumn{13}{c}{GS-related solvers}\\
 \hline
GS Only & 0.002 (0.004) & 1.695 (3.702) & $<10^{-3}$ & 0.003 (0.011) & $<10^{-3}$ & 0.000 (0.001) & $<10^{-3}$ & 0.205 (0.452) & $<10^{-3}$ & 0.002 (0.006) & $<10^{-3}$ & $<10^{-3}$ \\
HINTS-GS & 5.377 (0.000) & 2.113 (2.498) & 0.027 (0.000) & 0.005 (0.012) & 0.007 (0.000) & 0.000 (0.001) & 0.661 (0.000) & 0.271 (0.432) & 0.065 (0.000) & 0.003 (0.006) & 0.019 (0.000) & $<10^{-3}$ \\
Learned Greedy-GS & 0.000 (0.001) & 0.465 (1.125) & $<10^{-3}$ & 0.009 (0.029) & $<10^{-3}$ & 0.001 (0.002) & $<10^{-3}$ & 0.075 (0.151) & $<10^{-3}$ & 0.005 (0.017) & $<10^{-3}$ & 0.001 (0.001) \\
 \hline
 \multicolumn{13}{c}{SymGS-related solvers}\\
 \hline
SymGS Only& $<10^{-3}$ & 0.883 (1.919) & $<10^{-3}$ & 0.002 (0.006) & $<10^{-3}$ & 0.000 (0.001) & $<10^{-3}$ & 0.180 (0.400) & $<10^{-3}$ & 0.001 (0.005) & $<10^{-3}$ & 0.000 (0.001) \\
HINTS-SymGS & 5.085 (0.000) & 1.432 (1.700) & 0.025 (0.000) & 0.002 (0.007) & 0.007 (0.000) & 0.000 (0.001) & 0.672 (0.000) & 0.240 (0.381) & 0.070 (0.000) & 0.002 (0.005) & 0.019 (0.000) & 0.000 (0.001) \\
Learned Greedy-SymGS & $<10^{-3}$ & 0.334 (0.943) & $<10^{-3}$ & 0.007 (0.025) & $<10^{-3}$ & 0.001 (0.002) & $<10^{-3}$ & 0.079 (0.193) & $<10^{-3}$ & 0.004 (0.012) & $<10^{-3}$ & 0.001 (0.001) \\
 \hline
 \multicolumn{13}{c}{Jacobi (0.67)-related solvers}\\
 \hline
Jacobi (0.67) Only& 1.063 (2.321) & 4.771 (10.421) & $<10^{-3}$ & 0.005 (0.017) & $<10^{-3}$ & $<10^{-3}$ & 0.428 (0.934) & 1.535 (3.352) & $<10^{-3}$ & 0.004 (0.016) & $<10^{-3}$ & $<10^{-3}$ \\
HINTS-Jacobi (0.67) & 5.804 (0.001) & 2.983 (3.666) & 0.029 (0.000) & 0.007 (0.019) & 0.006 (0.000) & $<10^{-3}$ & 1.080 (0.000) & 0.776 (1.158) & 0.065 (0.000) & 0.008 (0.017) & 0.021 (0.000) & $<10^{-3}$ \\
Learned Greedy-Jacobi (0.67) & 0.390 (1.195) & 1.804 (5.363) & $<10^{-3}$ & 0.017 (0.058) & $<10^{-3}$ & 0.001 (0.002) & 0.096 (0.208) & 0.435 (0.910) & $<10^{-3}$ & 0.013 (0.044) & $<10^{-3}$ & 0.001 (0.004) \\
 \hline
 \multicolumn{13}{c}{SOR (1.5) -related solvers}\\
 \hline
SOR (1.5) Only& $<10^{-3}$ & 0.727 (1.594) & $<10^{-3}$ & 0.005 (0.019) & $<10^{-3}$ & 0.001 (0.002) & $<10^{-3}$ & 0.037 (0.082) & $<10^{-3}$ & 0.002 (0.007) & $<10^{-3}$ & 0.000 (0.001) \\
HINTS-SOR (1.5) & 4.826 (0.000) & 1.239 (1.484) & 0.027 (0.000) & 0.007 (0.020) & 0.008 (0.000) & 0.001 (0.002) & 0.645 (0.000) & 0.080 (0.082) & 0.068 (0.000) & 0.005 (0.007) & 0.019 (0.000) & 0.001 (0.001) \\
Learned Greedy-SOR (1.5) & $<10^{-3}$ & 0.302 (0.757) & $<10^{-3}$ & 0.010 (0.032) & $<10^{-3}$ & 0.001 (0.003) & $<10^{-3}$ & 0.049 (0.119) & $<10^{-3}$ & 0.007 (0.030) & $<10^{-3}$ & 0.001 (0.002) \\
\bottomrule
\hline \hline
\end{tabular}
    }

    \label{tab:modecomparison}
\end{table}

\begin{table}[H]
    \centering
    \caption{Final error and AUC of squared $L^2$ error of Mode 1, 5, and 10 for varying numbers of solvers. Values are mean ($\pm$ standard error (s.e.)) over 128 test instances; both mean and s.e. are reported in $\times 10^{-3}$.}
    \resizebox{\textwidth}{!}{%
    \begin{tabular}{lcccccccccccc}
        \hline
         Equation & \multicolumn{6}{c}{Poisson} & \multicolumn{6}{c}{ConvDiff} \\
        \hline
         $\mathcal{W}$/ Mode & Mode 1 Error & Mode 1 AUC & Mode 5 Error & Mode 5 AUC &Mode 10 Error &Mode 10 AUC & Mode 1 Error & Mode 1 AUC & Mode 5 Error & Mode 5 AUC &Mode 10 Error &Mode 10 AUC\\
         \hline
         $\{\text{Jacobi, GS}\}$ & 0.000 (0.001) & 0.267 (0.571) & $<10^{-3}$ & 0.006 (0.023) & $<10^{-3}$ & 0.001 (0.003) & $<10^{-3}$ & 0.055 (0.112) & $<10^{-3}$ & 0.002 (0.008) & $<10^{-3}$ & 0.000 (0.002) \\
        $\{\text{Jacobi, GS, SymGS}\}$ & 0.000 (0.001) & 0.185 (0.413) & $<10^{-3}$ & 0.003 (0.009) & $<10^{-3}$ & 0.000 (0.002) & $<10^{-3}$ & 0.054 (0.114) & $<10^{-3}$ & 0.002 (0.007) & $<10^{-3}$ & 0.000 (0.001) \\
    $\{\text{Jacobi, GS, SymGS, Jacobi (0.67)}\}$ & 0.000 (0.001) & 0.172 (0.333) & $<10^{-3}$ & 0.003 (0.010) & $<10^{-3}$ & 0.000 (0.002) & $<10^{-3}$ & 0.056 (0.119) & $<10^{-3}$ & 0.002 (0.007) & $<10^{-3}$ & 0.000 (0.001) \\
    $\{\text{Jacobi, GS, SymGS, Jacobi (0.67), SOR (1.5)}\}$ & 0.000 (0.001) & 0.210 (0.399) & $<10^{-3}$ & 0.006 (0.023) & $<10^{-3}$ & 0.001 (0.004) & $<10^{-3}$ & 0.026 (0.061) & $<10^{-3}$ & 0.002 (0.005) & $<10^{-3}$ & 0.000 (0.002) \\
         \hline
    \end{tabular}
    }
    \label{tab:modecomparison2}
\end{table}

\clearpage
\subsection{Ensemble Solver Decisions and DeepONet Usage} \label{sec:solverdecisions}

\begin{figure}[H]
    \centering
    \includegraphics[width=1.0\linewidth]{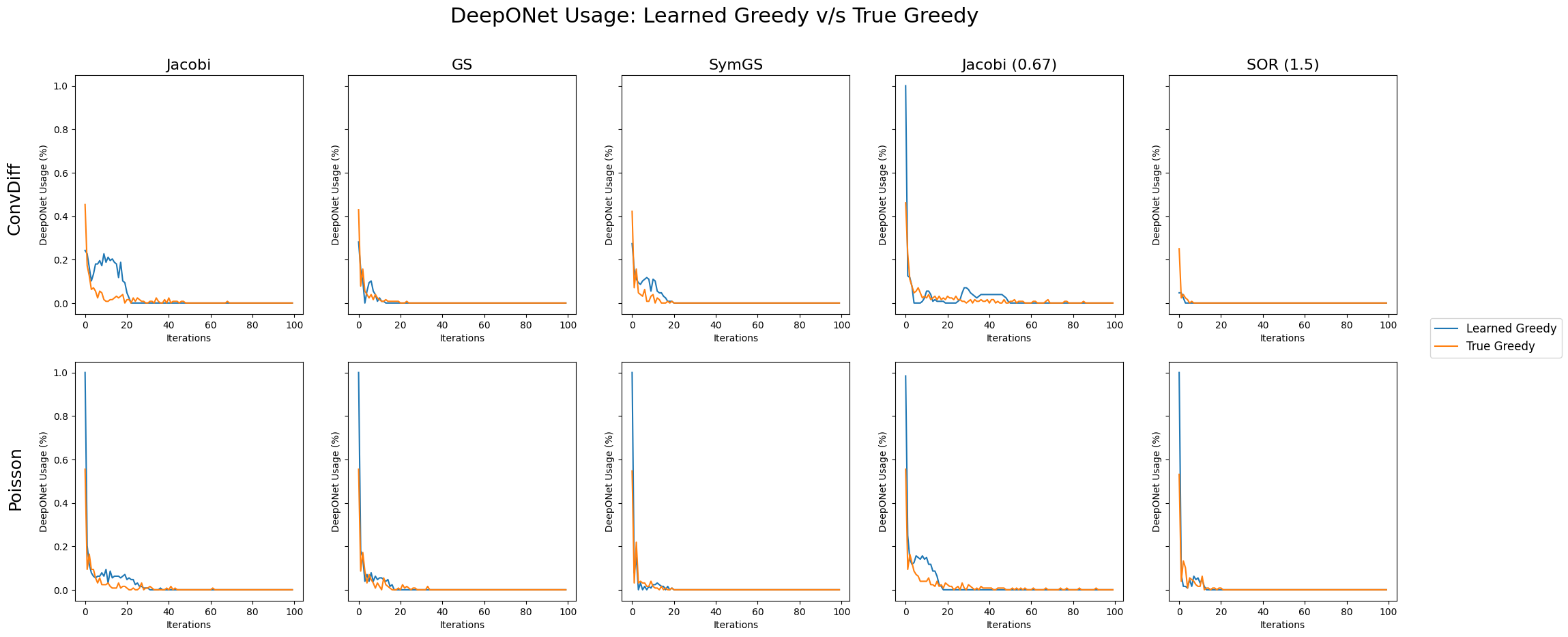}
    \caption{DeepONet usage over iterations for learned and true greedy routing across solver families. Columns correspond to different numerical solvers, and rows correspond to the two equations. The learned router closely follows the true greedy policy, capturing its nonuniform usage of DeepONet across iterations. For readability, we display the first 100 iterations (out of 300 total)}
    \label{fig:deeponet_usage}
\end{figure}
\begin{figure}[H]
    \centering
    \includegraphics[width=1.0\linewidth]{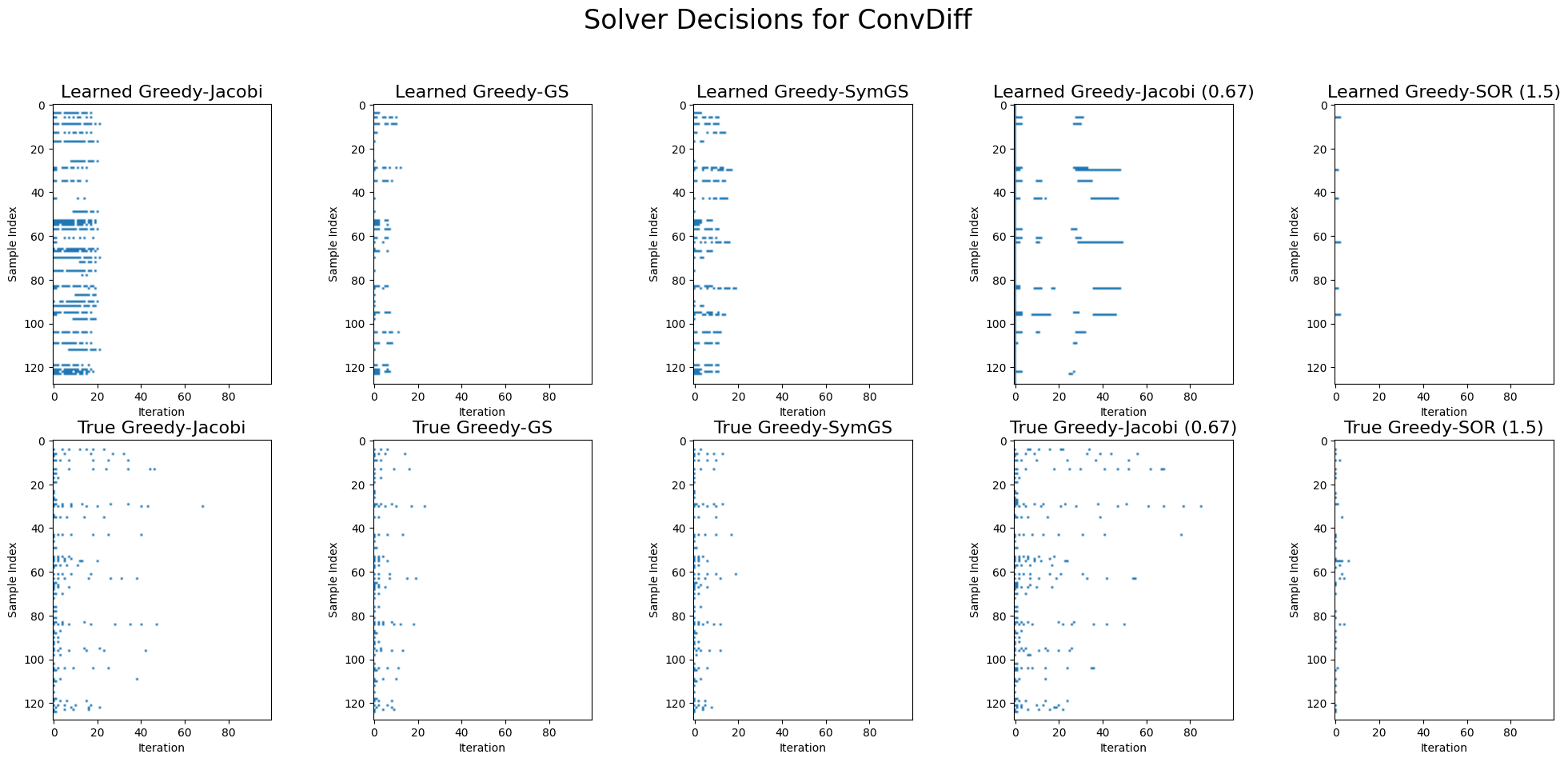}
    \caption{Solver selection patterns for the convection–diffusion equation across different solver pairings and test samples. Each column corresponds to a solver family, while rows show decisions from the learned greedy router (top) and the true greedy policy (bottom). Colored entries indicate iterations where the neural operator is selected, with blank regions corresponding to numerical solver updates. The learned router exhibits instance-dependent routing behavior.}
    \label{fig:solver_decisions_convdiff}
\end{figure}

\begin{figure}[H]
    \centering
    \includegraphics[width=1.0\linewidth]{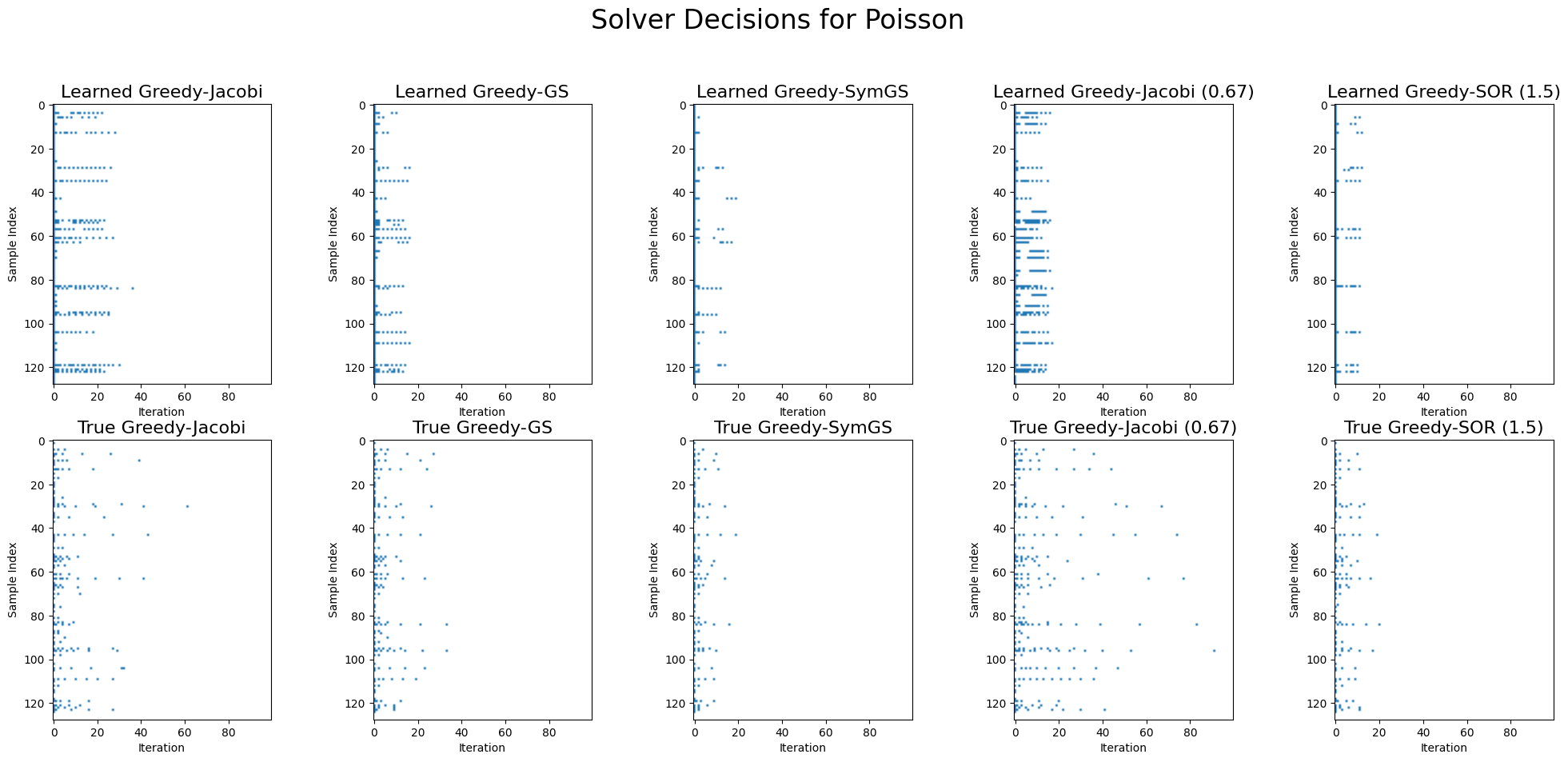}
    \caption{Solver selection patterns for the Poisson equation across different solver pairings and test samples. Each column corresponds to a solver family, while rows show decisions from the learned greedy router (top) and the true greedy policy (bottom). Colored entries indicate iterations where the neural operator is selected, with blank regions corresponding to numerical solver updates. The learned router exhibits instance-dependent routing behavior.}
    \label{fig:solver_decisions_poisson}
\end{figure}
\Cref{fig:deeponet_usage} highlights the proportion of DeepONet calls across iterations, while \Cref{fig:solver_decisions_convdiff,fig:solver_decisions_poisson} visualize the routing decisions of the learned and true greedy policies across all test samples and solver families.

Two key observations emerge. First, the routing strategy is highly sample-dependent, indicating that a fixed routing schedule across all instances would be suboptimal and can slow convergence. Second, the learned routing policy closely tracks the behavior of the true greedy policy, providing empirical support for the approximation guarantees discussed in \Cref{sec:approxgreedy}. In particular, the learned policy can be viewed as a smoothed approximation of the true greedy strategy, capturing its overall structure while exhibiting minor deviations due to learning and finite-sample effects.

\begin{table}[H]
\centering
\caption{
Average component selection frequencies for the Poisson equation.
Rows indicate the router access set. Entries are mean ($\pm$ s.e.) over 128 test instances.
Here $\mathrm{NO}$ denotes the neural operator; ``-'' indicates that the component is not available to the router.
}
\label{tab:multsolverpoisson}

\resizebox{\textwidth}{!}{
\begin{tabular}{lllllll}
\toprule
$\mathcal{W}$ & Jacobi & GS & SSOR & Jacobi (0.67) & SOR (1.5) & DeepONet \\
\midrule
$\{\text{Jacobi, GS}\}$ & 0.405 (0.143) & 0.588 (0.134) & - & - & - & 0.007 (0.011) \\
$\{\text{Jacobi, GS, SSOR}\}$ & 0.000 (0.000) & 0.000 (0.000) & 0.996 (0.005) & - & - & 0.004 (0.005) \\
$\{\text{Jacobi, GS, SSOR, Jacobi (0.67)}\}$ & 0.227 (0.112) & 0.212 (0.118) & 0.434 (0.212) & 0.121 (0.102) & - & 0.005 (0.007) \\
$\{\text{Jacobi, GS, SSOR, Jacobi (0.67), SOR (1.5)}\}$& 0.009 (0.033) & 0.570 (0.057) & 0.260 (0.040) & 0.000 (0.000) & 0.156 (0.013) & 0.005 (0.005) \\
\bottomrule
\end{tabular}
}
\end{table}

\begin{table}[H]
\centering
\caption{
Average component selection frequencies for the Convection-Diffusion equation.
Rows indicate the router access set. Entries are mean ($\pm$ s.e.) over 128 test instances.
Here $\mathrm{NO}$ denotes the neural operator; ``-'' indicates that the component is not available to the router.
}
\label{tab:multsolverconvdiff}

\resizebox{\textwidth}{!}{
\begin{tabular}{lllllll}
\toprule
$\mathcal{W}$ & Jacobi & GS & SSOR & Jacobi (0.67) & SOR (1.5) & DeepONet \\
\midrule
$\{\text{Jacobi, GS}\}$ & 0.417 (0.307) & 0.579 (0.307) & - & - & - & 0.004 (0.006) \\
$\{\text{Jacobi, GS, SSOR}\}$ & 0.109 (0.109) & 0.396 (0.110) & 0.493 (0.175) & - & - & 0.003 (0.005) \\
$\{\text{Jacobi, GS, SSOR, Jacobi (0.67)}\}$ &0.168 (0.096) & 0.190 (0.166) & 0.410 (0.227) & 0.229 (0.146) & - & 0.003 (0.004) \\
$\{\text{Jacobi, GS, SSOR, Jacobi (0.67), SOR (1.5)}\}$& 0.307 (0.150) & 0.061 (0.131) & 0.088 (0.170) & 0.202 (0.089) & 0.342 (0.066) & 0.001 (0.001) \\
\bottomrule
\end{tabular}
}
\end{table}

\Cref{tab:multsolverconvdiff,tab:multsolverpoisson} report solver selection frequencies across different solver ensembles. Several interesting observations emerge from these results. 

First, the router typically relies on a small subset of solvers, with larger ensembles often leading to the dominance of a single solver (e.g., SOR(1.5) in convection–diffusion). This explains the diminishing returns observed in the multi-solver setting, as the solver ensemble grows. 

Additionally, the neural operator is selected only rarely across all configurations. This reflects the strength of the numerical solvers rather than a limitation of our method. Our framework is designed to exploit this structure by relying primarily on numerical updates and invoking the learned component only when beneficial, demonstrating that selective use of learned corrections can yield performance gains without frequent deployment like in HINTS or similar fixed schedules.

Finally, the non-zero standard deviations indicate that solver usage varies across test instances, suggesting that no single global routing strategy emerges.

\subsection{Results on a finer grid} \label{sec:finer_grids}

We have replicated the Paired Solver and Solver Ensemble experiments for a grid of $63 \times 63$. 

In \Cref{tab:errorcomparison_fine}, we notice that, similar to the case with $31 \times 31$ grid, the learned greedy router closely matches the performance of the true greedy oracle and outperforms its single solver and HINTS baselines with great statistical significance. 

Similar to \Cref{tab:errorcomparison2}, we notice, in \Cref{tab:errorcomparison_fine_2}, the learned greedy router seems to outperform the various pairwise configurations mostly with great statistical significance. 

\begin{table}[H]
    \centering
    \caption{Final error and AUC of squared $L^2$ error (lower is better). Values are mean ($\pm$ standard error (s.e.)) over 128 test instances for the grid $63 \times 63$; both mean and s.e. of error are reported in $\times 10^{-3}$. If a standard error is not shown, it is $< 10^{-3}$ in the reported units (raw $< 10^{-6}$). Statistical significance is assessed via paired $t$-tests comparing each baseline (single-solver only and HINTS) against the learned greedy router, using a one-sided alternative that the learned router achieves lower error/AUC. Reported $p$-values correspond to these tests.}
    \resizebox{\textwidth}{!}{%
    \begin{tabular}{ccccccccc}
        \hline \hline
         Equation &  \multicolumn{4}{c}{Poisson} & \multicolumn{4}{c}{ConvDiff} \\
         \hline 
         Methods & $\|e^{(T)}_h\| \times 10^{3}$ & $\|e^{(T)}_h\|$ $p$-value & AUC & AUC $p$-value & $\|e^{(T)}_h\| \times 10^{3}$ & $\|e^{(T)}_h\|$ $p$-value & AUC & AUC $p$-value\\
         \hline
         \multicolumn{9}{c}{Jacobi-related solvers}\\
         \hline
Jacobi Only & 4.483 (12.776) & $<10^{-3}$ & 2.092 (5.889) & $<10^{-3}$ & 1.470 (4.164) & 0.001 & 0.757 (2.116) & 0.001 \\
HINTS-Jacobi & 0.823 (0.163) & 0.043 & 0.571 (1.205) & 0.021 & 1.512 (0.433) & $<10^{-3}$ & 0.565 (0.723) & $<10^{-3}$ \\
Learned Greedy-Jacobi & 0.506 (2.201) & - & 0.429 (1.293) & - & 0.654 (1.603) & - & 0.393 (0.969) & - \\
True-Greedy-Jacobi & 0.199 (0.277) & - & 0.289 (0.661) & - & 0.317 (0.593) & - & 0.277 (0.635) & - \\
\hline
         \multicolumn{9}{c}{GS-related solvers}\\
         \hline
GS only & 2.094 (6.009) & $<10^{-3}$ & 1.519 (4.296) & $<10^{-3}$ & 0.357 (1.025) & 0.004 & 0.430 (1.211) & 0.003 \\
HINTS-GS & 0.631 (0.003) & $<10^{-3}$ & 0.444 (0.939) & $<10^{-3}$ & 0.996 (0.004) & $<10^{-3}$ & 0.369 (0.397) & 0.03 \\
Learned Greedy-GS & 0.138 (0.581) & - & 0.228 (0.623) & - & 0.219 (0.697) & - & 0.282 (0.820) & - \\
True-Greedy-GS & 0.072 (0.064) & - & 0.180 (0.403) & - & 0.053 (0.065) & - & 0.132 (0.279) & - \\
\hline
         \multicolumn{9}{c}{SymGS-related solvers}\\
         \hline
SymGS only & 1.318 (3.541) & $<10^{-3}$ & 1.106 (2.978) & $<10^{-3}$ & 0.267 (0.725) & 0.012 & 0.362 (0.974) & 0.014 \\
HINTS-SymGS & 0.566 (0.002) & $<10^{-3}$ & 0.375 (0.782) & $<10^{-3}$ & 1.007 (0.000) & $<10^{-3}$ & 0.367 (0.329) & $<10^{-3}$ \\
Learned Greedy-SymGS & 0.162 (0.155) & - & 0.169 (0.310) & - & 0.141 (0.355) & - & 0.239 (0.599) & - \\
True-Greedy-SymGS & 0.084 (0.094) & - & 0.130 (0.266) & - & 0.146 (0.285) & - & 0.130 (0.241) & - \\
\hline
         \multicolumn{9}{c}{Jacobi (0.67)-related solvers}\\
         \hline
Jacobi (0.67) only & 5.833 (16.544) & $<10^{-3}$ & 2.365 (6.639) & $<10^{-3}$ & 1.985 (5.595) & 0.005 & 0.876 (2.443) & 0.004 \\
HINTS-Jacobi (0.67) & 1.177 (0.445) & $<10^{-3}$ & 0.687 (1.414) & $<10^{-3}$ & 1.970 (0.989) & $<10^{-3}$ & 0.652 (0.883) & 0.001 \\
Learned Greedy-Jacobi (0.67) & 0.693 (1.959) & - & 0.489 (1.217) & - & 1.060 (2.422) & - & 0.517 (1.211) & - \\
True-Greedy-Jacobi (0.67) & 0.396 (0.741) & - & 0.406 (0.979) & - & 0.528 (1.133) & - & 0.352 (0.831) & - \\
\hline
         \multicolumn{9}{c}{SOR (1.5)-related solvers}\\
         \hline
SOR (1.5) only & 0.115 (0.331) & $<10^{-3}$ & 0.652 (1.850) & $<10^{-3}$ & $<10^{-3}$ & 0.056 & 0.078 (0.220) & 0.002 \\
HINTS-SOR (1.5) & 0.480 (0.000) & $<10^{-3}$ & 0.304 (0.680) & $<10^{-3}$ & 0.608 (0.000) & $<10^{-3}$ & 0.147 (0.165) & $<10^{-3}$ \\
Learned Greedy-SOR (1.5) & 0.017 (0.084) & - & 0.148 (0.516) & - & $<10^{-3}$ & - & 0.039 (0.090) & - \\
True-Greedy-SOR (1.5) & 0.004 (0.003) & - & 0.081 (0.188) & - & $<10^{-3}$ & - & 0.030 (0.067) & - \\
\hline \hline 
\end{tabular}
}
    \label{tab:errorcomparison_fine}
\end{table}

\begin{table}[H]
    \centering
    \caption{Final error and AUC of squared $L^2$ error (lower is better). Values are mean ($\pm$ standard error (s.e.)) over 128 test instances for the grid $63 \times 63$; both mean and s.e. of error are reported in $\times 10^{-3}$. If a standard error is not shown, it is $< 10^{-3}$ in the reported units (raw $< 10^{-6}$). Statistical significance is assessed via paired $t$-tests comparing each solver ensemble against the corresponding pairwise learned router (e.g., $\mathrm{NO}+\text{Jacobi}$, $\mathrm{NO}+\text{GS}$), using a one-sided alternative that the ensemble achieves lower error/AUC. Reported $p$-values correspond to these tests.}
    \resizebox{\textwidth}{!}{
    \begin{tabular}{lrrrrrrr}
\hline \hline
 $\mathcal{W}$ & $\|e^{(T)}_h\| \times 10^{3}$ & AUC & Jacobi $p$-value & GS $p$-value & SymGS $p$-value & Jacobi (0.67) $p$-value & SOR (1.5) $p$-value  \\
\midrule
\multicolumn{8}{c}{Poisson} \\
\midrule
Learned Greedy $\{\text{Jacobi, GS}\}$ & 0.077 (0.064) & 0.188 (0.410) & 0.002 & 0.08 & - & - & -\\
True Greedy $\{\text{Jacobi, GS}\}$& 0.072 (0.064) & 0.180 (0.403) & - & - & - & - & -\\
Learned Greedy $\{\text{Jacobi, GS, SymGS}\}$ & 0.112 (0.086) & 0.151 (0.275) & 0.002 & 0.015 & $<10^{-3}$ & - & - \\
True Greedy $\{\text{Jacobi, GS, SymGS}\}$ &  0.054 (0.059) & 0.126 (0.264) & - & - & - & - & - \\
Learned Greedy $\{\text{Jacobi, GS, SymGS, Jacobi (0.67)}\}$ & 0.109 (0.103) & 0.152 (0.282) & 0.002 & 0.013 & $<10^{-3}$ & $<10^{-3}$ & - \\
True Greedy $\{\text{Jacobi, GS, SymGS, Jacobi (0.67)}\}$ & 0.054 (0.059) & 0.126 (0.264) & - & - & - & - & - \\
Learned Greedy $\{\text{Jacobi, GS, SymGS, Jacobi (0.67), SOR (1.5)}\}$ & 0.014 (0.006) & 0.092 (0.205) & $<10^{-3}$ & $<10^{-3}$ & $<10^{-3}$ & $<10^{-3}$ & 0.054 \\
True Greedy $\{\text{Jacobi, GS, SymGS, Jacobi (0.67), SOR (1.5)}\}$ & 0.004 (0.003) & 0.082 (0.189) & - & - & - & - & - \\
\midrule 
\multicolumn{8}{c}{ConvDiff} \\
\midrule
Learned Greedy $\{\text{Jacobi, GS}\}$ & 0.136 (0.447) & 0.170 (0.394) & $<10^{-3}$ & 0.015 & - & - & - \\
True Greedy$\{\text{Jacobi, GS}\}$ & 0.053 (0.065) & 0.132 (0.279) & - & - & - & - & - \\
Learned Greedy $\{\text{Jacobi, GS, SymGS}\}$ & 0.063 (0.098) & 0.142 (0.335) & $<10^{-3}$ & 0.006 & 0.001 & - & - \\
True Greedy $\{\text{Jacobi, GS, SymGS}\}$ & 0.025 (0.037) & 0.105 (0.220) & - & - & - & - & - \\
Learned Greedy $\{\text{Jacobi, GS, SymGS, Jacobi (0.67)}\}$ & 0.114 (0.216) & 0.181 (0.413) & $<10^{-3}$ & 0.024 & 0.011 & $<10^{-3}$ & - \\
True Greedy $\{\text{Jacobi, GS, SymGS, Jacobi (0.67)}\}$ & 0.025 (0.037) & 0.105 (0.220) & - & - & - & - & - \\
Learned Greedy $\{\text{Jacobi, GS, SymGS, Jacobi (0.67), SOR (1.5)}\}$ & $<10^{-3}$ & 0.037 (0.087) & $<10^{-3}$ & $<10^{-3}$ & $<10^{-3}$ & $<10^{-3}$ & 0.069 \\
True Greedy $\{\text{Jacobi, GS, SymGS, Jacobi (0.67), SOR (1.5)}\}$ & $<10^{-3}$ & 0.030 (0.067) & - & - & - & - & - \\
\hline \hline
\end{tabular}
}
    \label{tab:errorcomparison_fine_2}
\end{table}

\subsection{Results with Dirichlet Boundary Conditions} \label{sec:dirichlet}

We have replicated the Paired Solver and Solver Ensemble experiments for solutions with zero dirichlet boundary conditions.

In \Cref{tab:errorcomparison_dirichlet}, we notice that, similar to PDEs with periodic boundary conditions, the learned greedy router closely matches the performance of the true greedy oracle and outperforms its single solver and HINTS baselines with great statistical significance. Despite the learned router matching the performance of the true greedy oracle, we observe that the learned router doesn't seem to provide much improvement over single-solver baseline in the case of SOR (1.5)-related solvers. This may be a result of the deeponet correction always being weaker relative to the SOR (1.5) correction in terms of error reduction. 

Similar to \Cref{tab:errorcomparison2}, we notice, in \Cref{tab:errorcomparison_fine_2}, the learned greedy router seems to outperform the various pairwise configurations mostly with great statistical significance. 

\begin{table}[H]
    \centering
    \caption{Final error and AUC of squared $L^2$ error (lower is better). Values are mean ($\pm$ standard error (s.e.)) over 128 test instances with dirichlet boundaries; both mean and s.e. of error are reported in $\times 10^{-3}$. If a standard error is not shown, it is $< 10^{-3}$ in the reported units (raw $< 10^{-6}$). Statistical significance is assessed via paired $t$-tests comparing each baseline (single-solver only and HINTS) against the learned greedy router, using a one-sided alternative that the learned router achieves lower error/AUC. Reported $p$-values correspond to these tests.}
    \resizebox{\textwidth}{!}{%
    \begin{tabular}{ccccccccc}
        \hline \hline
         Equation &  \multicolumn{4}{c}{Poisson} & \multicolumn{4}{c}{ConvDiff} \\
         \hline 
         Methods & $\|e^{(T)}_h\| \times 10^{3}$ & $\|e^{(T)}_h\|$ $p$-value & AUC & AUC $p$-value & $\|e^{(T)}_h\| \times 10^{3}$ & $\|e^{(T)}_h\|$ $p$-value & AUC & AUC $p$-value\\
         \hline
         \multicolumn{9}{c}{Jacobi-related solvers}\\
         \hline
Jacobi Only & 0.704 (2.151) & 0.001 & 0.703 (1.929) & $<10^{-3}$ & 0.003 (0.006) & 0.057 & 0.169 (0.430) & $<10^{-3}$ \\
HINTS-Jacobi & 2.674 (7.315) & $<10^{-3}$ & 0.869 (2.396) & $<10^{-3}$ & 0.474 (0.001) & $<10^{-3}$ & 0.199 (0.254) & $<10^{-3}$ \\
Learned Greedy-Jacobi & 0.243 (0.851) & - & 0.236 (0.719) & - & 0.003 (0.006) & - & 0.076 (0.162) & - \\
True-Greedy-Jacobi & 0.042 (0.047) & - & 0.099 (0.252) & - & 0.003 (0.007) & - & 0.042 (0.089) & - \\
\hline
         \multicolumn{9}{c}{GS-related solvers}\\
         \hline
GS only & 0.138 (0.431) & 0.001 & 0.408 (1.139) & $<10^{-3}$ & 0.003 (0.006) & 0.028 & 0.057 (0.146) & $<10^{-3}$ \\
HINTS-GS & 2.668 (7.316) & $<10^{-3}$ & 0.841 (2.360) & $<10^{-3}$ & 0.470 (0.001) & $<10^{-3}$ & 0.115 (0.137) & $<10^{-3}$ \\
Learned Greedy-GS & 0.032 (0.158) & - & 0.115 (0.433) & - & 0.003 (0.006) & - & 0.024 (0.048) & - \\
True-Greedy-GS & 0.010 (0.015) & - & 0.063 (0.166) & - & 0.003 (0.006) & - & 0.017 (0.036) & - \\
\hline
         \multicolumn{9}{c}{SymGS-related solvers}\\
         \hline
SymGS only & 2.924 (8.865) & 0.01 & 0.926 (2.740) & 0.006 & 1.331 (3.687) & 0.004 & 0.413 (1.133) & 0.004 \\
HINTS-SymGS & 2.748 (7.375) & 0.008 & 0.856 (2.427) & 0.008 & 0.917 (1.015) & 0.006 & 0.332 (0.697) & 0.003 \\
Learned Greedy-SymGS & 1.333 (5.715) & - & 0.416 (1.722) & - & 0.628 (1.468) & - & 0.196 (0.449) & - \\
True-Greedy-SymGS & 0.342 (0.906) & - & 0.147 (0.375) & - & 0.184 (0.323) & - & 0.067 (0.118) & - \\
\hline
         \multicolumn{9}{c}{Jacobi (0.67)-related solvers}\\
         \hline
Jacobi (0.67) only & 1.265 (3.768) & $<10^{-3}$ & 0.910 (2.468) & $<10^{-3}$ & 0.008 (0.021) & 0.003 & 0.253 (0.642) & $<10^{-3}$ \\
HINTS-Jacobi (0.67) & 2.686 (7.311) & $<10^{-3}$ & 0.886 (2.412) & $<10^{-3}$ & 0.487 (0.002) & $<10^{-3}$ & 0.220 (0.279) & $<10^{-3}$ \\
Learned Greedy-Jacobi (0.67) & 0.134 (0.275) & - & 0.215 (0.660) & - & 0.004 (0.010) & - & 0.127 (0.393) & - \\
True-Greedy-Jacobi (0.67) & 0.096 (0.220) & - & 0.145 (0.408) & - & 0.004 (0.013) & - & 0.061 (0.131) & - \\
\hline
         \multicolumn{9}{c}{SOR (1.5)-related solvers}\\
         \hline
SOR (1.5) only & 0.004 (0.011) & 0.012 & 0.145 (0.412) & $<10^{-3}$ & 0.003 (0.006) & 1.0 & 0.006 (0.014) & 1.0 \\
HINTS-SOR (1.5) & 2.650 (7.321) & $<10^{-3}$ & 0.791 (2.295) & $<10^{-3}$ & 0.468 (0.001) & $<10^{-3}$ & 0.014 (0.014) & $<10^{-3}$ \\
Learned Greedy-SOR (1.5) & 0.004 (0.011) & - & 0.039 (0.091) & - & 0.003 (0.006) & - & 0.006 (0.014) & - \\
True-Greedy-SOR (1.5) & 0.004 (0.011) & - & 0.028 (0.070) & - & 0.003 (0.006) & - & 0.006 (0.014) & - \\
\hline \hline 
\end{tabular}
}
    \label{tab:errorcomparison_dirichlet}
\end{table}

\section{Limitations}

A primary limitation of our approach is the higher on-time computational cost requires to learn the greedy routing strategy. In contrast, the simpler baselines such as HINTS incur no training overhead. While this cost is amortized during inference time and leads to improved solution quality, it may be prohibitive in settings with limited computational resources or when rapid deployment is required. 

Additionally, the effectiveness of the learned router depends on the neural operator exhibiting heterogeneous performance across samples. When the neural operators produces predictions of relatively uniform quality, the benefit of adaptive routing diminishes and it is more practical to set a fixed schedule based on the quality of the neural operator. In practice, achieving such heterogeneity can be sensitive to training choices, and neural operators themselves can be finicky to train.

\section{LLM Usage}

LLMs, specifically ChatGPT and Gemini, supported the writing process in an iterative manner. We drafted paragraphs and asked the models for feedback on grammar and clarity. We then incorporated selected suggestions into the writing and repeated this process until we were satisfied with the writing. 

The code developed for the experiments was written by the authors with the help of occasional code completions. The central components (e.g., the hybrid solver implementation and the greedy-router training pipelines) were implemented exclusively by the authors.

All substantive intellectual contributions, which include ideas, theorems, and analyses, are our own. LLMs were occasionally used to verify the correctness of proofs, but all proof strategies originated from the authors and relevant literature.

\end{appendices}

\end{document}